\DeclarePairedDelimiter{\floor}{\lfloor}{\rfloor}
\newtheorem{thm}{Theorem}[section]
\newtheorem{cor}[thm]{Corollary}
\newtheorem{lemma}[thm]{Lemma}
\newtheorem{prop}[thm]{Proposition}
\theoremstyle{definition}
\newtheorem{defn}[thm]{Definition}
\theoremstyle{remark}
\newtheorem{remark}[thm]{Remark}
\newtheorem{example}[thm]{Example}
\renewcommand{\a}{\alpha}
\renewcommand{\b}{\beta}
\renewcommand{\l}{\lambda}
\newcommand{\s}{\sigma}
\newcommand{\Pb}[1]{\left\{\cdot\,,#1\right\}}
\newcommand{\pb}[1]{\left\{#1\right\}}
\newcommand{\lb}[1]{\[#1\]}
\renewcommand{\(}{\left(}
\renewcommand{\)}{\right)}
\renewcommand{\[}{\left[}
\renewcommand{\]}{\right]}
\newcommand{\set}[1]{\left\{#1\right\}}
\newcommand{\cL}{\mathcal L}
\newcommand{\cM}{\mathcal M}
\newcommand{\cN}{\mathcal N}
\newcommand{\cX}{\mathcal X}
\newcommand{\X}{ \cX}
\newcommand{\bbQ}{\mathbb Q}
\newcommand{\bbC}{\mathbb C}
\newcommand{\bbF}{\mathbb F}
\newcommand{\bbN}{\mathbb N}
\newcommand{\bbR}{\mathbb R}
\newcommand{\Sh}{\Delta}
\newcommand{\ul}[1]{\underline{#1}}
\newcommand{\ol}[1]{\overline{#1}}
\newcommand{\Aff}{\mathop{\rm{Aff}}\nolimits}
\newcommand{\Aut}{\mathop{\rm Aut}\nolimits}
\newcommand{\GL}{\mathop{{\rm GL}}\nolimits}
\newcommand{\GLp}{\mathop{\rm {GL}^+}\nolimits}
\newcommand{\Gr}{\mathop{\mathbf {Gr}}\nolimits}
\newcommand{\Grz}{\mathop{\mathbf {Gr_0}}\nolimits}
\renewcommand{\Im}{\mathop{\rm Im}\nolimits}
\newcommand{\LV}{\mathop{\rm LV}\nolimits}
\newcommand{\bLV}{\mathop{\mathbf {LV}}\nolimits}
\newcommand{\bLVz}{\mathop{\mathbf {LV_0}}\nolimits}
\newcommand{\leqs}{\leqslant}
\newcommand{\pp}[2]{\frac{\partial#1}{\partial#2}}
\newcommand{\we}{\wedge}
\newcommand{\Rk}{\hbox{Rk\,}}
\newcommand{\diff}{{\rm d }}
\newcommand{\Id}{\mathrm{Id}}
\newcommand{\KM}{\mathrm{KM}}
\newcommand{\B}[2]{\mathrm{B}(#1,#2)}
\newcommand{\wght}{\varpi}
\newcommand{\Span}{\mathop{\rm Span}}
\newif\ifprivate
 \numberwithin{equation}{section}
\def\???{\ifprivate {\bf {???}} \marginpar{{\Huge {\bf ?}}}\else \fi}
\numberwithin{equation}{section}
\begin{document}

\nocite{*}

\title[Skew-symmetric Lotka-Volterra systems]{Morphisms and automorphisms of skew-symmetric Lotka-Volterra systems}

\author[C. A. Evripidou]{C. A. Evripidou}
\address{Charalampos Evripidou, Department of Mathematics and Statistics, University of Cyprus, P.O.~Box 20537,
  1678 Nicosia, Cyprus,\newline
   Department of Mathematics and Statistics,
		  La Trobe University, Melbourne,
		  Victoria 3086, Australia} \email{evripidou.charalambos@ucy.ac.cy}

\author[P. Kassotakis]{P. Kassotakis}
\address{Pavlos Kassotakis, Department of Mathematics and Statistics, University of Cyprus, P.O.~Box 20537, 1678
  Nicosia, Cyprus} \email{pavlos1978@gmail.com}

\author[P. Vanhaecke]{P. Vanhaecke}
\address{Pol Vanhaecke, Laboratoire de Math\'ematiques et Applications, UMR 7348 CNRS, Universit\'e de Poitiers, 11
  Boulevard Marie et Pierre Curie, Téléport 2 - BP 30179, 86 962 Chasseneuil Futuroscope Cedex,
  France}\email{pol.vanhaecke@math.univ-poitiers.fr}

\date{\today}
\subjclass[2000]{53D17, 37J35}

\keywords{Lotka-Volterra systems, graphs, integrability}

\begin{abstract}
We study the basic relation between skew-symmetric Lotka-Volterra systems and graphs, both at the level of objects
and morphisms, and derive a classification from it of skew-symmetric Lotka-Volterra systems in terms of graphs as
well as in terms of irreducible weighted graphs. We also obtain a description of their automorphism groups and of
the relations which exist between these groups. The central notion introduced and used is that of decloning of
graphs and of Lotka-Volterra systems. We also give a functorial interpretation of the results which we obtain.
\end{abstract}

\dedicatory{Dedicated to the Memory of our Friend and Teacher \\Pantelis A. Damianou, 1953 -- 2020}

\maketitle

\setcounter{tocdepth}{2}

\tableofcontents

\section{Introduction}
In its most general form, a Lotka-Volterra system in dimension $n$ is a dynamical system, described by the
following system of differential equations:
\begin{equation}\label{eq:LV_gen_intro}
  \dot x_i = \varepsilon_i x_i + \sum_{j=1}^n a_{i,j} x_i x_j, \ \ i=1,2, \dots , n \; .
\end{equation}
The coefficients $\varepsilon_i$ and $a_{i,j}$ are real or complex numbers, depending on whether one uses
$\bbF=\bbR$ or $\bbF=\bbC$ as the base field. These equations first appeared in the study of population dynamics
\cite{Lotka,Volterra}. The Lotka-Volterra systems which we study in this paper are \emph{skew-symmetric}, which
means that $\varepsilon_i=0$ and $a_{i,j}=-a_{j,i}$ for $1\leqslant i,j\leqslant n$. These conditions do not only
mean that we can interpret the coefficients $a_{i,j}$ as the entries of a skew-symmetric matrix $A$, but they also
imply that \eqref{eq:LV_gen_intro} is a Hamiltonian system: as Hamiltonian structure we can take the quadratic
Poisson structure $\pi_A$ on $\bbF^n$, defined in terms of the natural coordinates $x_1,\dots,x_n$ by the Poisson
brackets $\pb{x_i,x_j}_A:=a_{i,j}x_ix_j$ for $1\leqslant i,j\leqslant n$, and as Hamiltonian one can take the sum
of all coordinates, $H:=x_1+x_2+\cdots+x_n$. It is well-known that $\pi_A$ is indeed a Poisson structure (see for
example \cite[Section 8.2]{PLV}), and it is clear that the Hamiltonian vector field $\X_{H}:=\Pb{H}_A$ is given
by~\eqref{eq:LV_gen_intro} (with all $\varepsilon_i$ equal to zero). In what follows we refer simply to
skew-symmetric Lotka-Volterra systems as LV systems and we always take the Poisson structure $\pi_A$ as its
Hamiltonian structure.

It is natural to think of the skew-symmetric matrix $A=(a_{i,j})$ as being the adjacency matrix of a graph, having
an arc from the vertex $i$ to the vertex $j$ with value $a_{i,j}$ if $a_{i,j}\neq0$ and $i<j$, and having no arc
from $i$ to $j$ otherwise (see Definition \ref{def:graph} below for a more intrinsic description of these graphs,
which we will call \emph{skew-symmetric graphs}). Even if this natural relation between graphs and LV systems is
well-known \cite{Bog2,Damianou_graphs}, it has not been studied or exploited in the literature. The aim of the
present paper is to undertake the study of this relation, and of some of its consequences. As we will see,
it leads to a classification of LV systems in terms of the classification of skew-symmetric graphs, and we will use
the correspondence to determine the automorphism group of any LV system.

A first result (Proposition \ref{prp:functor}) states that morphisms between skew-symmetric graphs induce morphisms
between the corresponding LV systems. By the latter, we mean a smooth map which preserves both the Poisson
structure and the Hamiltonian. Moreover, the link between skew-symmetric graphs and LV systems is functorial, hence
isomorphic graphs lead to isomorphic LV systems and graph automorphisms lead to automorphisms of LV systems. By
construction, every such induced morphism is linear and a natural question is whether every linear morphism of LV
systems is induced by a graph morphism. The answer is negative in general, even for automorphisms, but there is an
easy characterization of the graphs for which every automorphism of the corresponding LV system is induced by a
graph automorphism. We call these graphs, and the corresponding LV systems, {irreducible}: an \emph{irreducible}
graph is characterized by the fact that no two of its vertices have identical neighborhoods; said differently, its
adjacency matrix has no equal rows (or columns, what amounts to the same).

For LV systems which are not irreducible, we show that the automorphism group is infinite, which proves that it is
strictly larger than the automorphism group of the underlying graph. In order to have a precise description of the
former automorphism group, we introduce the notion of decloning, which associates to a graph (and hence to its
associated LV system) an irreducible graph (and an irreducible LV system). At the level of the graph this is done
by identifying all vertices which have the same neighborhood; at the level of the LV system, the irreducible LV
system is obtained by a Poisson reduction, which is a morphism of Hamiltonian systems. It leads to the following
description of the automorphism group of the LV system, associated to any skew-symmetric graph $\Gamma$:
\begin{equation*}
  \Aut(\LV(\Gamma))\simeq\prod_{\ul s\in\ul S}\GLp(\wght_\Gamma({\ul s}),\bbF)\rtimes\Aut(\ul\Gamma,\wght_\Gamma)\;.
\end{equation*}
In this formula, $\ul\Gamma$ is the decloning of $\Gamma$ and the integers $\wght_\Gamma({\ul s})$ count for every
vertex of $\ul\Gamma$ how many vertices of $\Gamma$ have been identified by the decloning, in order to obtain
$\ul{s}$; finally, $\GLp(m,\bbF)$ stands for the subgroup of $\GL(m,\bbF)$, fixing some particular non-zero
vector. We also give a formula for the automorphism group of the graph $\Gamma$,
\begin{equation*}
  \Aut(\Gamma)\simeq\prod_{\ul s\in\ul S}\mathcal{S}_{\wght_\Gamma(\ul s)}\rtimes\Aut(\ul\Gamma,\wght_\Gamma)\;.
\end{equation*}%
As we will show, all these automorphism groups fit naturally in a commutative diagram, whose top line contains the
graphs and whose bottom line contains the corresponding LV systems.

We establish various properties of decloning. First, there is functoriality: we show that any surjective graph
morphism induces a morphism between the decloned graphs, and similarly at the level of their LV
systems, leading to the following commutative diagram of categories and functors:
\begin{equation*}
  \begin{tikzcd}[row sep=6ex, column sep=6ex]
    \Gr\arrow{r}{\rho}\arrow[swap]{d} {\LV}&\Grz\arrow{d}{\LV_0}\arrow[shift left=1ex] {l}{\imath_0}\\
    \bLV\arrow[swap]{r}{\sigma}&\bLVz\arrow[swap,shift right=1ex] {l}{\jmath_0}
  \end{tikzcd}
\end{equation*}
In this diagram, $\Gr$ and $\Gr_0$ stand for the categories of graphs, respectively irreducible graphs, with
surjective morphisms, and $\imath_0$ stands for the natural embedding of $\Gr_0$ in $\Gr$. Similarly, $\bLV$ and
$\bLVz$ stand for the categories of LV systems, respectively irreducible LV systems, with surjective morphisms, and
$\jmath_0$ stands for the natural embedding of $\bLVz$ in $\bLV$. The decloning functors are denoted by $\rho$ and
$\sigma$ and can be described as adjoint functors to $\imath_0$ and $\jmath_0$ respectively.

Decloning also plays an important role in the classification of LV systems. With some extra work, it follows from
what precedes that if two irreducible LV systems $\LV(\Gamma)$ and $\LV(\Gamma')$ are linearly isomorphic, then the
graphs $\Gamma$ and $\Gamma'$ are isomorphic; By decloning and thanks to a normal form that we give for linear
morphisms onto irreducible LV systems, we show that the statement remains true for arbitrary LV systems, not
necessarily irreducible ones. Moreover, we may replace in the statement ``linearly isomorphic'' by ``smoothly
isomorphic'', since we show that when two LV systems are smoothly isomorphic, then they are linearly isomorphic. In
conclusion, two LV systems are smoothly isomorphic if and only if the underlying graphs are
isomorphic. In that sense, the correspondence between graphs and LV systems is a bijective one.

Many LV systems have been shown to be integrable (i.e., Liouville integrable or superintegrable), often by using
Lax equations \cite{Bog1,Bog2,PPPP,RCD,suris_book,KKQTV}. We show that an LV system $\LV(\Gamma)$ is integrable if
and only if the decloned system $\LV(\ul\Gamma)$ is integrable. Moreover, we give a construction which provides a
(regular) Lax equation for $\LV(\Gamma)$ from a (regular) Lax equation for $\LV(\ul\Gamma)$, if any. This leads in
the particular case of the much studied Bogoyavlenskij systems $\B nk$, for which a particularly elegant Lax pair
is known \cite{Bog2}, to a Lax pair for any cloned Bogoyavlenskij system; for the latter systems, we will provide
also an alternative Lax pair.

We also give a population dynamics interpretation of cloning and decloning, showing that these operations are very
natural.

The structure of the paper is as follows. In Section \ref{sec:graphs} we introduce the basic graph terminology
which we will use, we define graph decloning and establish its functorial properties. We use it to determine the
group of graph automorphisms of a graph in terms of the graph automorphisms of its decloned graph. After recalling
the basic definitions of LV systems, associated to skew-symmetric graphs, we introduce and study in Section
\ref{sec:LV} the decloning of LV systems. Decloning of LV systems and the correspondence between graphs and LV
systems are studied from the functorial point of view, leading to a description of the automorphism groups of LV
systems and the classification of LV systems in terms of graphs. At the end of the section we give a population
dynamics interpretation of cloning and decloning. Section~\ref{sec:int_lax} is devoted to integrability and to the
construction of Lax equations for cloned LV systems, with special emphasis to the case of cloned Bogoyavlenskij
systems.

Throughout the paper, $\bbF=\bbR$ or $\bbF=\bbC$.
\section{Skew-symmetric graphs and their (auto-)morphisms}\label{sec:graphs}

In this section, we introduce in the case of graphs the basic construction of decloning which we will use to study
Lotka-Volterra systems and establish its properties. We also show how the automorphism groups of a graph and of its
decloned graph are related.

\subsection{Skew-symmetric and weighted graphs}

We  consider  two types of graphs, which we introduce first.
\begin{defn}\label{def:graph}
  A \emph{skew-symmetric graph} $\Gamma=(S,A)$ is a pair consisting of a finite set $S$, and a skew-symmetric map
  $A:S\times S\to\bbF$. A \emph{weighted graph} $(\Gamma,\wght)$ is a skew-symmetric graph $\Gamma=(S,A)$,
  equipped with a function $\wght:S\to\bbN^*:=\bbN\setminus\set0$.
\end{defn}

The finite set $S$ in the definition of a graph $\Gamma$ or weighted graph $(\Gamma,\wght)$ is called its
\emph{vertex~set}. The cardinality of~$S$ is denoted by $\vert\Gamma\vert$ and is called the \emph{order} of
$\Gamma$.  When $S=\set{1,2,\dots,n}$, the map $A$ is naturally thought of as a (skew-symmetric $n\times n$)
matrix; in general, we may also view $A$ as a matrix whose rows and columns are indexed by $S$. We therefore write
$a_{s,t}$ for $A(s,t)$, where $s,t\in S$ and we call $A$ the \emph{adjacency matrix} of~$\Gamma$. One may think of
$a_{s,t}$ as being the value of an arc from the vertex $s$ to the vertex~$t$. We call~$\wght$ the \emph{weight
  vector} of $\Gamma$ and we denote by $\vert\wght\vert:=\sum_{s\in{S}}\wght(s)$ the \emph{total weight} of $\wght$
and denote by $\wght_0$ the weight vector for which $\wght_0(s)=1$ for all $s\in S$.

The notion of a skew-symmetric graph, introduced above, is the skew-symmetric analog of the notion of a
non-oriented (valued) graph, since the latter can be defined by replacing in Definition \ref{def:graph}
\emph{skew-symmetric} by \emph{symmetric}; in particular, one naturally associates with any graph a skew-symmetric
graph by skew-symmetrizing its adjacency matrix. All results in this section are also valid for non-oriented
graphs, but in Sections \ref{sec:LV} and \ref{sec:int_lax} only skew-symmetric graphs and weighted graphs, as
defined above, will be relevant, hence we only consider skew-symmetric graphs.

It is often useful to represent a (possibly weighted) skew-symmetric graph by a picture, where the following
conventions turn out to be convenient: each vertex $s\in S$ is represented by a small circle containing $s$ as a
label, and for each pair of vertices $s,t$, a single arrow is drawn between them when $a_{s,t}\neq0$; we do not
differentiate between the two ways of doing this, as indicated in Figure \ref{fig:skew}:
\begin{figure}[h]
  \begin{tikzpicture}[->,shorten >=1pt,auto,node distance=1.5cm, thick,main node/.style={circle,draw,font=\bfseries}]
  \node[main node] (1) {$s$};
  \node[main node] (2) [right of=1] {$t$};
  \path
    (1) edge [edge label = $a_{s,t}$] (2);
  \end{tikzpicture}
  \qquad\hbox{or}\qquad
  \begin{tikzpicture}[->,shorten >=1pt,auto,node distance=1.5cm, thick,main node/.style={circle,draw,font=\bfseries}]
  \node[main node] (1) {$s$};
  \node[main node] (2) [right of=1] {$t$};
  \path
    (2) edge [edge label = $a_{t,s}$, swap] (1);
  \end{tikzpicture}
  \caption{When $a_{s,t}\neq0$ we draw one of the above arrows between $s$ and $t$, otherwise we draw no arrow
    between them.\label{fig:skew}}
\end{figure}
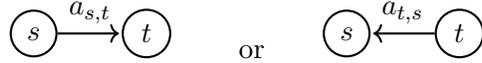

When $a_{s,t}=1$, which is often the case in our examples, we simply put an arrow from $s$ to $t$, omitting the
value $1$. For a weighted graph, we indicate the weight of a vertex as a red label, close to the vertex.

The above notions of graphs lead to the following natural definitions of their morphisms.
\begin{defn}\label{def:graph_morphism}
  Let $\Gamma=(S,A)$ and $\Gamma'=(S',A')$ be two skew-symmetric graphs. A \emph{graph morphism}
  $\Phi:\Gamma\to\Gamma'$ is a map $\Phi:S\to S'$ satisfying $a'_{\Phi(s),\Phi(t)}=a_{s,t}$ for all $s,t\in S$. Let
  $\wght$ and $\wght'$ be weight vectors for $\Gamma$, respectively for $\Gamma'$. A \emph{weighted graph morphism}
  $\Psi:(\Gamma,\wght)\to(\Gamma',\wght')$ is a graph morphism $\Psi:\Gamma\to\Gamma'$, satisfying
  $\wght'({\Psi(s)})\leqs\wght(s)$ for all $s\in S$.
\end{defn}

\goodbreak

The identity map of $\Gamma$ or of $(\Gamma,\wght)$ (i.e., of $S$) is denoted $\Id_{\Gamma}$. It is of course a
(weighted) graph morphism, just like the composition of two (weighted) graph
morphisms. Definition~\ref{def:graph_morphism} leads at once to the definitions of (weighted) graph isomorphisms
and (weighted) graph automorphisms; notice that when a graph morphism is bijective, its inverse is also a graph
morphism, making it into a graph isomorphism; for a weighted graph morphism $\Psi$, which is bijective, one needs
to ask in addition that it preserves the weight vectors, $\wght'({\Psi(s)})=\wght(s)$ for all $s\in S$, for it to
be an isomorphism. When two skew-symmetric graphs $\Gamma$ and $\Gamma'$ are isomorphic, we write
$\Gamma\simeq\Gamma'$ or $\Phi:\Gamma\simeq\Gamma'$, where $\Phi$ is an isomorphism, and the group of graph
automorphisms of $\Gamma$ is denoted by $\Aut(\Gamma)$; the same notations are used for weighted graphs
$(\Gamma,\wght)$. It is clear that $\Aut(\Gamma)$ is a finite group; it is a subgroup of the symmetric group
$\mathcal{S}_{\vert\Gamma\vert}$. For a weighted graph~$(\Gamma,\wght)$ it is clear that $\Aut(\Gamma,\wght)$ is a
subgroup of $\Aut(\Gamma)$.
\begin{remark}
It was shown in \cite{groups_graphs} that any finite group $G$ is the group of automorphisms of a (finite)
simple\footnote{$\Delta$ being simple means that it is non-oriented, without loops, and that all entries
of its adjacency matrix are 0 or 1.}  graph $\Delta=(S,E)$.  It follows from this result that $G$ is also the
group of automorphisms of a skew-symmetric graph~$\Gamma$; to construct $\Gamma$ it suffices to replace in $\Delta$
every edge by a vertex and two incident arrows, as indicated in Figure \ref{fig:groups_graphs}. Then $\Delta$ and
$\Gamma$ have the same automorphism groups, and the result follows.
\begin{figure}[h]
  \begin{tikzpicture}[-,shorten >=1pt,auto,node distance=1.5cm, thick,main node/.style={circle,draw,font=\bfseries}]
  \node[main node] (1) {$s$};
  \node[main node] (2) [right of=1] {$t$};
  \path
    (1) edge (2);
  \end{tikzpicture}
  \qquad\hbox{becomes}\qquad
  \begin{tikzpicture}[->,shorten >=1pt,auto,node distance=1.5cm, thick,main node/.style={circle,draw,font=\bfseries}]
  \node[main node] (1) {$s$};
  \node[main node] (2) [right of=1] {$u$};
  \node[main node] (3) [right of=2] {$t$};
  \path
  (2) edge (1)
  (2) edge (3);
  \end{tikzpicture}
  \caption{Every edge of the non-oriented graph $\Delta$ is replaced by a vertex and two arrows. A skew-symmetric
    graph is obtained, having the same automorphism group as $\Delta$.\label{fig:groups_graphs}}
\end{figure}
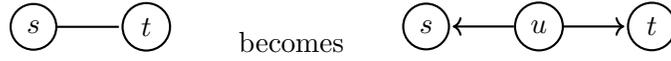

\end{remark}


\subsection{Cloning of weighted graphs}\label{par:cloning}
In order to motivate the definition of graph decloning, given in the next subsection, we first introduce the
operation of cloning. It associates with a weighted graph $(\Gamma,\wght)$, with $\Gamma=(S,A)$, a skew-symmetric
graph $\Gamma^\wght=(S^\wght,A^\wght)$. First, every vertex $s\in S$ gives rise to $\wght(s)$ vertices in
$S^\wght$, which we denote by $s_1,s_2,\dots,s_{\wght(s)}$ and which we call the \emph{clones} of the
vertex~$s$. So the constructed graph $\Gamma^\wght$ has $\vert{S^\wght}\vert=\vert\wght\vert$ vertices. Second, the
entries $a^\wght_{s_i,t_j}$ of the (skew-symmetric) adjacency matrix $A^\wght$ of $\Gamma^\wght$ are defined by
$a^\wght_{s_i,t_j}:=a_{s,t}$, for $s,t\in S$ and $1\leqs i\leqs\wght(s)$, $1\leqs j\leqs\wght(t)$. An example is
given in Figure \ref{fig:cloning} below.

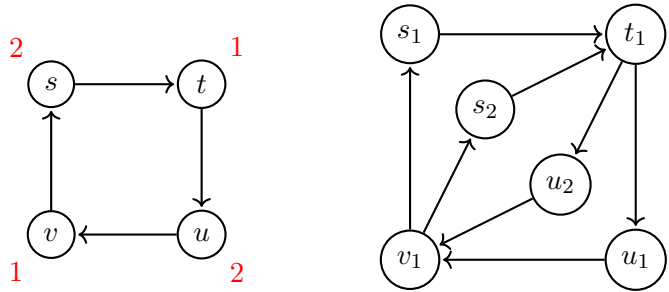
\begin{figure}[h]
  \begin{tikzpicture}[->,shorten >=1pt,auto,node distance=2cm,
                thick,main node/.style={circle,draw,font=\bfseries}]
  \node[main node] (1) [label=above left:$\color{red} 2$] {$s$};
  \node[main node] (2) [label=above right:$\color{red} 1$] [right of=1] {$t$};
  \node[main node] (3) [label=below right:$\color{red} 2$] [below of=2] {$u$};
  \node[main node] (4) [label=below left:$\color{red} 1$] [below of=1] {$v$};
  \path
    (1) edge (2)
    (2) edge (3)
    (3) edge (4)
    (4) edge (1);
\end{tikzpicture}
\qquad\qquad
\begin{tikzpicture}[->,shorten >=1pt,auto,node distance=1cm,
                thick,main node/.style={circle,draw,font=\bfseries}]
  \node[main node] (11) {$s_1$};
  \node (12) [right of=11] {};
  \node (13) [right of=12] {};
  \node[main node] (14) [right of=13] {$t_1$};
  \node (21) [below of=11] {};
  \node[main node] (22) [right of=21] {$s_2$};
  \node (23) [right of=22] {};
  \node (24) [right of=23] {};
  \node (31) [below of=21] {};
  \node (32) [right of=31] {};
  \node[main node] (33) [right of=32] {$u_2$};
  \node (34) [right of=33] {};
  \node[main node] (41) [below of=31] {$v_1$};
  \node (42) [right of=41] {};
  \node (43) [right of=42] {};
  \node[main node] (44) [right of=43] {$u_1$};
  \path
    (11) edge (14)
    (22) edge (14)
    (14) edge (44) edge (33)
    (33) edge (41)
    (44) edge (41)
    (41) edge (11) edge (22);
\end{tikzpicture}
\caption{On the left, a weighted graph $(\Gamma,\wght)$ with vertex set $\set{s,t,u,v}$ whose adjacency matrix
  takes values in $\set{-1,0,1}$. On the right, the associated skew-symmetric graph
  $\Gamma^\wght$.\label{fig:cloning}}
\end{figure}

Suppose that $S=\set{1,\dots,n}$, so that we can view $A$ as a matrix in the usual way.  Let us denote, for
$p,q\in\bbN^*$ by $\mathbf{1}_{p,q}$ the $p\times q$ matrix all of whose entries are equal to $1$. Then $A^\wght$
is obtained from $A$ by replacing each entry $a_{s,t}$ of $A$ by the matrix $a_{s,t}\mathbf1_{\wght(s),\wght(t)}$;
the labeling of the rows and columns of $A^\wght$ is given by $1_1,1_2,\dots,1_{\wght(1)},2_1,\dots,n_{\wght(n)}$,
in that order. It is clear that for any $s\in S$, the consecutive lines with labels $s_1,\,s_2,\dots s_{\wght(s)}$
of $A^\wght$ are identical, and similarly for the corresponding columns. It follows that~$A^\wght$ and $A$ have the
same rank. It also leads to the following definition:

\begin{defn}
  A skew-symmetric graph $\Gamma=(S,A)$ is said to be \emph{irreducible} if $A(s,\cdot)\neq A(t,\cdot)$ whenever
  $s\neq t$. Otherwise it is said to be \emph{reducible}. The same definition applies to weighted graphs.
\end{defn}
Said differently, $(S,A)$ is reducible when the adjacency matrix of $A$ has identical rows (or, equivalently,
identical columns). It is clear that this property is invariant under (weighted) graph isomorphisms. By the above,
if $(\Gamma,\wght)$ is any weighted graph with $\wght\neq\wght_0$, then $\Gamma^\wght$ is reducible.

\begin{remark}
To the cloned graph $\Gamma^\wght=(S^\wght,A^\wght)$ of a weighted graph $(\Gamma,\wght)$, with $\Gamma=(S,A)$, one
can add a weight vector $\wght':S^\wght\to\bbN^*$ and consider the cloned graph $(\Gamma^\wght)^{\wght'}$. It is easy
to see that the resulting graph is isomorphic to the cloned graph of $(\Gamma,\wght'')$, where the weight vector
$\wght'':S\to\bbN^*$ is defined by $\wght''(s):=\sum_{i=1}^{\wght(s)}\wght'(s_i)$ for all $s\in S$. It follows that
any repeated cloning of a weighted graph amounts to a single graph cloning. We will not elaborate on this fact,
because we are mainly interested in decloning, which is an idempotent operation, as we will see shortly.
\end{remark}

\subsection{Decloning of skew-symmetric graphs}\label{par:graph_decloning}

We now describe the inverse procedure, which we call \emph{decloning}. Let $\Gamma=(S,A)$ be a skew-symmetric
graph; we construct its \emph{decloned graph} $\ul{\Gamma}=(\ul S,\ul A)$ and its \emph{weighted decloned graph}
$(\ul{\Gamma},\wght_\Gamma)$. In a cloned graph the clones of a same vertex correspond to identical lines of the
adjacency matrix $A$, a property which can be used to partition a posteriori the vertex set $S$ into parts
containing the clones of each vertex. We therefore define an equivalence relation $\sim$ on $S$ by setting
$s\sim{t}$ if $a_{s,u}=a_{t,u}$ for all $u\in S$; said differently, if the lines (or, equivalently, the columns)
of~$A$ with labels $s$ and $t$ are identical. Let $\ul S:=S/\!\!\sim\,$ and denote by $p:S\to\ul S$ the canonical
projection map; for $s\in S$, we use the convenient notation $\ul s$ for $p(s)$. We get a well-defined
skew-symmetric map $\ul A:\ul S\times\ul S\to\bbF$ by setting $\ul a_{\ul s,\ul t}:=a_{s,t}$, for all $s,t\in
S$. Notice that the latter definition is equivalent to saying that $\ul A$ is the unique map making
$p:(S,A)\to(\ul{S},\ul{A})$ into a graph morphism. We call $p$ the \emph{decloning map} (of $\Gamma$). Finally, the
weight vector $\wght_\Gamma$ is defined for $\ul s\in\ul S$ by $\wght_\Gamma(\ul s):=\#\set{t\in S\mid s\sim t}$;
the decloning map can then also be viewed as a map (graph morphism) $p:\Gamma\to(\ul\Gamma,\wght_\Gamma)$. By
construction, all lines of $\ul A$ are different, so the decloned graph $\ul{\Gamma}$ is irreducible. When $\Gamma$
is irreducible, $\sim$ is trivial, so that $\Gamma\simeq\ul\Gamma$; as a consequence,
$\ul\Gamma\simeq\ul{\ul\Gamma}$ for any skew-symmetric graph $\Gamma$.  Cloning is the inverse procedure of
decloning in the sense that $\ul\Gamma^{\wght_\Gamma}\simeq\Gamma$ for any graph $\Gamma$; in particular, every graph
can be obtained by cloning a (unique) irreducible~graph. Notice that if $\wght$ is any weight vector for $\Gamma$,
then $\ul{\Gamma^\wght}\simeq\Gamma$ if and only if $\Gamma$ is irreducible.

Figure \ref{fig:cloning}, read from right to left, yields an example of decloning.

\subsection{Decloning of graph morphisms}

We show in this subsection that any \emph{surjective} graph morphism can be decloned, i.e., it induces a unique
(weighted) graph morphism between the (weighted) decloned graphs of the original skew-symmetric graphs. We start
with an elementary lemma, relating graph morphisms between two graphs and (weighted) graph morphisms between their
decloned graphs.
\begin{lemma}\label{lma:graph_morphisms}
  Let $\Gamma=(S,A)$ and $\Gamma'=(S',A')$ be skew-symmetric graphs, with decloning maps
  $p:\Gamma\to(\ul\Gamma,\wght_\Gamma)$ and $p':\Gamma'\to(\ul{\Gamma'},\wght_{\Gamma'})$. Suppose that
  $\Phi:\Gamma\to\Gamma'$ and $\Psi:\ul{\Gamma}\to\ul{\Gamma'}$ are two maps, making the following diagram
  commutative:
  \begin{equation}\label{dia:graphs_lma}
    \begin{tikzcd}[row sep=5ex, column sep=7ex]
      \Gamma\arrow{r} {\Phi}\arrow[swap]{d} {p}&\Gamma'\arrow{d}{p'}\\
            (\ul\Gamma,\wght_\Gamma)\arrow[swap]{r}{\Psi}&(\ul{\Gamma'},\wght_{\Gamma'})
    \end{tikzcd}
  \end{equation}
  Then,
  \begin{enumerate}
    \item $\Phi$ is a graph morphism if and only if $\Psi$ is a graph morphism.
    \item If $\Phi$ is surjective, then $\Phi$ is a graph morphism if and only if $\Psi$ is a weighted graph
      morphism. In this case, if $s,t\in S$, then $s\sim t$ if and only if $\Phi(s)\sim\Phi(t)$.
  \end{enumerate}
\end{lemma}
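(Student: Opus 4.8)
The plan is to exploit the defining commutativity $p'\circ\Phi=\Psi\circ p$ together with the fact (established in the previous subsection) that both $p$ and $p'$ are themselves graph morphisms, and that $p$ is surjective by construction. Throughout, I keep in mind the key identities $a_{s,t}=\ul a_{\ul s,\ul t}$ and $a'_{s',t'}=\ul a'_{\ul{s'},\ul{t'}}$, which say that an adjacency value is read off equally well downstairs on the decloned graph.

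\medskip

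For part (1), I would argue both implications by translating the morphism condition through the decloning maps. Suppose $\Psi$ is a graph morphism. For $s,t\in S$ I compute $a'_{\Phi(s),\Phi(t)}=\ul a'_{\,\ul{\Phi(s)},\,\ul{\Phi(t)}}$, and then use $\ul{\Phi(s)}=p'(\Phi(s))=\Psi(p(s))=\Psi(\ul s)$ together with the analogous identity for $t$ to rewrite this as $\ul a'_{\,\Psi(\ul s),\,\Psi(\ul t)}$; since $\Psi$ is a graph morphism this equals $\ul a_{\ul s,\ul t}=a_{s,t}$, so $\Phi$ is a graph morphism. Conversely, if $\Phi$ is a graph morphism, the same chain of equalities read backwards gives $\ul a'_{\,\Psi(\ul s),\,\Psi(\ul t)}=a_{s,t}=\ul a_{\ul s,\ul t}$ for all $s,t$; since $p$ is surjective, every pair $(\ul s,\ul t)$ of vertices of $\ul\Gamma$ arises this way, so $\Psi$ satisfies the graph-morphism condition on all of $\ul S\times\ul S$. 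I expect this surjectivity remark to be the one genuinely necessary hypothesis making the ``only if'' direction work; without it $\Psi$ need only be a morphism on the image of $p$.

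\medskip

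For part (2), assuming $\Phi$ is surjective, by part (1) being a graph morphism for $\Phi$ is already equivalent to $\Psi$ being a graph morphism, so what remains is only to promote ``graph morphism'' to ``weighted graph morphism'' for $\Psi$, i.e.\ to verify the weight inequality $\wght_{\Gamma'}(\Psi(\ul s))\leqs\wght_\Gamma(\ul s)$. The natural route is first to establish the final assertion of (2): for $s,t\in S$, that $s\sim t\iff\Phi(s)\sim\Phi(t)$. The forward direction holds for any graph morphism (if $a_{s,u}=a_{t,u}$ for all $u$, apply $\Phi$ and use the morphism property, noting every $u'\in S'$ is $\Phi(u)$ by surjectivity). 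The reverse direction is where surjectivity of $\Phi$ is essential: if $\Phi(s)\sim\Phi(t)$ then $a'_{\Phi(s),u'}=a'_{\Phi(t),u'}$ for all $u'\in S'$; writing an arbitrary $u'$ as $\Phi(u)$ and using $a'_{\Phi(s),\Phi(u)}=a_{s,u}$ yields $a_{s,u}=a_{t,u}$ for all $u$, hence $s\sim t$.

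\medskip

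Finally I would use this equivalence to count weights. The equivalence $s\sim t\iff\Phi(s)\sim\Phi(t)$ shows that $\Phi$ descends compatibly with the partitions into $\sim$-classes, so $\Psi=\ul\Phi$ is precisely the induced map on equivalence classes; moreover each fibre of $\Psi$ over a class $\ul{s'}$ in the image is a union of $\sim$-classes of $\Gamma$, and the preimage under $\Phi$ of the class of $s'$ consists of full $\sim$-classes mapping onto it. Counting cardinalities, $\wght_{\Gamma'}(\Phi(s))=\#\{t'\in S'\mid \Phi(s)\sim t'\}$, and each such $t'$ is hit by $\Phi$ (surjectivity), with the $\sim$-class of $s$ mapping into the $\sim$-class of $\Phi(s)$ by the forward equivalence; hence the class of $s$ injects into the class of $\Phi(s)$, giving $\wght_\Gamma(\ul s)\leqs\wght_{\Gamma'}(\Psi(\ul s))$ unless there is collapsing. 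I would take care here with the direction of the inequality, recasting the count as: $\Phi$ maps the $\wght_\Gamma(\ul s)$ elements of the class of $s$ into the $\wght_{\Gamma'}(\Psi(\ul s))$ elements of the class of $\Phi(s)$, and this map is onto that class by surjectivity of $\Phi$, forcing $\wght_{\Gamma'}(\Psi(\ul s))\leqs\wght_\Gamma(\ul s)$, which is exactly the weighted-morphism condition. The main obstacle I anticipate is precisely getting this cardinality comparison and its direction right, rather than any conceptual difficulty; the algebraic identities themselves are routine once the commutative square and surjectivity are invoked.
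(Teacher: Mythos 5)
Your proof is correct and follows essentially the same route as the paper: part (1) by transporting adjacency values through the decloning identifications (using surjectivity of $p$), and part (2) by first establishing $s\sim t\iff\Phi(s)\sim\Phi(t)$ and then comparing cardinalities of equivalence classes. One caveat: you have the role of surjectivity in that equivalence backwards --- it is the \emph{forward} implication $s\sim t\Rightarrow\Phi(s)\sim\Phi(t)$ that genuinely requires surjectivity of $\Phi$ (your parenthetical argument does correctly use it), whereas the \emph{reverse} implication $\Phi(s)\sim\Phi(t)\Rightarrow s\sim t$ holds for any graph morphism, since one merely specializes the universally quantified condition to vertices of the form $\Phi(u)$; the paper's counterexample of a non-surjective embedding of a reducible graph into an irreducible one shows the forward implication can indeed fail without surjectivity. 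Finally, your first attempt at the weight count (claiming the class of $s$ \emph{injects} into the class of $\Phi(s)$) would give the inequality in the wrong direction, but your recast --- $\Phi$ restricted to the class of $s$ maps \emph{onto} the class of $\Phi(s)$ by surjectivity and the equivalence, forcing $\wght_{\Gamma'}(\Psi(\ul s))\leqs\wght_\Gamma(\ul s)$ --- is exactly the counting argument the paper uses.
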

\begin{proof}
First, notice that the commutativity of the diagram can be written as $\Psi(\ul s)=\ul{\Phi(s)}$, for all
$s\in{}S$.  The conditions that $\Phi$, respectively $\Psi$, is a graph morphism mean that, for all $s,t\in S$,
\begin{equation*}
  a'_{\Phi(s),\Phi(t)}=a_{s,t}\;,\quad\text{resp.}\quad \ul a'_{\Psi(\ul s),\Psi(\ul t)}=\ul a_{\ul s,\ul t}\;.
\end{equation*}%
By the definition of decloning, $\ul a_{\ul s,\ul t}=a_{s,t}$ and
$\ul{a'}_{\ul{\Phi(s)},\ul{\Phi(t)}}=a'_{{\Phi(s),\Phi(t)}}$. So the equivalence in item (1) (and the inverse
implication of the first equivalence in item (2)) follows from the commutativity of the diagram. In order to prove
the remaining statements in item (2), suppose that $\Phi$ is a surjective graph morphism.  For $s,t\in S$, by
definition, $s\sim t$ iff $a_{s,u}=a_{t,u}$ for all $u\in S$. Since~$\Phi$ is a graph morphism, this is equivalent
to $a'_{\Phi(s),\Phi(u)}=a'_{\Phi(t),\Phi(u)}$ for all \hbox{$u\in S$}. Since $\Phi$ is surjective, this means that
$a'_{\Phi(s),v}=a'_{\Phi(t),v}$ for all $v\in S'$, i.e., $\Phi(s)\sim\Phi(t)$. This shows the second statement in
item (2); notice that the existence of $\Psi$ was not used to prove this equivalence. It follows, using the
commutativity of the diagram and the surjectivity of $\Phi$, that
\begin{align*}
  \wght_{\Gamma'}(\Psi(\ul s))&=\wght_{\Gamma'}(\ul{\Phi(s)})=\#\set{v\in S'\mid\Phi(s)\sim v}\\
    &\leqs\#\set{t\in S\mid\Phi(s)\sim \Phi(t)}=\#\set{t\in S\mid s\sim t}=\wght_\Gamma(\ul s)\;,
\end{align*}%
for all $s\in S$, so that $\Psi$ is a weighted graph morphism, which completes the proof of item (2).
\end{proof}
Notice that in the lemma, given $\Phi$, a map $\Psi$ making the diagram commutative is unique, since $p$ is
surjective.
\begin{prop}\label{prp:graph_decloning}
  Suppose that $\Gamma$ and $\Gamma'$ are two skew-symmetric graphs. As above, denote by $(\ul\Gamma,\wght_\Gamma)$
  and $(\ul\Gamma',\wght_{\Gamma'})$ the weighted decloned graphs of $\Gamma$ and $\Gamma'$ and by $p$ and $p'$ the
  decloning maps.
  \begin{enumerate}
    \item If $\Phi:\Gamma\to\Gamma'$ is a surjective graph morphism, then $\Phi$ induces a unique graph morphism
      $\ul\Phi:\ul\Gamma\to\ul{\Gamma'}$ such that the following diagram of (surjective) graph morphisms is
      commutative:
        \begin{equation}\label{dia:graphs}
          \begin{tikzcd}[row sep=5ex, column sep=7ex]
            \Gamma\arrow{r} {\Phi}\arrow[swap]{d} {p}&\Gamma'\arrow{d}{p'}\\
            (\ul\Gamma,\wght_\Gamma)\arrow[swap]{r}{\ul\Phi}&(\ul{\Gamma'},\wght_{\Gamma'})
          \end{tikzcd}
        \end{equation}
        Moreover, $\ul\Phi:(\ul\Gamma,\wght_\Gamma)\to(\ul{\Gamma'},\wght_{\Gamma'})$ is a weighted graph morphism.
  \item If $\Gamma''$ is another skew-symmetric graph and $\Phi':\Gamma'\to\Gamma''$ is another surjective graph
    morphism, then $\ul{\Phi'\circ\Phi}=\ul{\Phi'}\circ\ul\Phi$; also,
    $\ul{\Id_\Gamma}=\Id_{(\ul\Gamma,\wght_\Gamma)}$.
  \item Suppose that $\Psi:(\ul\Gamma,\wght_\Gamma)\to(\ul{\Gamma'},\wght_{\Gamma'})$ is a weighted graph morphism,
    which is surjective. Then there exists a surjective graph morphism $\Phi:\Gamma\to\Gamma'$ such that
    $\ul\Phi=\Psi$.
  \end{enumerate}
\end{prop}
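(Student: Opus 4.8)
The plan is to dispatch the three items in order, putting essentially all the work into item (3); items (1) and (2) sit on top of Lemma \ref{lma:graph_morphisms} with little effort. For item (1) I would define $\ul\Phi$ by descent, setting $\ul\Phi(\ul s):=\ul{\Phi(s)}$ for $s\in S$; this is well-defined precisely because of the last assertion of Lemma \ref{lma:graph_morphisms}(2): since $\Phi$ is a surjective graph morphism, $s\sim t$ forces $\Phi(s)\sim\Phi(t)$, so $\ul{\Phi(s)}$ depends only on $\ul s$. The square then commutes by construction, $p'\circ\Phi=\ul\Phi\circ p$, whence $\ul\Phi$ is a graph morphism by Lemma \ref{lma:graph_morphisms}(1), unique because $p$ is surjective; that it is moreover a \emph{weighted} graph morphism is immediate from Lemma \ref{lma:graph_morphisms}(2) applied with $\Psi=\ul\Phi$, and its surjectivity follows from that of $p'\circ\Phi=\ul\Phi\circ p$. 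For item (2) I would avoid any computation and argue by this uniqueness: both $\ul{\Phi'\circ\Phi}$ and $\ul{\Phi'}\circ\ul\Phi$ are graph morphisms completing the outer rectangle formed by the two commuting squares, so they coincide, and $\Id_{(\ul\Gamma,\wght_\Gamma)}$ completes the square attached to $\Id_\Gamma$, forcing $\ul{\Id_\Gamma}=\Id_{(\ul\Gamma,\wght_\Gamma)}$.

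The core of the proposition is item (3), which I read as a lifting problem and would solve fiber by fiber over the decloning maps. Writing $S=\bigsqcup_{\ul s\in\ul S}p^{-1}(\ul s)$ and $S'=\bigsqcup_{\ul{s'}\in\ul{S'}}p'^{-1}(\ul{s'})$, one has $\#p^{-1}(\ul s)=\wght_\Gamma(\ul s)$ and $\#p'^{-1}(\ul{s'})=\wght_{\Gamma'}(\ul{s'})\geqs1$. The weighted-morphism hypothesis says exactly that $\wght_{\Gamma'}(\Psi(\ul s))\leqs\wght_\Gamma(\ul s)$, so for each $\ul s$ there is a surjection $\phi_{\ul s}:p^{-1}(\ul s)\twoheadrightarrow p'^{-1}(\Psi(\ul s))$ of finite sets; gluing these over $\ul S$ defines $\Phi:S\to S'$ with $p'\circ\Phi=\Psi\circ p$, i.e.\ $\ul{\Phi(s)}=\Psi(\ul s)$. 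Then $\Phi$ is a graph morphism: from $a_{s,t}=\ul a_{\ul s,\ul t}$ and $a'_{\Phi(s),\Phi(t)}=\ul a'_{\ul{\Phi(s)},\ul{\Phi(t)}}=\ul a'_{\Psi(\ul s),\Psi(\ul t)}$, the required identity $a'_{\Phi(s),\Phi(t)}=a_{s,t}$ reduces to $\Psi$ being a graph morphism. It is surjective because $\Psi$ is and each $\phi_{\ul s}$ is onto its fiber, so every fiber of $p'$ lies in $\Im\Phi$. Finally, applying item (1) to the surjective graph morphism $\Phi$ yields a unique $\ul\Phi$ with $p'\circ\Phi=\ul\Phi\circ p$; since $\Psi$ enjoys the same relation and $p$ is surjective, $\ul\Phi=\Psi$.

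The main obstacle is the simultaneous bookkeeping in item (3): $\Phi$ must at once be a graph morphism, be surjective, and lift $\Psi$. The fiberwise construction decouples these, and the decisive observation is that the inequality defining a weighted graph morphism is exactly the cardinality condition $\wght_\Gamma(\ul s)\geqs\wght_{\Gamma'}(\Psi(\ul s))$ that guarantees a surjection between the corresponding fibers. Once $\Phi$ is forced to carry the fiber over $\ul s$ into the fiber over $\Psi(\ul s)$, the graph-morphism check collapses onto the hypothesis on $\Psi$, because all vertices in a single fiber share one row of the adjacency matrix; both the surjectivity of $\Phi$ and the identity $\ul\Phi=\Psi$ then come for free.
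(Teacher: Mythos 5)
Your proposal is correct and follows essentially the same route as the paper: item (1) by descent via the equivalence-preservation statement of Lemma \ref{lma:graph_morphisms}, item (2) by the uniqueness in item (1), and item (3) by mapping each fiber $p^{-1}(\ul s)$ surjectively onto $p'^{-1}(\Psi(\ul s))$, which the weight inequality makes possible, then invoking uniqueness to get $\ul\Phi=\Psi$. The only cosmetic difference is that you verify the graph-morphism property of the lift directly from the adjacency identities, where the paper simply cites item (1) of Lemma \ref{lma:graph_morphisms}.
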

\begin{proof}
We write $\Gamma=(S,A)$, as before.  We first prove item (1). As we pointed out in the proof of Lemma
\ref{lma:graph_morphisms}, the fact that $\Phi$ is a surjective graph morphism implies that for any $s,t\in S$,
$s\sim t$ iff $\Phi(s)\sim\Phi(t).$ It follows that~$\ul\Phi$ is well-defined by $\ul\Phi(\ul s):=\ul{\Phi(s)}$ and
$\ul\Phi$ is the unique map making (\ref{dia:graphs}) commutative. In view Lemma \ref{lma:graph_morphisms}, this
shows item (1). The uniqueness of~$\ul\Phi$ in item (1) implies at once item (2). Suppose now that
$\Psi:(\ul\Gamma,\wght_\Gamma)\to(\ul{\Gamma'},\wght_{\Gamma'})$ is a surjective weighted graph morphism. Since
$\wght_{\Gamma'}(\Psi(\ul s))\leqslant\wght_\Gamma(\ul s)$ for all $\ul s\in \ul S,$ one can map surjectively the
set of clones of any $\ul s\in \ul S$ to the clones of $\Psi(\ul s)$; doing this for all $\ul s\in \ul S$ one gets
a surjective map $\Phi$ such that $p'\circ\Phi=\Psi\circ p$. By item (1) in Lemma \ref{lma:graph_morphisms}, $\Phi$
is a (surjective) graph morphism; by uniqueness,~$\ul\Phi=\Psi$.
\end{proof}
The unique (weighted) graph morphism $\ul\Phi$, induced by $\Phi$ is called its \emph{(weighted) decloned graph
morphism}.  Property (2) in the proposition says that decloning is a functor. We will come back to this in
Section \ref{par:cat_graphs}.

In general, Proposition \ref{prp:graph_decloning} is false for a graph morphism $\Phi$ which is not surjective. A
counterexample is given in Figure \ref{fig:counter_ex}.
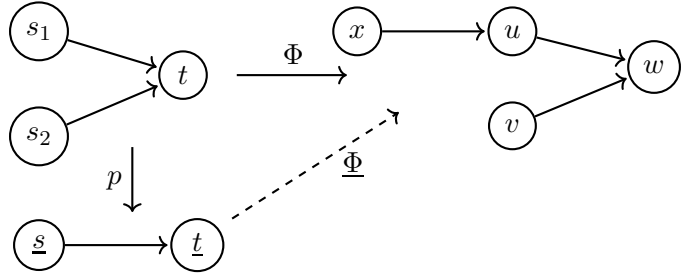
\begin{figure}[h]
  \begin{tikzpicture}[->,shorten >=1pt,auto,thick,main node/.style={circle,draw,font=\bfseries}]
  \node[main node] (s1) {$s_1$};
  \node[main node] (s2) [below=.6cm of s1] {$s_2$};
  \node[main node] (t) [above right=.3cm and 1.4cm of s2] {$t$};
  \node (r1) [right=.1cm of t] {};
  \node (r2) [right=.7 of s2] {};

  \node[main node] (x) [right=3.5cm of s1] {$x$};
  \node[main node] (u) [right=1.4 of x] {$u$};
  \node[main node] (v) [below=.6cm of u] {$v$};
  \node[main node] (w) [above right=.3cm and 1.4cm of v] {$w$};
  \node (r3) [right=1.5cm of r1] {};
  \node (r4) [below right=.1cm and .4cm of r3] {};

  \node[main node] (s) [below=2.1cm of s1] {$\ul s$};
  \node[main node] (t0) [right=1.4cm of s] {$\ul t$};
  \node (r5) [below=.9cm of r2] {};
  \node (r6) [right=.9cm of r5] {};

  \path
  (s1) edge (t)
  (s2) edge (t)

  (x) edge (u)
  (u) edge (w)
  (v) edge (w)

  (s) edge (t0)

  (r1) edge node[above] {$\Phi$} (r3)
  (r2) edge node[left] {$p$} (r5)
  (r6) edge [dashed] node[right,xshift=.2cm] {$\underline{\Phi}$} (r4);
\end{tikzpicture}

\caption{The reducible graph on the left is embedded in the irreducible graph on the right. Any map $\ul\Phi$
  making the diagram commutative must send $\ul s$ both to $u$ and $v$, which is impossible. The lack of
  surjectivity of $\Phi$ is responsible for this.\label{fig:counter_ex}}
\end{figure}

\subsection{Graph isomorphisms and automorphisms}

We use Proposition~\ref{prp:graph_decloning} to give a description of the graph automorphism group $\Aut(\Gamma)$
of a skew-symmetric graph in terms of the automorphism group $\Aut(\ul\Gamma,\wght_\Gamma)$ of its weighted
decloned graph $(\ul\Gamma,\wght_\Gamma)$. Recall that $\Gamma$ is always supposed to be finite, so these groups
are finite groups.  We first consider graph isomorphisms.
\begin{prop}\label{prp:graph_autom_decloning}
  Let $\Gamma$ and $\Gamma'$ be two skew-symmetric graphs.
  \begin{enumerate}
    \item Suppose that $\Phi:\Gamma\to\Gamma'$ is a surjective graph morphism, with weighted decloned graph
      morphism $\ul\Phi:(\ul\Gamma,\wght_\Gamma)\to(\ul{\Gamma'},\wght_{\Gamma'})$. Then $\ul\Phi$ is a weighted graph
      isomorphism if and only if $\Phi$ is a graph isomorphism.
    \item $\Gamma\simeq\Gamma'$ if and only if $(\ul\Gamma,\wght_\Gamma)\simeq(\ul{\Gamma'},\wght_{\Gamma'})$.
  \end{enumerate}
\end{prop}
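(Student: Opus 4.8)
The plan is to prove Proposition \ref{prp:graph_autom_decloning} in two stages, deriving item (2) as a consequence of item (1) together with the machinery already developed. For item (1), I would argue both implications using the commutative diagram \eqref{dia:graphs} and the structural facts established in Lemma \ref{lma:graph_morphisms} and Proposition \ref{prp:graph_decloning}. First suppose $\Phi:\Gamma\to\Gamma'$ is a graph isomorphism; then $\Phi$ is in particular a surjective graph morphism, so $\ul\Phi$ exists and is a weighted graph morphism by Proposition \ref{prp:graph_decloning}(1). Since $\Phi$ is bijective and $p,p'$ are surjective, commutativity of \eqref{dia:graphs} forces $\ul\Phi$ to be surjective; for injectivity, I would use the last statement of Lemma \ref{lma:graph_morphisms}(2), namely that $s\sim t$ if and only if $\Phi(s)\sim\Phi(t)$, to show that $\ul\Phi(\ul s)=\ul\Phi(\ul t)$ implies $\ul s=\ul t$. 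So $\ul\Phi$ is a bijective weighted graph morphism, and it remains to check that it preserves the weight vectors exactly (not merely $\leqs$); this follows because $\Phi$ being a bijection makes the inequality in the weight computation of Lemma \ref{lma:graph_morphisms}(2) an equality, so $\wght_{\Gamma'}(\ul\Phi(\ul s))=\wght_\Gamma(\ul s)$ for all $\ul s$, making $\ul\Phi$ a weighted graph isomorphism.

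For the converse in item (1), I would assume $\ul\Phi$ is a weighted graph isomorphism and deduce that $\Phi$ is bijective (it is already a graph morphism by hypothesis, and a bijective graph morphism is automatically an isomorphism, as noted after Definition \ref{def:graph_morphism}). The key counting argument is this: $\Phi$ is surjective by hypothesis, so I only need injectivity, and for that a cardinality comparison suffices. Because $\ul\Phi$ preserves weights exactly, the fibre over each clone-class $\Psi(\ul s)=\ul\Phi(\ul s)$ has the same size as the clone-class of $\ul s$; summing over all vertices of $\ul\Gamma$, the total weight $\vert\wght_\Gamma\vert=\vert S\vert$ equals $\vert\wght_{\Gamma'}\vert=\vert S'\vert$, so $\Phi$ is a surjection between finite sets of equal cardinality, hence a bijection. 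I would spell out that the exact weight-preservation of $\ul\Phi$, combined with $s\sim t\iff\Phi(s)\sim\Phi(t)$, pins down the fibre sizes of $\Phi$ and rules out any collapsing.

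Item (2) then follows by combining the two directions. For the forward direction, if $\Gamma\simeq\Gamma'$ via an isomorphism $\Phi$, then $\Phi$ is a surjective graph morphism and item (1) gives a weighted graph isomorphism $\ul\Phi:(\ul\Gamma,\wght_\Gamma)\simeq(\ul{\Gamma'},\wght_{\Gamma'})$. For the reverse direction, if $\Psi:(\ul\Gamma,\wght_\Gamma)\simeq(\ul{\Gamma'},\wght_{\Gamma'})$ is a weighted graph isomorphism, then $\Psi$ is in particular a surjective weighted graph morphism, so by Proposition \ref{prp:graph_decloning}(3) there is a surjective graph morphism $\Phi:\Gamma\to\Gamma'$ with $\ul\Phi=\Psi$; since $\ul\Phi=\Psi$ is a weighted graph isomorphism, item (1) gives that $\Phi$ is a graph isomorphism, whence $\Gamma\simeq\Gamma'$.

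The step I expect to be the main obstacle is the exact weight-preservation in both directions of item (1), since the definition of weighted graph morphism only demands the inequality $\wght'(\Psi(\ul s))\leqs\wght(\ul s)$. In the forward direction I must upgrade this inequality to equality by exploiting bijectivity of $\Phi$, and in the converse direction I must use the \emph{isomorphism} hypothesis on $\ul\Phi$ (which by the convention after Definition \ref{def:graph_morphism} does include weight-preservation) to recover bijectivity of $\Phi$ via the cardinality count. Keeping careful track of when the weight condition is an equality versus an inequality, and ensuring surjectivity is genuinely used wherever the clone-class equivalence $s\sim t\iff\Phi(s)\sim\Phi(t)$ is invoked, will be the delicate bookkeeping of the argument.
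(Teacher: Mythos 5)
Your proposal is correct and follows essentially the same route as the paper: both directions hinge on Lemma \ref{lma:graph_morphisms}, the lifting result of Proposition \ref{prp:graph_decloning}(3), and the same cardinality count $\vert\Gamma\vert=\vert\wght_\Gamma\vert=\vert\wght_{\Gamma'}\vert=\vert\Gamma'\vert$ to upgrade the surjection $\Phi$ to a bijection. The only cosmetic differences are that the paper proves the forward implication of item (1) by functoriality (Proposition \ref{prp:graph_decloning}(2) applied to $\Phi$ and $\Phi^{-1}$) where you verify bijectivity and weight-equality of $\ul\Phi$ directly, and that you conclude the reverse implication of item (2) by citing item (1) where the paper repeats the count; both variants are sound.
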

\begin{proof}
It follows from item (2) in Proposition \ref{prp:graph_decloning} that if $\Phi$ is a graph isomorphism, then
$\ul\Phi:\ul\Gamma\to\ul{\Gamma'}$ is a graph isomorphism and
$\ul\Phi:(\ul\Gamma,\wght_\Gamma)\to(\ul{\Gamma'},\wght_{\Gamma'})$ is a weighted graph isomorphism. Conversely,
suppose that the induced map $\ul\Phi:(\ul\Gamma,\wght_\Gamma)\to(\ul{\Gamma'},\wght_{\Gamma'})$ is a weighted
graph isomorphism. Then, for all $\ul s\in\ul S$, $\wght_{\Gamma'}({\ul\Phi(\ul s)})=\wght_\Gamma(\ul s)$, so that
$\vert\Gamma\vert=\vert\wght_\Gamma\vert=\vert\wght_{\Gamma'}\vert=\vert\Gamma'\vert$. Since our graphs are finite
and since $\Phi$ is assumed surjective, this shows that $\Phi$ is bijective, hence is a graph isomorphism. It
proves item (1) and also the direct implication of item (2). In order to prove the inverse implication of item (2),
suppose that $\Psi:(\ul\Gamma,\wght_\Gamma)\simeq(\ul{\Gamma'},\wght_{\Gamma'})$ is a weighted graph
isomorphism. Let $\Phi:\Gamma\to\Gamma'$ be a surjective graph morphism such that $\ul\Phi=\Psi$, as provided by
item (3) in Proposition \ref{prp:graph_decloning}. Since $\Psi$ is bijective and weight-preserving,
\begin{equation*}
  \vert\Gamma'\vert=\sum_{\ul u\in\ul{S'}}\wght_{\Gamma'}(\ul u)=\sum_{\ul s\in\ul S}\wght_{\Gamma'}(\Psi(\ul s))
  =\sum_{\ul s\in\ul S}\wght_\Gamma(\ul s)=\vert\Gamma\vert\;,
\end{equation*}%
where $\ul S$ and $\ul{S'}$ stand for the vertex sets of $\ul{\Gamma}$ and $\ul{\Gamma'}$, as before. It follows
that the surjective graph morphism $\Phi$ is a bijection, hence a graph isomorphism, and $\Gamma\simeq\Gamma'$.
\end{proof}

\begin{prop}\label{prp:graph_autom}
  Let $\Gamma=(S,A)$ be a skew-symmetric graph, whose weight\-ed decloned graph is denoted by
  $(\ul\Gamma,\wght_\Gamma)$, where $\ul\Gamma=(\ul S,\ul A)$. Then
  \begin{equation*}
    0\to\prod_{\ul s\in\ul S}\mathcal{S}_{\wght_\Gamma(\ul s)}\to\Aut(\Gamma)\to\Aut(\ul\Gamma,\wght_\Gamma)\to0
  \end{equation*}%
  is a split short exact sequence of groups. As a consequence, $\Aut(\Gamma)$ is a semi-direct product,
  \begin{equation*}
    \Aut(\Gamma)\simeq\prod_{\ul s\in\ul S}\mathcal{S}_{\wght_\Gamma(\ul s)}\rtimes\Aut(\ul\Gamma,\wght_\Gamma)\;.
  \end{equation*}%
\end{prop}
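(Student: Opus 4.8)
The plan is to construct the short exact sequence explicitly, identify the kernel and the splitting map, and then invoke the universal property of semidirect products. The three maps we need are: the inclusion of the product of symmetric groups into $\Aut(\Gamma)$, the surjection $\Aut(\Gamma)\to\Aut(\ul\Gamma,\wght_\Gamma)$ given by decloning, and a splitting $\Aut(\ul\Gamma,\wght_\Gamma)\to\Aut(\Gamma)$.

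First I would define the surjection. Every graph automorphism $\Phi$ of $\Gamma$ is in particular a surjective graph morphism $\Gamma\to\Gamma$, so by Proposition \ref{prp:graph_decloning}(1) it induces a weighted decloned graph morphism $\ul\Phi:(\ul\Gamma,\wght_\Gamma)\to(\ul\Gamma,\wght_\Gamma)$, which by Proposition \ref{prp:graph_autom_decloning}(1) is a weighted graph isomorphism, hence an element of $\Aut(\ul\Gamma,\wght_\Gamma)$. The assignment $\Phi\mapsto\ul\Phi$ is a group homomorphism by the functoriality statement in Proposition \ref{prp:graph_decloning}(2). It is surjective: given $\Psi\in\Aut(\ul\Gamma,\wght_\Gamma)$, which is in particular a surjective weighted graph morphism, Proposition \ref{prp:graph_decloning}(3) produces a surjective graph morphism $\Phi$ with $\ul\Phi=\Psi$, and by Proposition \ref{prp:graph_autom_decloning}(1) such a $\Phi$ is automatically a graph isomorphism, i.e.\ lies in $\Aut(\Gamma)$.

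Next I would identify the kernel. An automorphism $\Phi$ lies in the kernel precisely when $\ul\Phi=\Id_{(\ul\Gamma,\wght_\Gamma)}$, i.e.\ when $\ul{\Phi(s)}=\ul s$ for all $s\in S$, which says exactly that $\Phi$ permutes each equivalence class $\set{t\in S\mid s\sim t}$ of clones among itself. Since the clones of distinct vertices of $\ul\Gamma$ are independent, such a $\Phi$ is nothing but a tuple of permutations, one of the set of $\wght_\Gamma(\ul s)$ clones for each $\ul s\in\ul S$; conversely any such tuple is a graph automorphism, because the adjacency value $a_{s,u}$ depends only on $\ul s$ and $\ul u$ (all clones of a vertex have identical rows), so permuting clones within their classes preserves $A$. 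This gives the natural isomorphism $\Ker\simeq\prod_{\ul s\in\ul S}\mathcal{S}_{\wght_\Gamma(\ul s)}$ and establishes exactness.

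Finally I would exhibit a splitting. For each $\ul s\in\ul S$ fix once and for all a distinguished clone; then any $\Psi\in\Aut(\ul\Gamma,\wght_\Gamma)$, being weight-preserving, satisfies $\wght_\Gamma(\Psi(\ul s))=\wght_\Gamma(\ul s)$, so I can choose for each $\ul s$ a bijection between the clones of $\ul s$ and the clones of $\Psi(\ul s)$ sending distinguished clone to distinguished clone, say the order-preserving one under the fixed labeling $s_1,\dots,s_{\wght_\Gamma(\ul s)}$. This defines a lift $\jmath(\Psi)\in\Aut(\Gamma)$ with $\ul{\jmath(\Psi)}=\Psi$, and because the choice is made compatibly with composition (order-preserving bijections compose to order-preserving bijections), $\jmath$ is a group homomorphism splitting the surjection. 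The main point to verify carefully is that $\jmath$ is multiplicative, i.e.\ $\jmath(\Psi'\circ\Psi)=\jmath(\Psi')\circ\jmath(\Psi)$; this is where the canonical, labeling-based choice of clone-bijections does the work, and it is the only place where an arbitrary section would fail. Once $\jmath$ is a homomorphism, the standard recognition criterion for split short exact sequences yields the semidirect product decomposition $\Aut(\Gamma)\simeq\prod_{\ul s\in\ul S}\mathcal{S}_{\wght_\Gamma(\ul s)}\rtimes\Aut(\ul\Gamma,\wght_\Gamma)$, completing the proof.
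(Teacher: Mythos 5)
Your proof is correct and follows essentially the same route as the paper's: identify the kernel with the permutations of clones within each equivalence class, obtain surjectivity by lifting a weighted automorphism via Proposition \ref{prp:graph_decloning}(3) and Proposition \ref{prp:graph_autom_decloning}(1), and build the splitting from a fixed labeling of clones, exactly the paper's section $\ol\Psi(s_i):=\Psi(\ul s)_i$. The one step you leave implicit --- that your lift $\jmath(\Psi)$ is actually a graph morphism, not merely a bijection covering $\Psi$ --- is precisely what item (1) of Lemma \ref{lma:graph_morphisms} supplies, and it is what the paper invokes at the same point.
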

\begin{proof}
Let $s\in S$ and consider the equivalence class $\ul s$ of $s$, which we view here as a subset of $S$, of
cardinality $\wght_\Gamma(\ul s)$. Let $\varsigma$ be any permutation of~$\ul s$, $\varsigma\in\mathcal
S_{\ul{s}}\simeq\mathcal S_{\wght_\Gamma(\ul s)}$ and let $\Phi$ be the extension of $\varsigma$ to a permutation
of $S$ which fixes all vertices of $S\setminus\set{\ul s}$. According to Lemma \ref{lma:graph_morphisms}, with
$\Gamma=\Gamma'$ and $\Psi=\Id_{\ul\Gamma}$, $\Phi$ is a graph morphism, hence a graph automorphism. Since this can
be done for any $\ul s\in\ul S$, and since for $\ul s\neq\ul t\in\ul S$, the permutations of $S$ corresponding to
$\ul s$ and $\ul t$ have disjoint support, these permutations generate a group of graph automorphisms of $\Gamma$,
isomorphic to $\prod_{\ul s\in\ul S} \mathcal{S}_{\wght_\Gamma(\ul s)}$; this accounts for its inclusion in
$\Aut(\Gamma)$. The surjectivity of the group homomorphism
$\Aut(\Gamma)\to\Aut(\ul\Gamma,\wght_\Gamma):\Phi\mapsto\ul\Phi$ is proven in the same way as item (2) in
Proposition~\ref{prp:graph_autom_decloning}. Since the neutral element of $\Aut(\ul\Gamma,\wght_\Gamma)$ is
$\Id_{\ul\Gamma}$, every graph automorphism in the kernel of this morphism can only permute equivalent vertices of
$\Gamma$, hence is the graph automorphism corresponding to an element of
$\prod_{\ul{s}\in\ul{S}}\mathcal{S}_{\wght_\Gamma(\ul s)}$; conversely, the graph automorphism corresponding to any
element of $\prod_{\ul{s}\in\ul{S}}\mathcal{S}_{\wght_\Gamma(\ul s)}$ is in the kernel of the group homomorphism
$\Aut(\Gamma)\to\Aut(\ul\Gamma,\wght_\Gamma)$. This shows that the short sequence is exact. To show that it is
split, we need to construct a section of the surjection $\Aut(\Gamma)\to\Aut(\ul\Gamma,\wght_\Gamma)$. To do this,
we fix a numbering of the elements of $\ul s$, for every $\ul s\in\ul S$, writing them as
$s_1,s_2,\dots,s_{\wght_\Gamma(\ul s)}$. Given $\Psi\in\Aut(\ul\Gamma,\wght_\Gamma)$ we get a bijection $\ol\Psi$
of $\Gamma$ by setting $\ol\Psi(s_i):=\Psi(\ul s)_i$ for $\ul s\in \ul S$ and $1\leqs i\leqs\wght_\Gamma(\ul
s)$. According to item (1) in Lemma \ref{lma:graph_morphisms}, $\ol\Psi$ is a graph (auto-)morphism. Clearly, when
$\Psi_1$ and $\Psi_2$ are two elements of $\Aut(\ul\Gamma,\wght_\Gamma)$, then
$\ol{\Psi_1\circ\Psi_2}=\ol{\Psi_1}\circ\ol{\Psi_2}$, so that the map $\Psi\mapsto\ol\Psi$ is a group homomorphism;
it is a section because $\ul{(\ol\Psi)}=\Psi$ for any $\Psi\in\Aut(\ul\Gamma,\wght_\Gamma)$, as follows from the
definitions of $\ol\Psi$ and of decloning of surjective graph morphisms.
\end{proof}
\subsection{Functorial interpretation}\label{par:cat_graphs}
We finish this section by giving a functorial interpretation\footnote{We only use the basics of category theory;
  see for example the book \cite{cats} which is freely available online.} of the above results, in particular of
Propositions \ref{prp:graph_decloning} and \ref{prp:graph_autom_decloning}. Let us denote by $\Gr$ the category of
skew-symmetric graphs (with values in $\bbF$) with surjective graph morphisms and by $\Grz$ the subcategory of
irreducible graphs, whose embedding functor is denoted by $\imath_0$. According to item (2) in Proposition
\ref{prp:graph_decloning}, a functor $\rho:\Gr\to\Grz$ is defined for objects~$\Gamma$ of $\Gr$ by
$\rho(\Gamma):=\ul\Gamma$ and for morphisms $\Phi:\Gamma\to\Gamma'$ by
$\rho(\Phi):=\ul\Phi:\ul\Gamma\to\ul{\Gamma'}$; we call $\rho$ the \emph{graph~decloning~functor}. In categorical
language, item (3) in Proposition \ref{prp:graph_decloning} implies, upon taking
$\wght_\Gamma=\wght_{\Gamma'}=\wght_0$ that $\rho$ is a full functor. Also item (1) in Proposition
\ref{prp:graph_decloning} says that $\Grz$ is a reflective subcategory of~$\Gr$, with reflection functor the graph
decloning functor $\rho$. Said differently, $\rho$ is an adjoint functor for~$\imath_0$. Among other properties, it
shows that decloning is a natural operation on graphs; in the next section we will see that decloning of LV systems
has similar functorial properties.
\section{Lotka-Volterra systems}\label{sec:LV}
\subsection{Basic definitions and properties}\label{par:basic}
We first recall the basic definitions and properties of skew-symmetric Lotka-Volterra systems. Let $A=(a_{i,j})$ be
any skew-symmetric $n\times n$ matrix with entries in $\bbF$, where $n$ is any positive integer. The Lotka-Volterra
system defined by $A$ is the Hamiltonian system on $\bbF^n$, defined by the following quadratic vector field:
\begin{equation}\label{eq:LV_System}
  \dot{x}_{i} = \sum_{j=1}^{n} a_{i,j}\,x_{i}x_{j}\;, \qquad \text{for}\; i=1, \ldots, n\;.
\end{equation}
We have denoted the standard coordinates on $\bbF^n$ by $x_1,\dots,x_n$. The matrix~$A$ also defines a Poisson
structure on $\bbF^n$,
\begin{equation}\label{eq:LV_pi}
  \pi_A:=\sum_{i<j}a_{i,j}x_ix_j\pp{}{x_i}\we\pp{}{x_j}\;.
\end{equation}%
The corresponding Poisson bracket is given by $\pb{x_i,x_j}_A=a_{i,j}x_ix_j$. It is a quadratic Poisson bracket,
known as a \emph{diagonal} Poisson bracket because of its particular form; the Poisson structure $\pi_A$ is also
said to be \emph{log-canonical} because it satisfies $\pb{\log x_i,\log x_j}_A=a_{i,j}$. If we denote by $H$ the
sum of the coordinates, $H:=x_1+x_2+\dots+x_n$, then it is clear from (\ref{eq:LV_pi}) that~(\ref{eq:LV_System}) is
the Hamiltonian vector field $\X_H=\Pb H_A$. We call the Hamiltonian system $(\bbF^n,\pi_A,H)$ a
\emph{(skew-symmetric) Lotka-Volterra system}, or simply an \emph{LV system}; notice that it is entirely determined
by $A$.

Several basic properties of the LV system associated with $A$ can be read off from the matrix $A$. The
rank of $\pi_A$ (which is by definition the maximal rank of $\pi_A$ at the points of~$\bbF^n$) is equal to the rank
of $A$. Moreover, if $(\a_1,\dots,\a_n)\in\bbN^n$ is a null-vector of~$A$, then $C:=x_1^{\a_1}x_2^{\a_2}\dots
x_n^{\a_n}$ is a Casimir function of $\pi_A$, i.e., $\X_C=0$. This is also true if $(\a_1,\dots,\a_n)\in\bbF^n$
upon properly interpreting $C$ as a function on an open subset of $\bbF^n$. When all the entries of $A$ are
integers or rational numbers, as will be the case in the examples which we will discuss in
Section \ref{sec:int_lax}, a basis for the null-vectors of~$A$ can be chosen in $\bbN^n$,
yielding $n-\Rk A$ independent rational Casimir functions of~$\pi_A$.
%

The matrix $A$ is also useful for describing the reductions which are obtained by setting one or several of the
coordinates $x_i$ equal to zero. Notice that such a reduction is a special type of \emph{Poisson reduction} since
for any collection $I\subset\set{1,2,\dots,n}$ the subspace of $\bbF^n$, defined by $x_i=0$ for all $i\in I$, is a
Poisson submanifold of $(\bbF^n,\pi_A)$. The reduced Poisson structure is therefore just the restriction of $\pi_A$
to the subspace. In particular, the reduced system is also an LV system and its matrix is obtained by removing from
$A$ the $i$-th row and $i$-th column, for all $i\in I$, with as Hamiltonian the sum of all remaining coordinate
functions.  In general, the rank of the Poisson structure drops along the subspace, as it is equal to the rank of
its defining matrix.

A \emph{smooth morphism of LV systems} $\phi:(\bbF^n,\pi_A,H)\to(\bbF^{n'},\pi_{A'},H')$ is a smooth\footnote{By
a slight abuse of terminology, we use the terms \emph{smooth} and \emph{diffeomorphic} both in the case of
$\bbF=\bbR$ and $\bbF=\bbC$; strictly speaking we should say in the latter case \emph{holomorphic} and
\emph{biholomorphic}.}  map $\phi:\bbF^n\to\bbF^{n'}$, which preserves the Poisson structure and the Hamiltonian;
in terms of formulas this means that $\pb{\phi^*F,\phi^*G}_A=\phi^*\pb{F,G}_{A'}$ for any smooth functions $F$ and
$G$ on $\bbF^{n'}$, and that $\phi^*H'=H$. It leads to the notion of \emph{smooth isomorphism} of LV systems.

We will in what follows only consider linear morphisms and isomorphisms of LV systems. The reason for
this is given by the following proposition which says that if two LV systems are smoothly isomorphic,
they are linearly isomorphic.
\begin{prop}\label{prp:diffeo_to_linear}
  Suppose that $\phi:(\bbF^n,\pi_A,H)\to(\bbF^{n},\pi_{A'},H)$ is a smooth isomorphism of LV
  systems. Then $\diff_0\phi$, the differential of $\phi$ at the origin, is a (linear) isomorphism of
  LV systems.
\end{prop}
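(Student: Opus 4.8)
The plan is to translate the two structural conditions defining a smooth isomorphism into coordinates and to exploit the fact that $\pi_A$ is quadratic, hence scale-covariant of weight two, while $H$ is linear. Write $L:=\diff_0\phi$ and denote by $P_A(x)$ the skew-symmetric matrix with entries $a_{i,j}x_ix_j$, so that $\pi_A$ at the point $x$ is represented by $P_A(x)$; the condition that $\phi$ is a Poisson map reads, in matrix form,
\[
  (\diff_x\phi)\,P_A(x)\,(\diff_x\phi)^\top=P_{A'}(\phi(x))\qquad(x\in\bbF^n),
\]
and the condition $\phi^*H=H$ reads $H(\phi(x))=H(x)$. The second condition is the easy one: since $H$ is linear, differentiating $H\circ\phi=H$ at the origin and using that the differential of $H$ is the constant covector $(1,\dots,1)$ gives $H\circ L=H$ at once, independently of the value of $\phi(0)$. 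Thus $L$ already preserves the Hamiltonian, and the whole problem is to show that $L$ is a (linear) Poisson morphism, i.e. that $L\,P_A(x)\,L^\top=P_{A'}(Lx)$ for all $x$.

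To extract this from the displayed identity I would use homogeneity. Replacing $x$ by $tx$ with $t\in\bbF$ and using $P_A(tx)=t^2P_A(x)$, the left-hand side becomes $t^2(\diff_{tx}\phi)\,P_A(x)\,(\diff_{tx}\phi)^\top$, a smooth function of $t$ whose Taylor expansion at $t=0$ starts at order $t^2$ with leading coefficient $L\,P_A(x)\,L^\top$ (because $\diff_{0}\phi=L$). The right-hand side $P_{A'}(\phi(tx))$ is, entrywise, $a'_{i,j}\,\phi_i(tx)\,\phi_j(tx)$, which I expand using $\phi_i(tx)=\phi_i(0)+t(Lx)_i+O(t^2)$. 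Comparing the Taylor coefficients of the two sides in $t$ then reduces the proposition to two algebraic identities.

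The order-$t^1$ comparison gives, for all $x$, $a'_{i,j}\bigl(\phi_i(0)(Lx)_j+(Lx)_i\phi_j(0)\bigr)=0$; since $\phi$ is a diffeomorphism $L$ is invertible, so $Lx$ ranges over all of $\bbF^n$, and this forces $\phi_i(0)=\phi_j(0)=0$ whenever $a'_{i,j}\neq0$. Equivalently, $\phi(0)_k=0$ for every index $k$ with some $a'_{k,j}\neq0$, so $\phi(0)$ is supported only on the Poisson-trivial (isolated) directions, where it is harmless; notice that I do not need to prove $\phi(0)=0$ outright. The order-$t^2$ comparison gives $(L\,P_A(x)\,L^\top)_{i,j}=a'_{i,j}\bigl((Lx)_i(Lx)_j+\tfrac12\phi_i(0)q_j(x)+\tfrac12 q_i(x)\phi_j(0)\bigr)$, where $q_i(x):=D^2\phi_i(0)(x,x)$ collects the second-order Taylor data of $\phi$ at $0$. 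For every pair $(i,j)$ the cross terms vanish: either $a'_{i,j}=0$, or $a'_{i,j}\neq0$ and then $\phi_i(0)=\phi_j(0)=0$ by the order-$t^1$ identity. Hence $L\,P_A(x)\,L^\top=P_{A'}(Lx)$, which is exactly the linear Poisson-morphism condition. Together with $H\circ L=H$ and the invertibility of $L$, this shows $L=\diff_0\phi$ is a linear isomorphism of LV systems.

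I expect the main obstacle to be precisely the control of the quadratic cross terms produced by a possibly nonzero $\phi(0)$: the naive scaling limit $t^{-1}\phi(tx)\to Lx$ converges only when $\phi(0)=0$, which can fail in the presence of isolated vertices. The device that resolves this is to compare Taylor coefficients directly in the Poisson identity rather than passing to a limit, so that the order-$t^1$ condition becomes available to cancel exactly the spurious order-$t^2$ contributions of $\phi(0)$.
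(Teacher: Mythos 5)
Your proof is correct and follows essentially the same route as the paper's: both compare the degree-$0$, degree-$1$ and degree-$2$ homogeneous components (your Taylor coefficients in $t$) of the Poisson-map identity at the origin, and both use invertibility of $\diff_0\phi$ to force the constant terms $\phi_i(0)$ to vanish along non-Casimir directions before reading off the quadratic part. The only differences are cosmetic: you get $\phi_i(0)=\phi_j(0)=0$ directly from the order-$t$ identity (using surjectivity of $\diff_0\phi$) rather than by the paper's contradiction argument, and the case $a'_{i,j}=0$, which the paper treats as a separate Casimir case, is absorbed automatically in your setup since the corresponding entry of $P_{A'}(\phi(tx))$ vanishes identically.
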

\begin{proof}
We denote the standard coordinates on $\bbF^n$ by $x_1,x_2,\dots,x_n$. We use the fact that any smooth function
$F:\bbF^n\to\bbF$ can be written as $F=Q+L+c$, where $c:=F(0)\in\bbF$ and $L:=\sum_{k=1}^nx_k\pp F{x_k}(0)$, so that
$Q$ is a smooth function with $Q(0)=0$ and $\pp Q{x_k}(0)=0$ for $k=1,\dots,n$. Applied to the functions
$\phi^*x_i:\bbF^n\to\bbF$ we write $\phi^*x_i=Q_i+L_i+c_i$, for $i=1,\dots,n$. Then the linear map
$\diff_0\phi:\bbF^n\to\bbF^n$ is given by $(L_1,\dots,L_n)$. It is an isomorphism because $\phi$ is a
diffeomorphism. We need to show that
\begin{equation}\label{eq:lin_poi}
  \pb{L_i,L_j}_A=a'_{i,j}L_iL_j\;, \qquad \text{for}\; i,j=1, \ldots, n\;.
\end{equation}
When one of $x_i$ and $x_j$, say $x_i$, is a Casimir function of $\pi_{A'}$, then $\phi^*x_i$ is a Casimir function
of $\pi_A$, hence also its linear part $L_i$ (for degree reasons); also $a'_{i,j}=0$ for all $j$, so both sides
vanish. It remains to prove (\ref{eq:lin_poi}) when $x_i$ and $x_j$ are not Casimir functions of $\pi_{A'}$. We
first show that if $x_i$ is not a Casimir function, then $c_i=0$. The proof goes by contradiction. Suppose that
$x_i$ is not a Casimir function of $\pi_{A'}$ and that $c_i\neq0$. Then there exists a $k$ with $1\leqs k\leqs n$
such that $a'_{i,k}\neq0$.  On the one hand,
\begin{equation*}
  \phi^*\pb{x_i,x_k}_{A'}=a'_{i,k}\(\phi^*x_i\)\(\phi^*x_k\)=a'_{i,k}(Q_i+L_i+c_i)(Q_k+L_k+c_k)\;,
\end{equation*}%
while on the other hand
\begin{equation*}
  \pb{\phi^*x_i,\phi^*x_k}_A=\pb{Q_i,Q_k}_A+\pb{Q_i,L_k}_A+\pb{L_i,Q_k}_A+\pb{L_i,L_k}_A\;.
\end{equation*}%
Since $\phi$ is a Poisson map, the above right hand sides are equal. Since the latter has no constant or linear
terms, $a'_{i,k}c_ic_k=0$ and $a'_{i,k}(c_kL_i+c_iL_k)=0$, so that $c_k=0$ and $L_k=0$. The latter is impossible,
since $\diff_0\phi=(L_1,\dots,L_n)$ is an isomorphism. We have arrived at a contradiction and may conclude that if
$x_i$ is not a Casimir function, then $c_i=0$. Let us suppose now that $x_i$ and $x_j$ are not Casimir functions,
so that $c_i=c_j=0$. Then the quadratic parts of $\phi^*\pb{x_i,x_j}_{A'}$ and $\pb{\phi^*x_i,\phi^*x_j}_A$ are
respectively given by $a'_{i,j}L_iL_j$ and $\pb{L_i,L_j}_{A}$; they are equal since $\phi$ is a Poisson map, so
that $\pb{L_i,L_j}_A=a'_{i,j}L_iL_j$. This proves~(\ref{eq:lin_poi}), hence that $\diff_0\phi$ is a (linear)
Poisson diffeomorphism.
\end{proof}

\subsection{Lotka-Volterra systems associated with graphs}\label{par:graphs}
Since skew-sym\-metric graphs and LV systems both bijectively correspond to skew-symmetric matrices, it
is natural to think of LV systems as being associated with graphs. Since the rows and columns of the
adjacency matrix of a skew-symmetric graph $\Gamma=(S,A)$ are indexed by the vertex set~$S$ of $\Gamma$ we take a
more intrinsic point of view and define the \emph{LV system}
associated with $\Gamma$, to be the Hamiltonian system $(\bbF[S],\pi_A,H_S)$; it is denoted by $\LV(\Gamma)$. Here,
$\bbF[S]$ stands for the vector space generated by $S$, with (unordered) basis $S$,
\begin{equation*}
  \bbF[S]=\set{\sum_{s\in S}\a_ss\mid \a_s\in \bbF\hbox{ for all } s\in S}\;.
\end{equation*}%
Elements of $\bbF[S]$ are also denoted in (unordered) vector form $(\a_s)_{s\in S}$. The linear coordinates on
$\bbF[S]$ associated with $S$ are denoted $x_s,\ s\in S$. Also, $\pi_A$ is the Poisson structure on $\bbF[S]$ whose
associated Poisson bracket is given~by
\begin{equation}\label{eq:PB_S}
  \pb{x_s,x_t}_A=a_{s,t}x_sx_t\;, \qquad \text{for}\; s,t\in S\;.
\end{equation}
%
The Hamiltonian $H_S$ is the sum of the coordinates, $H_S:=\sum_{s\in S}x_s$.  In this notation, the LV vector
field $\X_{H_S}$ takes the form

\begin{equation}\label{eq:LV_S}
  \dot{x}_{s} = x_{s}\sum_{t\in S}a_{s,t}\,x_{t}\;, \qquad \text{for}\; s\in S\;.
\end{equation}

When $\Gamma$ is equipped with a weight vector $\wght$, we will say that
$(\LV(\Gamma),\wght)$ is a \emph{weighted LV system}. Notice that replacing $S$ by $\set{1,2,\dots,n}$ amounts to
totally ordering the elements of $S$, which can be done in $\#S\;!$ ways, making the link between skew-symmetric
graphs and LV systems less intrinsic.

A \emph{morphism of Lotka-Volterra systems}, or simply an \emph{LV morphism},
$\phi:(\bbF[S],\pi_A,H_S)\to(\bbF[S'],\pi_{A'},H_{S'})$ is a linear Poisson map
$\phi:(\bbF[S],\pi_A)\to(\bbF[S'],\pi_{A'})$ for which $\phi^*H_{S'}=H_S$. Notice that $\phi$ does not entail a map
from $S$ to $S'$, so there is no natural notion of a morphism of weighted LV systems; we will come back to this in
Section \ref{par:LV_isomorphisms}.

Several properties of a given LV system are more easily seen or explained from the underlying graph
$\Gamma$ than from its adjacency matrix~$A$. A simple example of this is that if $\Gamma$ is not connected,
$\LV(\Gamma)$ is a direct product (as a Hamiltonian system) of the LV systems which correspond to the
connected components of the graph; also, reduction to the subspace defined by $x_s=0$ for $s\in S_0\subset S$
corresponds to removing from $\Gamma$ all vertices labeled by the entries of $S_0$ (one removes of course also all
arrows which are incident with these vertices). Another example spelled out in Proposition~\ref{prp:functor} below
is that graph morphisms lead to morphisms of LV systems.  Also, the cloning and decloning constructions
for LV systems which are introduced in the next subsection derive naturally from the corresponding
constructions for skew-symmetric graphs.

\vfill

A few examples of graphs corresponding to some well-known examples are given in the three figures which follow (see
Section \ref{par:examples} for more information on these examples).


%
%

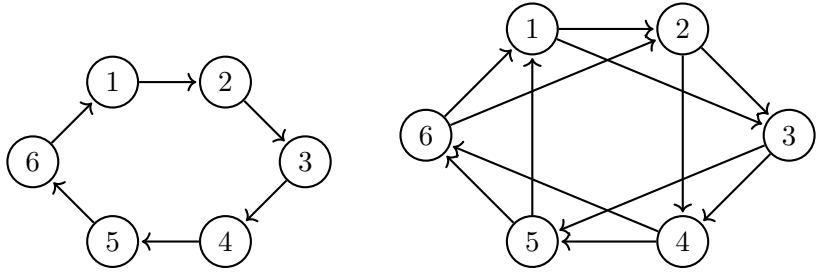
\begin{figure}[h]
  \begin{tikzpicture}[->,shorten >=1pt,auto,node distance=1.5cm,
                thick,main node/.style={circle,draw,font=\bfseries}]
  \node[main node] (1) {$1$};
  \node[main node] (2) [right of=1] {$2$};
  \node[main node] (3) [below right of=2] {$3$};
  \node[main node] (4) [below left of=3] {$4$};
  \node[main node] (5) [left of=4] {$5$};
  \node[main node] (6) [below left of=1] {$6$};
  \path
    (1) edge (2)
    (2) edge (3)
    (3) edge (4)
    (4) edge (5)
    (5) edge (6)
    (6) edge (1);
  \end{tikzpicture}
  \qquad
  \begin{tikzpicture}[->,shorten >=1pt,auto,node distance=2cm,
                thick,main node/.style={circle,draw,font=\bfseries}]
  \node[main node] (1) {$1$};
  \node[main node] (2) [right of=1] {$2$};
  \node[main node] (3) [below right of=2] {$3$};
  \node[main node] (4) [below left of=3] {$4$};
  \node[main node] (5) [left of=4] {$5$};
  \node[main node] (6) [below left of=1] {$6$};
  \path
    (1) edge (2) edge (3)
    (2) edge (3) edge (4)
    (3) edge (4) edge (5)
    (4) edge (5) edge (6)
    (5) edge (6) edge (1)
    (6) edge (1) edge (2);
  \end{tikzpicture}
  \caption{The LV system corresponding to the left graph is the Kac-van Moerbeke system $\KM(6)$; the
    one corresponding to the right graph is $\B62$.\label{fig:1st}}
\end{figure}

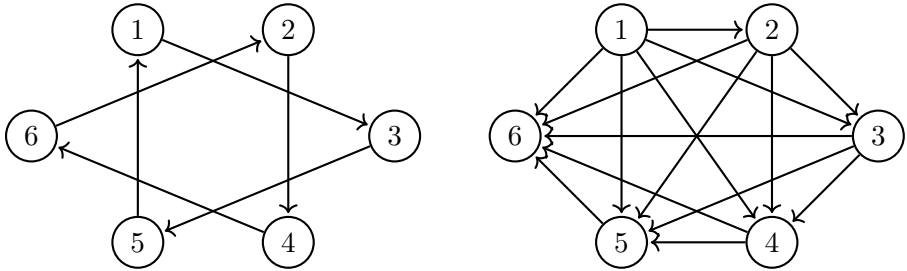
\begin{figure}[h]
  \begin{tikzpicture}[->,shorten >=1pt,auto,node distance=2cm, thick,main node/.style={circle,draw,font=\bfseries}]
  \node[main node] (1) {$1$};
  \node[main node] (2) [right of=1] {$2$};
  \node[main node] (3) [below right of=2] {$3$};
  \node[main node] (4) [below left of=3] {$4$};
  \node[main node] (5) [left of=4] {$5$};
  \node[main node] (6) [below left of=1] {$6$};
  \path
    (1) edge (3)
    (2) edge (4)
    (3) edge (5)
    (4) edge (6)
    (5) edge (1)
    (6) edge (2);
\end{tikzpicture}
\qquad
\begin{tikzpicture}[->,shorten >=1pt,auto,node distance=2cm,
                thick,main node/.style={circle,draw,font=\bfseries}]
  \node[main node] (1) {$1$};
  \node[main node] (2) [right of=1] {$2$};
  \node[main node] (3) [below right of=2] {$3$};
  \node[main node] (4) [below left of=3] {$4$};
  \node[main node] (5) [left of=4] {$5$};
  \node[main node] (6) [below left of=1] {$6$};
  \path
    (1) edge (2) edge(3) edge (4) edge (5) edge (6)
    (2) edge (3) edge (4) edge (5) edge (6)
    (3) edge (4) edge (5) edge (6)
    (4) edge (5) edge (6)
    (5) edge (6);
\end{tikzpicture}
\caption{The LV system corresponding to the left graph is the direct sum of two copies of $\KM(3)$; the
  one on the right corresponds to $\LV(6,0)$.\label{fig:2nd}}
\end{figure}

\begin{figure}[h]
  \begin{tikzpicture}[->,shorten >=1pt,auto,node distance=2cm,
                thick,main node/.style={circle,draw,font=\bfseries}]
  \node[main node] (1) {$1$};
  \node[main node] (2) [right of=1] {$2$};
  \node[main node] (3) [below right of=2] {$3$};
  \node[main node] (4) [below left of=3] {$4$};
  \node[main node] (5) [left of=4] {$5$};
  \path
    (1) edge (2) edge(3)
    (2) edge (3) edge (4)
    (3) edge (4) edge (5)
    (4) edge (5)
    (5) edge (1) ;
  \end{tikzpicture}
\qquad
\begin{tikzpicture}[->,shorten >=1pt,auto,node distance=2cm,
                thick,main node/.style={circle,draw,font=\bfseries}]
  \node[main node] (1) {$1$};
  \node[main node] (2) [right of=1] {$2$};
  \node[main node] (3) [below right of=2] {$3$};
  \node[main node] (4) [below left of=3] {$4$};
  \node[main node] (5) [left of=4] {$5$};
  \path
    (1) edge (2) edge(3) edge (4) edge (5)
    (2) edge (3) edge (4) edge (5)
    (3) edge (4) edge (5)
    (4) edge (5) ;
\end{tikzpicture}
\caption{The two graphs are obtained by removing the vertex $6$ from the right graph in Figures \ref{fig:1st} and
  \ref{fig:2nd}, respectively. The corresponding LV systems are reductions of the original ones. Notice
  that the right one is $\LV(5,0)$.}\label{fig:3rd}
\end{figure}
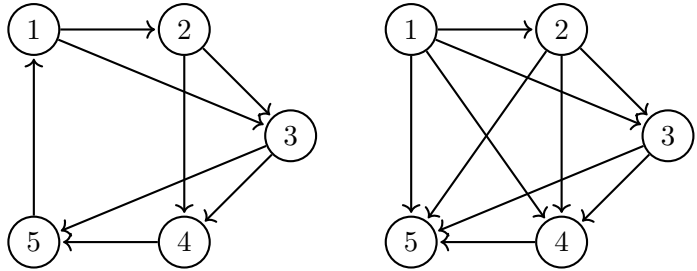


We show in the following proposition that graph morphisms lead to LV morphisms between the corresponding
LV systems. Any map $\Phi:S\to S'$ leads to a unique linear map $\phi:\bbF[S]\to\bbF[S']$, extending
$\Phi$, i.e., $\phi(s)=\Phi(s)$ for all $s\in S$. It is called the \emph{linear extension} of $\Phi$. When
$\Phi:\Gamma\to\Gamma'$ is a graph morphism, we denote its linear extension by $\LV(\Phi)$.
\begin{prop}\label{prp:functor}
  Let $\Gamma$ and $\Gamma'$ be two skew-symmetric graphs and suppose that $\Phi:\Gamma\to\Gamma'$ is
  a graph morphism.
  \begin{enumerate}
    \item $\LV(\Phi):\LV(\Gamma)\to\LV(\Gamma')$ is an LV morphism.
    \item If $\Phi':\Gamma'\to\Gamma''$ is another graph morphism, then
        $\LV(\Phi'\circ\Phi)=\LV(\Phi')\circ\LV(\Phi)$; also, $\LV(\Id_\Gamma)=\Id_{\LV(\Gamma)}$.
    \item $\Phi$ is a graph isomorphism if and only if $\LV(\Phi)$ is an LV isomorphism.
  \end{enumerate}
\end{prop}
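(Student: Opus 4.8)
The plan is to prove the three items in order, exploiting that $\LV(\Phi)$ is by construction the \emph{linear extension} of the set map $\Phi:S\to S'$, so it sends the basis vector $s$ to the basis vector $\Phi(s)$. For item~(1), I must check two things: that $\LV(\Phi)$ preserves the Poisson structure, and that it pulls back the Hamiltonian correctly. Since both conditions are linear/bilinear, it suffices to verify them on the coordinate functions. Writing $\phi:=\LV(\Phi)$, the pullback acts on coordinates by $\phi^*x'_{s'}=\sum_{s\in\Phi^{-1}(s')}x_s$, i.e.\ the sum of the coordinates over the fiber. The key computation is then to expand $\phi^*\pb{x'_{s'},x'_{t'}}_{A'}$ and $\pb{\phi^*x'_{s'},\phi^*x'_{t'}}_A$ using \eqref{eq:PB_S} and compare them. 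Both become double sums over the fibers $\Phi^{-1}(s')$ and $\Phi^{-1}(t')$, and the defining property of a graph morphism, namely $a'_{\Phi(s),\Phi(t)}=a_{s,t}$, makes the coefficients match term by term; this is exactly what forces the two expressions to be equal. For the Hamiltonian, $\phi^*H_{S'}=\sum_{s'\in S'}\phi^*x'_{s'}=\sum_{s'\in S'}\sum_{s\in\Phi^{-1}(s')}x_s=\sum_{s\in S}x_s=H_S$, which is immediate once one notes the fibers partition $S$.

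For item~(2), functoriality is a formal consequence of the fact that linear extension is itself functorial: the linear extension of $\Phi'\circ\Phi$ agrees with $\LV(\Phi')\circ\LV(\Phi)$ because both are linear maps sending $s\mapsto\Phi'(\Phi(s))$, and two linear maps agreeing on a basis are equal; similarly $\LV(\Id_\Gamma)$ fixes every basis vector, hence equals $\Id_{\LV(\Gamma)}$. I would phrase this in one or two sentences without computation.

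For item~(3), the forward direction follows from item~(2): if $\Phi$ is a graph isomorphism with inverse $\Phi^{-1}$ (also a graph morphism, as recorded after Definition~\ref{def:graph_morphism}), then $\LV(\Phi)$ and $\LV(\Phi^{-1})$ are mutually inverse LV morphisms, so $\LV(\Phi)$ is an LV isomorphism. The substantive direction is the converse: I assume $\LV(\Phi)$ is an LV isomorphism and must deduce that the underlying set map $\Phi$ is bijective (a bijective graph morphism is automatically a graph isomorphism). Here I would argue that $\phi=\LV(\Phi)$ maps the basis $S$ into $S'$, and being a linear isomorphism it must carry a spanning set to a spanning set; since it sends the basis $S$ to the multiset $\{\Phi(s)\}_{s\in S}$ inside the basis $S'$, invertibility forces this multiset to be exactly $S'$ with no repetitions, i.e.\ $\Phi$ is a bijection of $S$ onto $S'$.

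The main obstacle I anticipate is item~(3)'s converse: one must rule out that $\LV(\Phi)$ could be a linear isomorphism while $\Phi$ is non-injective or non-surjective as a set map. The cleanest safeguard is the observation that $\LV(\Phi)$ sends \emph{basis vectors to basis vectors} (not to arbitrary linear combinations), so its matrix in the bases $S,S'$ is a $0/1$ matrix with exactly one $1$ in each column; such a matrix is invertible precisely when it is a permutation matrix, which is equivalent to $\Phi$ being a bijection. I would state this matrix description explicitly to make the equivalence transparent, taking care that it is genuinely the map-on-basis structure, and not any finer Poisson data, that is doing the work here.
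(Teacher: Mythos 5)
Your proposal is correct and follows essentially the same route as the paper's proof: the pullback formula $\phi^*y_u=\sum_{\Phi(s)=u}x_s$, the fiber-sum computation using $a'_{\Phi(s),\Phi(t)}=a_{s,t}$ to verify the Poisson and Hamiltonian conditions, functoriality by checking on basis vectors, and the converse of item~(3) via the observation that the linear extension of $\Phi$ can only be invertible when $\Phi$ itself is bijective. Your explicit $0/1$-matrix justification of that last step merely spells out what the paper states tersely ("$\Phi$ is bijective since $\LV(\Phi)$ is the linear extension of $\Phi$"), so there is no substantive difference.
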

\begin{proof}
Let us write $\Gamma=(S,A)$ and $\Gamma'=(S',A')$, and let us denote the linear map $\LV(\Phi)$ by $\phi$. The
linear coordinates on~$\bbF[S]$ and on $\bbF[S']$ are respectively denoted by $x_s,$ with $s\in S$ and $y_u,$ with
$u\in S'$. By definition, $\phi$ is given by $\phi\(\sum_{s\in S}\a_ss\)=\sum_{s\in S}\a_s\Phi(s)$, where
$\a_s\in\bbF$ for all $s\in S$; in vector form, this is written
$\phi(\a_s)_{s\in{S}}=(\sum_{s\in{S}}\a_s\delta_{u,\Phi(s)})_{u\in S'}$. We show that the corresponding algebra
homomorphism $\phi^*$ is given, for $u\in S'$, by
\begin{equation}\label{eq:lin_functions}
  \phi^*y_u=\sum_{\Phi(s)=u}x_s\;.
\end{equation}%
To prove this formula, which is an equality of linear functions on $\bbF[S]$, it suffices to check that both sides
of (\ref{eq:lin_functions}) take the same value when evaluated on any element of $S$. For $t\in S$, we have
\begin{align*}
  &(\phi^*y_u)(t)=y_u(\phi(t))=y_u(\Phi(t))=\delta_{u,\Phi(t)}\;,\\
  &\(\sum_{\Phi(s)=u}x_s\)(t)=\sum_{\Phi(s)=u} x_s(t)=\sum_{\Phi(s)=u}\delta_{s,t}=\delta_{u,\Phi(t)}\;,
\end{align*}
which proves \eqref{eq:lin_functions}. To show that $\phi$ is a Poisson map, we need to verify that
\begin{equation}\label{eq:poisson_morphism}
  \pb{\phi^*y_u,\phi^*y_v}_{A}=\phi^*\pb{y_u,y_v}_{A'}\;
\end{equation}%
for all $u,v\in S'$. First, notice that since $\pb{y_u,y_v}_{A'}$ is a multiple of $y_uy_v$, both sides of
(\ref{eq:poisson_morphism}) are zero when $u$ or $v$ do not belong to $\Im(\Phi)$. We may therefore suppose that
$u,v\in\Im(\Phi)$. Since $\Phi$ is a graph morphism, if $\Phi(s)=u$ and $\Phi(t)=v$, then
$a'_{u,v}=a_{s,t}$. Therefore,
\begin{align*}
  \pb{\phi^*y_u,\phi^*y_v}_A
  &=\sum_{\Phi(s)=u}\sum_{\Phi(t)=v}\pb{x_s,x_t}_A=\sum_{\Phi(s)=u}\sum_{\Phi(t)=v}a_{s,t}x_sx_t\\
  &=a'_{u,v}\(\sum_{\Phi(s)=u}x_s\)\(\sum_{\Phi(t)=v}x_t\)=a'_{u,v}\phi^*(y_u)\phi^*(y_v)\\
  &=\phi^*(a'_{u,v}y_uy_v)=\phi^*\pb{y_u,y_v}_{A'}\;.
\end{align*}%
Moreover,
\begin{equation*}
  \phi^*H_{S'}=\sum_{u\in S'}\phi^*y_u=\sum_{u\in S'}\sum_{\Phi(s)=u}x_s=\sum_{s\in S}x_s=H_S\;.
\end{equation*}%
It follows that $\phi$ is an LV morphism, which is item (1) (recall that $\phi=\LV(\Phi)$). For the proof of item
(2), it suffices to verify that $\LV(\Phi'\circ\Phi)$ and $\LV(\Phi')\circ\LV(\Phi)$ take the same value at every
$s\in S\subset\bbF[S]$, just like $\LV(\Id_\Gamma)$ and $\Id_{\LV(\Gamma)}$, but that is trivial. The proof of the
direct implication in item (3) follows from items (1) and~(2); the inverse implication in item (3) follows from the
fact that if $\LV(\Phi)$ is an isomorphism, then $\Phi$ is bijective since $\LV(\Phi)$ is the linear extension
of~$\Phi$.
\end{proof}
Items (1) and (2) of the proposition imply that $\LV$ is a functor. We will come back to this in Section
\ref{par:LV_cat}.

Not all morphisms between LV systems are induced by graph morphisms, even when the underlying graphs
are irreducible. A simple counterexample is given in Figure \ref{fig:cex}.

\begin{figure}[h]
  \begin{tikzpicture}[->,shorten >=1pt,auto,node distance=1.5cm, thick,main node/.style={circle,draw,font=\bfseries}]
  \node[main node] (1) {$s$};
  \node[main node] (2) [below left of=1] {$t$};
  \node[main node] (3) [below right of=1] {$u$};
  \node (4) [right of=3] {};
  \node[main node] (5) [right of=4] {$v$};
  \node[main node] (6) [right of=5] {$w$};
  \path
  (1) edge (2)
  (2) edge (3)
  (1) edge (3)
  (5) edge (6);
  \end{tikzpicture}
  \caption{There exists no graph morphism between the left and right graphs. However, a morphism $\phi$ between
    their LV systems is defined by $\phi^*y_v=x_s+x_t$ and \hbox{$\phi^*y_w=x_u$}.\label{fig:cex}}
\end{figure}
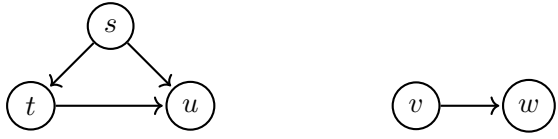

\subsection{Cloning and decloning of Lotka-Volterra systems}\label{par:LV_cloning}
We define in this paragraph the cloning and decloning of LV systems, leaving the more delicate issue of the
decloning of LV morphisms to Section \ref{par:LV_morphisms_decloning}. We do this by using the corresponding
constructions for the underlying graphs. Let $\Gamma$ be a skew-symmetric graph and let $\wght$ be a weight vector
for $\Gamma$. Recall that we denote by $\Gamma^\wght$ the cloning of $(\Gamma,\wght)$ and by $\ul\Gamma$ and
$(\ul\Gamma,\wght_\Gamma)$ respectively the decloning and the weighted decloning of $\Gamma$. Their LV systems are
by definition the \emph{cloning} of the LV system $\LV(\Gamma)$ with weight vector $\wght$, respectively the
\emph{(weighted) decloning} of $\LV(\Gamma)$. Thus, the cloning of the weighted LV system $(\LV(\Gamma),\wght)$ is
$\LV(\Gamma^\wght)$ and the decloning, respectively weighted decloning, of $\LV(\Gamma)$ is $\LV(\ul\Gamma)$,
respectively $(\LV(\ul\Gamma),\wght_\Gamma)$. When $\Gamma$ is irreducible, $\Gamma\simeq\ul\Gamma$, we will say
that $\LV(\Gamma)$ is \emph{irreducible}, otherwise that it is \emph{reducible}.

We give a more explicit description of these systems, which we do first in the case of cloning. Let
$(\Gamma,\wght)$ be a weighted graph, with $\Gamma=(S,A)$. Recall that $\Gamma^\wght$ is the skew-symmetric graph
$(S^\wght,A^\wght)$, where the elements of $S^\wght$ are all $s_i$ with $s\in S$ and $1\leqs i\leqs\wght(s)$. Thus,
the phase space $\bbF[S^\wght]$ of $\LV(\Gamma^\wght)$ has dimension $\vert\wght\vert$ and is equipped with
coordinate functions $x_{s_i}$, with $s$ and $i$ as above. Since the entries of $A^\wght$ are given by
$a^\wght_{s_i,t_j}:=a_{s,t}$, for $s,t\in S$ and $1\leqs i\leqs\wght(s)$, $1\leqs j\leqs\wght(t)$, the Poisson
structure~$\pi_{A^\wght}$ is given, as in \eqref{eq:PB_S}, by the following diagonal Poisson brackets:
\begin{equation}\label{eq:pb_wght}
  \pb{x_{s_i},x_{t_j}}_{A^\wght}=a_{s,t}x_{s_i}x_{t_j}\;.
\end{equation}%
In this formula, $s,t\in S$ and $1\leqslant i\leqslant\wght(s)$ and $1\leqslant j\leqslant\wght(t)$. Since the
matrices~$A$ and $A^\wght$ have the same rank, the corresponding Poisson structures $\pi_A$ and~$\pi_{A^\wght}$
have the same rank (equal to the rank of $A$). Since for any $s\in S$, the lines with labels $s_1,s_2,\dots
s_{\wght(s)}$ are the same, the functions $x_{s_i}/x_{s_1}$ are (independent) Casimir functions of $\pi_{A^\wght}$,
for $s\in S$ and $i=2,3,\dots,\wght(s)$; this follows also easily from (\ref{eq:pb_wght}). Since the Hamiltonian of
$\LV(\Gamma^\wght)$ is the sum of the coordinate functions, it is given by
\begin{equation}\label{eq:H_wght}
  H_{S^\wght}=\sum_{s\in S}\sum_{i=1}^{\wght(s)} x_{s_i}\;,
\end{equation}%
and the Hamiltonian vector field of ${H_{S^\wght}}$ takes the simple form
\begin{equation}\label{eq:LV_wght}
  \dot{x}_{s_i} = x_{s_i}\sum_{t\in S}\sum_{j=1}^{\wght(t)}a_{s,t}\,x_{t_j}\;, \qquad \text{for}\; s\in S
  \text{ and } i=1,\dots,\wght(s)\;.
\end{equation}

We now give a more explicit description of decloning of LV systems, which we do in two different
ways. We first show that decloning amounts to a special type of Poisson reduction \cite[Section 5.2.2]{PLV}, given
by a Poisson submersion, as stated in the following proposition:
\begin{prop}\label{prp:decloning}
  Let $\Gamma$ be any graph and let $p:\Gamma\to\ul\Gamma$ denote the decloning map of $\Gamma$. Then
  $\LV(p):\LV(\Gamma)\to\LV(\ul\Gamma)$ is an LV morphism, which is a surjective submersion, so that
  $\LV(\ul\Gamma)$ is obtained from $\LV(\Gamma)$ by Poisson reduction. Explicitly, $\LV(p)$ is given by
  \begin{equation}\label{eq:LV(p)}
    \begin{array}{lcccl}
      \LV(p)&:&\bbF[S]&\to&\bbF[\ul S]\\
      & &(\a_s)_{s\in S} &\mapsto&
         \left(\sum_{s\in\ul s}\a_s\right)_{\ul s\in \ul S}\;,
    \end{array}
  \end{equation}
  and is called the \emph{decloning map of} $\LV(\Gamma)$.
\end{prop}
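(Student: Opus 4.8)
The plan is to identify $\LV(p)$ as the special case $\LV(\Phi)$ of Proposition~\ref{prp:functor} with $\Phi=p:\Gamma\to\ul\Gamma$ the decloning map, and then to verify the two additional claims---that it is a surjective submersion and that it realizes $\LV(\ul\Gamma)$ as a Poisson reduction. First I would recall that the decloning map $p:\Gamma\to\ul\Gamma$ is by construction a graph morphism (this was observed when $\ul A$ was defined as the unique map making $p$ into a graph morphism). Hence item~(1) of Proposition~\ref{prp:functor} applies verbatim and tells us that $\LV(p):\LV(\Gamma)\to\LV(\ul\Gamma)$ is an LV morphism. This already settles the first assertion with no extra work.

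Next I would pin down the explicit formula~\eqref{eq:LV(p)}. By definition $\LV(p)$ is the linear extension of $p$, so it sends a basis vector $s\in S$ to $\ul s=p(s)\in\ul S$; extending linearly and grouping together all basis vectors lying in a fixed equivalence class, the element $(\a_s)_{s\in S}$ is mapped to the vector whose $\ul s$-component is $\sum_{s\in\ul s}\a_s$ (here $\ul s$ is viewed as a subset of $S$). This is precisely the stated formula, and dually it recovers the pullback $\LV(p)^*y_{\ul s}=\sum_{p(s)=\ul s}x_s$ already appearing as~\eqref{eq:lin_functions}.

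It remains to check that $\LV(p)$ is a surjective submersion and that $\LV(\ul\Gamma)$ arises by Poisson reduction. Surjectivity is immediate from~\eqref{eq:LV(p)}: given any target vector $(\b_{\ul s})_{\ul s\in\ul S}$, one may for instance place $\b_{\ul s}$ on a single chosen representative of each class and zero elsewhere. Because $\LV(p)$ is linear and surjective, it is automatically a submersion (its differential at every point is the same surjective linear map). For the Poisson-reduction statement I would invoke the framework of Poisson reduction by a Poisson submersion: $\LV(p)$ is a surjective Poisson submersion between Poisson manifolds, and by item~(1) it intertwines the Poisson structures, so $(\bbF[\ul S],\pi_{\ul A})$ is exactly the reduced Poisson manifold obtained from $(\bbF[S],\pi_A)$ along the fibres of $\LV(p)$; the Casimir functions $x_{s_i}/x_{s_1}$ identified for cloned systems are constant along these fibres, which is the concrete manifestation of the reduction.

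I do not expect a genuine obstacle here: every ingredient is already in place, and the proposition is essentially a repackaging of Proposition~\ref{prp:functor} together with the trivial linear-algebra facts about $\LV(p)$. The only point requiring a little care---and the one I would state most carefully---is the precise sense in which ``Poisson reduction'' is meant, namely reduction by a Poisson submersion as in \cite[Section 5.2.2]{PLV}; one must confirm that a surjective \emph{linear} Poisson submersion does indeed exhibit the codomain as the reduced Poisson space, which follows because the Poisson structure $\pi_{\ul A}$ is the unique one making $\LV(p)$ Poisson and the fibres are the level sets of the reduction Casimirs.
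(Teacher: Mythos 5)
Your proposal is correct and follows essentially the same route as the paper: invoke Proposition~\ref{prp:functor} for the LV-morphism property, note that surjectivity of $p$ makes the linear map $\LV(p)$ surjective (hence a submersion), and read off formula~\eqref{eq:LV(p)} as the transcription of~\eqref{eq:lin_functions}. Your extra remarks spelling out the sense of Poisson reduction via a Poisson submersion as in \cite[Section 5.2.2]{PLV} only make explicit what the paper leaves implicit in the sentence preceding the proposition.
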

\begin{proof}
Since $p:\Gamma\to\ul\Gamma$ is a graph morphism, $\LV(p)$ is an LV morphism (Proposition \ref{prp:functor}). Since
$p$ is surjective, $\LV(p)$ is surjective on $\ul S$, so it is a surjective linear map, hence a (surjective)
submersion. The explicit formula for $\LV(p)$ is an easy transcription of (\ref{eq:lin_functions}).
\end{proof}
Alternatively, one can describe decloning of LV systems as a different type of Poisson reduction, obtained by
putting several of the coordinates $x_{s_i}$ equal to zero; recall from Section \ref{par:basic} that this amounts
to restricting the LV system to an LV system on a subspace, and recall from Section \ref{par:graphs} that this
amounts to removing from the graph $\Gamma$ the corresponding vertices~$s_i$. Applied to the current case, in which
$\ul\Gamma$ is obviously isomorphic to the subgraph $\Gamma_0$ of~$\Gamma$, induced by any set $S_0$ of
representatives of $S$ modulo~$\sim$, we get that $\LV(\ul\Gamma)\simeq\LV(\Gamma_0)$ and that $\LV(\Gamma_0)$ is
obtained by Poisson reduction from~$\LV(\Gamma)$.

The above definitions of cloning, decloning and irreducibility of LV systems are a priori unsatisfactory because we
have not shown yet that if $\LV(\Gamma)\simeq\LV(\Gamma')$ then $\Gamma\simeq\Gamma'$, and hence that
$\LV(\ul\Gamma)\simeq\LV(\ul{\Gamma'}).$ To do this, one needs to study LV morphisms in more detail, which will be
done in the next subsection.

\subsection{Decloning of morphisms of Lotka-Volterra systems}\label{par:LV_morphisms_decloning}

We show in this subsection that surjective morphisms of LV systems can be decloned. To do this, and in order to
prove some related results, we will use the following key lemma:
\begin{lemma}\label{lma:pppambos}
  Let $\phi:(\bbF[S],\pi_A)\to (\bbF[S'],\pi_{A'})$ be a linear Poisson morphism. It is assumed that
  $\Im\phi$ is not contained in a coordinate hyperplane $y_u=0$, $u\in S'$. We denote, as before, the
  coordinates on $\bbF[S]$ and on $\bbF[S']$ by~$x_s$, with $s\in S$, respectively by $y_u$, with $u\in S'$. Let
  $B=(\b_{u,s})$ be the matrix defined by $\phi^*y_u=\sum_{s\in S}\b_{u,s}x_s$. Suppose that
  $\b_{u,s}\b_{v,s}\neq0$ for some \hbox{$u\neq v\in S'$} and some $s\in S$. Then $u\sim v$.
\end{lemma}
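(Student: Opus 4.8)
The statement to prove is Lemma~\ref{lma:pppambos}: if $\phi:(\bbF[S],\pi_A)\to(\bbF[S'],\pi_{A'})$ is a linear Poisson morphism whose image is not contained in any coordinate hyperplane $y_u=0$, and if $\beta_{u,s}\beta_{v,s}\neq0$ for some $u\neq v$ in $S'$ and some $s\in S$, then $u\sim v$. Recall that $u\sim v$ means the rows of $A'$ indexed by $u$ and $v$ coincide, i.e.\ $a'_{u,w}=a'_{v,w}$ for all $w\in S'$. The plan is to extract, from the Poisson-morphism identity $\pb{\phi^*y_u,\phi^*y_w}_A=\phi^*\pb{y_u,y_w}_{A'}$, enough algebraic constraints on the coefficients $\beta_{u,s}$ to force the two rows of $A'$ to agree.

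**The key step: exploiting log-canonicity and the diagonal form.** The crucial structural fact is that the brackets are log-canonical: $\pb{x_s,x_t}_A=a_{s,t}x_sx_t$ and $\pb{y_u,y_w}_{A'}=a'_{u,w}y_uy_w$. Writing $\phi^*y_u=\sum_s\beta_{u,s}x_s$ and expanding both sides of the morphism identity for a fixed pair $u,w$, the left-hand side is $\sum_{s,t}\beta_{u,s}\beta_{w,t}a_{s,t}x_sx_t$ while the right-hand side is $a'_{u,w}\bigl(\sum_s\beta_{u,s}x_s\bigr)\bigl(\sum_t\beta_{w,t}x_t\bigr)$. Comparing the coefficient of a single monomial $x_s^2$ (the diagonal term) on both sides: on the left $a_{s,s}=0$ contributes nothing, so the left side has no $x_s^2$ term, forcing $a'_{u,w}\beta_{u,s}\beta_{w,s}=0$. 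This is the engine of the whole argument. Now I apply this with $w$ ranging over $S'$ and use the hypothesis $\beta_{u,s}\beta_{v,s}\neq0$. The plan is to show $a'_{u,w}=a'_{v,w}$ for each $w$ by playing the roles of $u$ and $v$ against a third index.

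**Carrying out the comparison.** Fix the $s$ with $\beta_{u,s}\neq0$ and $\beta_{v,s}\neq0$. For an arbitrary $w\in S'$, I want to compare $a'_{u,w}$ with $a'_{v,w}$. Applying the diagonal identity to the pair $(u,w)$ at the index $s$ gives $a'_{u,w}\beta_{u,s}\beta_{w,s}=0$, and to the pair $(v,w)$ gives $a'_{v,w}\beta_{v,s}\beta_{w,s}=0$. When $\beta_{w,s}\neq0$, since $\beta_{u,s}\neq0$ and $\beta_{v,s}\neq0$ we conclude $a'_{u,w}=0=a'_{v,w}$, so the rows agree at $w$. The genuine work is the case $\beta_{w,s}=0$, where the diagonal identity is vacuous and a different monomial must be used. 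Here the hypothesis that $\Im\phi$ misses every coordinate hyperplane should be brought in: it guarantees, for each $w'\in S'$, the existence of some $s'\in S$ with $\beta_{w',s'}\neq0$, so one can never have a row of $B$ that is identically zero. I would then compare the coefficient of a cross-term $x_sx_{s'}$ (for a suitable $s'$ witnessing $\beta_{w,s'}\neq0$) in the identity for the pair $(u,w)$ versus $(v,w)$, using the already-established agreement at the diagonal level to chain the equalities $a'_{u,w}=a_{\cdots}=a'_{v,w}$.

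**Anticipated obstacle.** I expect the main difficulty to lie precisely in the off-diagonal bookkeeping: the single diagonal identity $a'_{u,w}\beta_{u,s}\beta_{w,s}=0$ only controls $a'_{u,w}$ when the relevant $\beta$'s are nonzero, and when $\beta_{w,s}=0$ one must extract the value of $a'_{u,w}$ from a cross-term coefficient, where contributions from several pairs $(s,t)$ may superpose. The art will be in choosing the right monomials and the right auxiliary indices so that the linear system decouples, and in using the no-coordinate-hyperplane hypothesis exactly where it is needed to rule out degenerate configurations (a vanishing column of $B$). I would organize the argument around the matrix identity $B^\top \mathrm{diag\text{-}vanishing}$, i.e.\ around the observation that $B^\top A' B$ and $A$ must agree off the diagonal while their diagonals are both forced to vanish, and read off the row-equality of $A'$ from the non-degeneracy of the columns of $B$ supplied by the hypothesis. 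This reduction of the statement to a clean linear-algebra identity about $B^\top A'B=A$ (plus the forced vanishing of diagonal terms) is, I expect, the cleanest route, and the place where all the hypotheses get used.
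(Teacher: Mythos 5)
Your proposal is correct and follows essentially the same route as the paper's own proof: both extract from the Poisson-morphism condition the coefficient identity for each monomial $x_sx_t$, use the $x_s^2$ coefficient to force $a'_{u,w}\beta_{u,s}\beta_{w,s}=0$, split into the cases $\beta_{w,s}\neq0$ and $\beta_{w,s}=0$, and in the latter case invoke the hypothesis that $\Im\phi$ avoids the hyperplane $y_w=0$ to produce $s'$ with $\beta_{w,s'}\neq0$ and chain $a'_{u,w}=a_{s,s'}=a'_{v,w}$ via the $x_sx_{s'}$ coefficient (where, contrary to your anticipated obstacle, only the pairs $(s,s')$ and $(s',s)$ contribute, so nothing superposes). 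One caveat: your closing suggestion that the statement reduces to the matrix identity $B^\top A'B=A$ is not accurate --- for these log-canonical (quadratic) structures the Poisson condition is a coefficient-by-coefficient identity, strictly stronger than any single matrix equation --- but this aside plays no role in your actual argument.
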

\begin{proof}
We first express the fact that $\phi$ is a Poisson map. For $u,v\in S'$ we have that
\begin{align*}
  \phi^*\pb{y_u,y_v}_{A'}&=a'_{u,v}(\phi^*y_u)(\phi^*y_v)=\sum_{s,t\in S}a'_{u,v}\b_{u,s}\b_{v,t}x_sx_t\;,\\
  \pb{\phi^*y_u,\phi^*y_v}_A&=\sum_{s,t\in S}a_{s,t}\b_{u,s}\b_{v,t}x_sx_t\;.
\end{align*}%
Taking the coefficient of $x_sx_t$ in these expressions we find that $\phi$ is a Poisson map if and only if
\begin{equation}\label{eq:poisson_map}
  (a'_{u,v}-a_{s,t})\b_{u,s}\b_{v,t}+  (a'_{u,v}+a_{s,t})\b_{u,t}\b_{v,s}=0\;,\quad \text{for all}\;s,t\in
  S,\;u,v\in S'\;.
\end{equation}%
Suppose now that $\phi$ is a Poisson map and that $\b_{u,s}\b_{v,s}\neq0$ for some $u\neq v\in S'$ and some
$s\in S$. We show that $u\sim v$, i.e., that $a'_{u,w}=a'_{v,w}$ for all $w\in S'$. Replacing $t$ by $s$ in
(\ref{eq:poisson_map}) we find, since $a_{s,s}=0$, that $2a'_{u,v}\b_{u,s}\b_{v,s}=0$, so that
$a'_{u,v}=0=a'_{v,v}$, which shows that $a'_{u,w}=a'_{v,w}$ for $w=v$, and also for $w=u$. Let $w\in S'$, different
from $u$ and $v$. We distinguish two cases. Suppose first that $\b_{w,s}=0$ and let $t$ be such that
$\b_{w,t}\neq0$; such a $t$ exists because $\Im\phi$ is not contained in the coordinate hyperplane $y_w=0$. If
we replace $v$ by $w$ in (\ref{eq:poisson_map}) we get $(a'_{u,w}-a_{s,t})\b_{u,s}\b_{w,t}=0$, so that
$a'_{u,w}=a_{s,t}$. Similarly, if we replace $u$ by $w$ in (\ref{eq:poisson_map}) we get $a'_{v,w}=a_{s,t}$. It
follows that $a'_{u,w}=a'_{v,w}$. When $\b_{w,s}\neq0$ we take $t=s$ and proceed as in the first case,
with $v=w$ (resp. $u=w$), to find $a'_{u,w}=0$ (resp. $a'_{v,w}=0$).
This leads again to $a'_{u,w}=a'_{v,w}$ and completes the proof that $u\sim v$.
\end{proof}
Using the lemma, we find a simple description of surjective LV morphisms, when the target system is irreducible.
\begin{prop}\label{prp:normal_form}
  Let $\phi:(\bbF[S],\pi_A,H_S)\to (\bbF[S'],\pi_{A'},H_{S'})$ be a surjective LV morphism, where the target system
  is supposed irreducible.
  \begin{enumerate}
    \item
      There exists for any $u\in S'$ a (unique) non-empty subset $S_u$ of $S$, such that $\phi:\bbF[S]\to\bbF[S']$
      is given by
      \begin{equation}\label{eq:normal_form}
        \phi\(\a_s\)_{s\in S}=\(\sum_{s\in S_u}\a_s\)_{u\in S'}\;;
      \end{equation}%
      the subsets $(S_u)_{u\in S'}$ form a partition of $S$; in particular, if $u\neq v\in S'$ then $S_u\cap
      S_v=\emptyset$.
    \item For any $u\neq v\in S'$, one has $a_{s,t}=a'_{u,v}$ for all $s\in S_u$ and $t\in S_v$; in particular,
      $a_{s,t}$ is independent of $s\in S_u$ and of $t\in S_v$.
  \end{enumerate}
\end{prop}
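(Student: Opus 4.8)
The plan is to leverage Lemma~\ref{lma:pppambos} together with the Hamiltonian-preservation condition to pin down the matrix $B=(\b_{u,s})$ of $\phi$, defined as in the lemma by $\phi^*y_u=\sum_{s\in S}\b_{u,s}x_s$, and then to read off item~(2) directly from the Poisson-compatibility relation~\eqref{eq:poisson_map}.

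First I would observe that, since $\phi$ is surjective, its image equals all of $\bbF[S']$ and is in particular not contained in any coordinate hyperplane $y_u=0$; thus the standing hypothesis of Lemma~\ref{lma:pppambos} is satisfied. Because the target system is irreducible, no two distinct vertices $u,v\in S'$ satisfy $u\sim v$. The contrapositive of Lemma~\ref{lma:pppambos} then yields, for every $s\in S$ and every pair $u\neq v$, that $\b_{u,s}\b_{v,s}=0$; in other words, for each fixed column index $s$ at most one entry $\b_{u,s}$ is non-zero.

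Next I would use $\phi^*H_{S'}=H_S$. Since $H_{S'}=\sum_{u\in S'}y_u$ and $H_S=\sum_{s\in S}x_s$, pulling back gives $\sum_{u\in S'}\b_{u,s}=1$ for every $s\in S$. Combined with the previous step, exactly one $u$ has $\b_{u,s}\neq0$ for each $s$, and for that $u$ the value is $\b_{u,s}=1$. Setting $S_u:=\set{s\in S\mid\b_{u,s}\neq0}$ therefore produces a partition of $S$ (disjointness from the ``at most one'' property, covering from the normalization $\sum_u\b_{u,s}=1$), and $\phi^*y_u=\sum_{s\in S_u}x_s$, which is exactly the formula~\eqref{eq:normal_form} in vector form. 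Non-emptiness of each $S_u$ follows once more from surjectivity: were $S_u=\emptyset$ for some $u$, then $\phi^*y_u=0$, so that $\Im\phi$ would lie in the coordinate hyperplane $y_u=0$, contradicting surjectivity. This settles item~(1), uniqueness being clear since $S_u$ is forced to be the support of the $u$-th row of $B$.

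For item~(2) I would feed the normal form back into the Poisson relation~\eqref{eq:poisson_map}. Fix $u\neq v$ in $S'$ and take $s\in S_u$, $t\in S_v$. Then $\b_{u,s}=\b_{v,t}=1$, while $\b_{u,t}=0$ (as $t\in S_v$ and $v\neq u$) and $\b_{v,s}=0$ (as $s\in S_u$ and $u\neq v$). Substituting into~\eqref{eq:poisson_map} collapses the second summand and leaves $a'_{u,v}-a_{s,t}=0$, that is $a_{s,t}=a'_{u,v}$, independently of the chosen representatives $s\in S_u$ and $t\in S_v$. I do not expect any serious obstacle here: the heavy lifting is done by Lemma~\ref{lma:pppambos}, and the remaining arguments are bookkeeping. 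The one point demanding care is the logical combination in the third paragraph, namely reconciling the ``at most one non-zero entry per column'' constraint with the normalization $\sum_u\b_{u,s}=1$, since this is precisely what turns $B$ into the incidence matrix of a partition rather than merely a matrix with pairwise disjoint row supports.
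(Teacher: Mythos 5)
Your proposal is correct and follows essentially the same route as the paper's proof: Lemma~\ref{lma:pppambos} plus irreducibility of the target give the column-disjointness of the supports $S_u$, the condition $\phi^*H_{S'}=H_S$ forces each nonzero $\b_{u,s}$ to equal $1$ and shows every $s$ is covered, and surjectivity gives non-emptiness of each $S_u$. For item~(2) the paper recomputes $\pb{\phi^*y_u,\phi^*y_v}_{A}=\phi^*\pb{y_u,y_v}_{A'}$ directly from the normal form, which is exactly your substitution of $\b_{u,s}=\b_{v,t}=1$, $\b_{u,t}=\b_{v,s}=0$ into the coefficient relation~\eqref{eq:poisson_map}, so the two arguments coincide.
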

\begin{proof}
In order to prove item (1), we apply Lemma \ref{lma:pppambos} in case the target system is irreducible, i.e., the
graph $(S',A')$ is irreducible. Then $u\sim v$ for $u,v\in S'$ implies that $u=v$, so that by the lemma,
$\b_{u,s}\b_{v,s}=0$ for all $s\in S$ and all $u\neq{}v\in{S'}$. If we define $S_u:=\set{s\in S\mid \b_{u,s}\neq0}$
for $u\in S'$ then this implies that if $u\neq v\in S'$ then $S_u\cap S_v=\emptyset$. It follows from the
definition of $S_u$ that $\phi^*y_u=\sum_{s\in S_u}\b_{u,s}x_s,$ where the scalars $\b_{u,s}$, with $s\in S_u$, are
non-zero. Since every $s\in S$ appears in at most one $S_u$, we can deduce from
\begin{equation*}
  \sum_{s\in S}x_s=H_S=\phi^*H_{S'}=\sum_{u\in S'}\phi^*y_u=\sum_{u\in S'}\sum_{s\in S_u}\b_{u,s}x_s\;
\end{equation*}%
that all $\b_{u,s}$ with $s\in S_u$ are equal to $1$, and that every element $s$ of $S$ belongs to one of the
$S_u$. It follows that
\begin{equation}\label{eq:phi_star}
  \phi^*y_u=\sum_{s\in S_u}x_s\;,
\end{equation}
which is equivalent to (\ref{eq:normal_form}). Moreover, the surjectivity of $\phi$ implies that each subset $S_u$
is non-empty, so that $(S_u)_{u\in{}S'}$ is a partition of $S$. This proves item (1).

In order to prove item (2), let $u\neq v\in S'$. Then, in view of (\ref{eq:phi_star}), and since $\phi$ is a
Poisson map,
\begin{equation*}
  \sum_{s\in S_u}\sum_{t\in S_v}a_{s,t}x_sx_t=\pb{\phi^*y_u,\phi^*y_v}_{A}
  =\phi^*\pb{y_u,y_v}_{A'}=\sum_{s\in S_u}\sum_{t\in S_v}a'_{u,v}x_sx_t\;.
\end{equation*}%
Since $S_u\cap S_v=\emptyset$, this implies item (2).
\end{proof}
Before proving that surjective LV morphisms can be decloned, we prove a lemma which is an analog of the first item
of Lemma \ref{lma:graph_morphisms}.

\begin{lemma}\label{lma:LV_morphisms}
  Let $\Gamma$ and $\Gamma'$ be skew-symmetric graphs, with decloning maps $p:\Gamma\to\ul\Gamma$
  and $p':\Gamma'\to\ul{\Gamma'}$. Suppose that $\phi$ and $\psi$ are two linear maps, making the
  following diagram commutative:
  \begin{equation}\label{dia:LV_lma}
    \begin{tikzcd}[row sep=5ex, column sep=7ex]
      \LV(\Gamma)\arrow{r} {\phi}\arrow[swap]{d} {\LV(p)}&\LV(\Gamma')\arrow{d}{\LV(p')}\\
            {\LV(\ul\Gamma)}\arrow[swap]{r}{\psi}&{\LV(\ul{\Gamma'})}
    \end{tikzcd}
  \end{equation}
  If $\phi$ is an LV morphism, then $\psi$ is an LV morphism.
\end{lemma}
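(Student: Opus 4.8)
The plan is to deduce both defining properties of $\psi$ as an LV morphism — that it is a Poisson map and that it preserves the Hamiltonian — by pulling back along the decloning morphism $\LV(p)$ and exploiting that $\LV(p)$ is itself an LV morphism which is, moreover, surjective. Since $\psi$ is already assumed linear, only these two conditions remain to be checked. First I would record the facts to be used. By Proposition \ref{prp:decloning}, both $\LV(p)$ and $\LV(p')$ are LV morphisms (in particular Poisson maps preserving the respective Hamiltonians), and $\LV(p)$ is a surjective linear map. The latter is the crucial point: since $\LV(p):\bbF[S]\to\bbF[\ul S]$ is onto, its pullback $\LV(p)^*$ is \emph{injective} on (polynomial) functions, because $F\circ\LV(p)=G\circ\LV(p)$ forces $F=G$ on $\Im\LV(p)=\bbF[\ul S]$. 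Finally, dualizing the commuting square \eqref{dia:LV_lma} gives the identity of pullback maps
\begin{equation*}
  \LV(p)^*\circ\psi^*=\phi^*\circ\LV(p')^*\;,
\end{equation*}
which is the bridge linking $\psi$ to the hypothesis on $\phi$.

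For the Poisson property, denote by $y_{\ul u}$ the coordinates on $\bbF[\ul{S'}]$; it suffices to verify $\pb{\psi^*y_{\ul u},\psi^*y_{\ul v}}_{\ul A}=\psi^*\pb{y_{\ul u},y_{\ul v}}_{\ul{A'}}$ for all $\ul u,\ul v\in\ul{S'}$. By the injectivity of $\LV(p)^*$ it is enough to prove this identity after applying $\LV(p)^*$. On the left, since $\LV(p)$ is a Poisson map I would move $\LV(p)^*$ inside the bracket and then use the dual square to replace $\LV(p)^*\psi^*$ by $\phi^*\LV(p')^*$; this turns the left side into $\pb{\phi^*\LV(p')^*y_{\ul u},\phi^*\LV(p')^*y_{\ul v}}_A$. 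Then I would rewrite the right side in the opposite direction: applying $\LV(p)^*$ and the dual square produces $\phi^*\LV(p')^*\pb{y_{\ul u},y_{\ul v}}_{\ul{A'}}$, and using successively that $\LV(p')$ is a Poisson map and that $\phi$ is a Poisson map converts this into the very same expression $\pb{\phi^*\LV(p')^*y_{\ul u},\phi^*\LV(p')^*y_{\ul v}}_A$. Thus the two sides agree after pulling back, and injectivity of $\LV(p)^*$ yields the Poisson identity for $\psi$.

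The Hamiltonian condition $\psi^*H_{\ul{S'}}=H_{\ul S}$ is obtained by the same pull-back-and-cancel device, now without any bracket: applying $\LV(p)^*$ and using the dual square gives $\LV(p)^*\psi^*H_{\ul{S'}}=\phi^*\LV(p')^*H_{\ul{S'}}$, which equals $\phi^*H_{S'}=H_S=\LV(p)^*H_{\ul S}$ because $\LV(p')$, $\phi$ and $\LV(p)$ all preserve Hamiltonians. One more appeal to the injectivity of $\LV(p)^*$ gives $\psi^*H_{\ul{S'}}=H_{\ul S}$, completing the proof that $\psi$ is an LV morphism.

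The only real subtlety — and the step I would check most carefully — is the injectivity of $\LV(p)^*$, i.e. that surjectivity of the \emph{linear} map $\LV(p)$ genuinely allows one to cancel it from an identity of polynomial functions; this is elementary but is precisely where the hypothesis that $p$, and hence $\LV(p)$, is surjective enters, and it is what makes the whole argument go through. Everything else is a formal diagram chase relying only on the fact that $\LV(p)$, $\LV(p')$ and $\phi$ are Poisson maps preserving Hamiltonians.
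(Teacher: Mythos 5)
Your proof is correct and takes essentially the same route as the paper's: the paper simply names the composite $\varphi:=\LV(p')\circ\phi=\psi\circ\LV(p)$, notes it is an LV morphism, and compares $\LV(p)^*\psi^*\pb{F,G}_{\ul{A'}}$ with $\LV(p)^*\pb{\psi^*F,\psi^*G}_{\ul{A}}$ before cancelling the injective pullback $\LV(p)^*$ --- exactly your pull-back-and-cancel argument. The only cosmetic differences are that you unfold the Poisson property of the composite into separate appeals to $\LV(p')$ and $\phi$, and that you verify the identities on coordinate functions rather than arbitrary functions, both of which are harmless and consistent with the paper's own conventions.
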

\begin{proof}
Suppose that $\phi$ is an LV morphism and consider the map $\varphi:=\LV(p')\circ\phi=\psi\circ\LV(p)$, which is an
LV morphism.  For any functions $F,G$ on $\LV(\ul{\Gamma'})$ (i.e., on $\bbF[\ul{S'}]$, with
$\ul{\Gamma'}=(\ul{S'},\ul{A'})$),
\begin{align*}
  \LV(p)^*\psi^*\pb{F,G}_{\ul{A'}}&=\varphi^*\pb{F,G}_{\ul{A'}}=\pb{\varphi^*F,\varphi^*G}_{A}\;;\\
  \LV(p)^*\pb{\psi^*F,\psi^*G}_{\ul{A}}&=\pb{\LV(p)^*\psi^*F,\LV(p)^*\psi^*G}_{A}=\pb{\varphi^*F,\varphi^*G}_{A}\;,
\end{align*}
where we have used that $\varphi$ and $\LV(p)$ are Poisson morphisms. Since $\LV(p)$ is surjective, $\LV(p)^*$ is
injective and we can conclude that $\psi^*\pb{F,G}_{\ul{A'}}=\pb{\psi^*F,\psi^*G}_{\ul{A}}$. This shows that $\psi$
is a Poisson map. Similarly, since $\varphi$ and $\LV(p)$ preserve the respective Hamiltonians,
\begin{equation*}
  \LV(p)^*\psi^* H_{\ul{S'}}=\varphi^*H_{\ul{S'}}=H_S=\LV(p)^*H_{\ul S}\;,
\end{equation*}%
we may conclude from the injectivity of $\LV(p)^*$ that $\psi^*H_{\ul{S'}}=H_{\ul S}$.
This shows that $\psi$ is an LV morphism.
\end{proof}
It is easy to construct a counterexample for the inverse implication of the previous lemma, though the analogous
property for graphs is an equivalence (see Lemma \ref{lma:graph_morphisms}).

We now show that surjective LV morphisms can be decloned, just like surjective graph morphisms (Proposition
\ref{prp:graph_decloning}). Notice that, again, weighted LV systems are not considered here, because we don't have a
notion of morphism between such systems.
\begin{prop}\label{prp:LV_decloning}
  Suppose that $\phi:\LV(\Gamma)\to\LV(\Gamma')$ is a surjective LV morphism. Denote by $\ul\Gamma$ and
  $\ul{\Gamma'}$ the decloned graphs of $\Gamma$ and~$\Gamma'$.
  \begin{enumerate}
    \item The LV morphism $\phi$ induces a unique LV morphism $\ul\phi:\LV(\ul{\Gamma})\to\LV(\ul{\Gamma'})$ such that
      the following diagram of (surjective) LV morphisms is commutative:
        \begin{equation}\label{dia:LV}
          \begin{tikzcd}[row sep=5ex, column sep=7ex]
            \LV(\Gamma)\arrow{r} {\phi}\arrow[swap]{d} {\LV(p)}&\LV(\Gamma')\arrow{d}{\LV(p')}\\
             \LV(\ul\Gamma)\arrow[swap]{r}{\ul\phi}& \LV(\ul{\Gamma'})
          \end{tikzcd}
        \end{equation}
  \item If $\phi':\LV(\Gamma')\to\LV(\Gamma'')$ is another surjective LV morphism, then
    $\ul{\phi'\circ\phi}=\ul{\phi'}\circ\ul\phi$; also,
    $\ul{\Id_{\LV(\Gamma)}}=\Id_{\LV(\ul\Gamma)}$.
  \end{enumerate}
\end{prop}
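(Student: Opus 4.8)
The plan is to imitate the proof of Proposition \ref{prp:graph_decloning}, but with the combinatorial ingredients replaced by their Poisson-geometric counterparts, relying on the normal form of Proposition \ref{prp:normal_form} together with Lemma \ref{lma:LV_morphisms}. First I would set $\varphi:=\LV(p')\circ\phi:\LV(\Gamma)\to\LV(\ul{\Gamma'})$. Since $\phi$ is surjective by hypothesis and $\LV(p')$ is a surjective submersion (Proposition \ref{prp:decloning}), $\varphi$ is a surjective LV morphism; moreover its target $\LV(\ul{\Gamma'})$ is irreducible, because $\ul{\Gamma'}$ is irreducible by construction of decloning. This places $\varphi$ exactly in the hypotheses of Proposition \ref{prp:normal_form}, which supplies a partition $(S_{\ul u})_{\ul u\in\ul{S'}}$ of $S$ with $\varphi^*y_{\ul u}=\sum_{s\in S_{\ul u}}x_s$. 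Evaluating on basis vectors, this says that $\varphi$ carries each $s\in S$ to the basis vector $\ul u\in\ul{S'}$ determined by the unique part $S_{\ul u}$ containing $s$.

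The crux, which I expect to be the \emph{main obstacle}, is to show that $\varphi$ descends along $\LV(p)$, i.e., that $\varphi(s)=\varphi(t)$ whenever $s\sim t$ in $\Gamma$; since by formula \eqref{eq:LV(p)} the map $\LV(p)$ identifies precisely the basis vectors $s,t$ with $s\sim t$, this is exactly the condition needed to factor $\varphi$ as $\ul\phi\circ\LV(p)$ for a well-defined linear map $\ul\phi:\LV(\ul\Gamma)\to\LV(\ul{\Gamma'})$. I would argue by contradiction: suppose $s\sim t$ but $s\in S_{\ul u}$ and $t\in S_{\ul v}$ with $\ul u\neq\ul v$. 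Item (2) of Proposition \ref{prp:normal_form} then computes the cross-interactions $a_{s,r}=\ul a'_{\ul u,\ul w}$ and $a_{t,r}=\ul a'_{\ul v,\ul w}$ for $r$ ranging over a part $S_{\ul w}$ distinct from the relevant one, while $s\sim t$ forces $a_{s,r}=a_{t,r}$ for all $r\in S$. Combining these gives $\ul a'_{\ul u,\ul w}=\ul a'_{\ul v,\ul w}$ for every $\ul w\neq\ul u,\ul v$; the remaining cases $\ul w=\ul u$ and $\ul w=\ul v$ are handled by skew-symmetry together with the equality $\ul a'_{\ul u,\ul v}=a_{s,t}=a_{t,t}=0$ (the last step again using $s\sim t$). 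Hence $\ul a'_{\ul u,\ul w}=\ul a'_{\ul v,\ul w}$ for all $\ul w\in\ul{S'}$, i.e., $\ul u\sim\ul v$ in $\ul{\Gamma'}$, contradicting the irreducibility of $\ul{\Gamma'}$.

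Once $\ul\phi$ is constructed as the linear map with $\ul\phi\circ\LV(p)=\varphi=\LV(p')\circ\phi$, the diagram \eqref{dia:LV} commutes by construction, and Lemma \ref{lma:LV_morphisms} (applied with $\psi=\ul\phi$) promotes $\ul\phi$ from a mere linear map to an LV morphism. Uniqueness is immediate: any two maps making \eqref{dia:LV} commute agree after precomposition with the surjection $\LV(p)$, hence coincide. This establishes item (1).

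Item (2) then follows purely formally from this uniqueness, exactly as in Proposition \ref{prp:graph_decloning}. For the composition, I would observe that $\ul{\phi'}\circ\ul\phi$ satisfies $(\ul{\phi'}\circ\ul\phi)\circ\LV(p)=\ul{\phi'}\circ\LV(p')\circ\phi=\LV(p'')\circ\phi'\circ\phi$, so it fits into the defining commutative square for $\ul{\phi'\circ\phi}$ and the uniqueness in item (1) forces $\ul{\phi'\circ\phi}=\ul{\phi'}\circ\ul\phi$. Likewise $\Id_{\LV(\ul\Gamma)}$ manifestly makes the square for $\Id_{\LV(\Gamma)}$ commute, whence $\ul{\Id_{\LV(\Gamma)}}=\Id_{\LV(\ul\Gamma)}$.
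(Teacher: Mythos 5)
Your proposal is correct and follows essentially the same route as the paper's own proof: the same composite $\varphi=\LV(p')\circ\phi$, the same appeal to Proposition \ref{prp:normal_form} and contradiction argument showing that $s\sim t$ forces $\ul u=\ul v$ (including the identical treatment of the cases $\ul w=\ul u,\ul v$ via $\ul a'_{\ul u,\ul v}=a_{s,t}=0$), the same promotion of $\ul\phi$ to an LV morphism via Lemma \ref{lma:LV_morphisms}, and the same uniqueness-based derivation of item (2). The only cosmetic difference is that you obtain $\ul\phi$ by the abstract factorization criterion (kernel of $\LV(p)$ contained in kernel of $\varphi$), whereas the paper writes $\ul\phi$ explicitly and checks commutativity by direct computation.
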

\begin{proof}
Consider again the map $\varphi:=\LV(p')\circ\phi$, which is a surjective LV morphism with irreducible target. By
item (1) in Proposition \ref{prp:normal_form}, $\varphi(\a_s)_{s\in S}=\(\sum_{s\in S_{\ul u}}\a_s\)_{\ul
  u\in\ul{S'}}$, for all $(\a_s)_{s\in S}\in\bbF[S]$ and for some partition $(S_{\ul u})_{\ul u\in{\ul{S'}}}$ of
$S$.  Suppose that $s\sim t$ with $s\in S_{\ul u}$ and $t\in S_{\ul v}$. We show by contradiction that $\ul u=\ul
v$.  Assume therefore that $\ul u\neq\ul v$, still assuming that $s\sim t$.  In view of item (2) of Proposition
\ref{prp:normal_form}, $a_{q,r}=\ul a'_{\ul u,\ul v}$, independently of $q\in S_{\ul u}$ and $r\in S_{\ul v}$;
since $a_{s,t}=0$ (because $s\sim t$), $\ul a'_{\ul u,\ul v}=0$. This shows that $\ul a'_{\ul u,\ul w}=\ul a'_{\ul
  v,\ul w}$ when $\ul w=\ul u$ or $\ul w=\ul v$. For $\ul w\in\ul{S'}$ different from $\ul u$ and $\ul v$ and $r\in
S_{\ul w}$, again by the cited item, and since $s\sim t$,
\begin{equation*}
  \ul a'_{\ul u,\ul w}=a_{s,r}=a_{t,r}=\ul a'_{\ul v,\ul w}\;,
\end{equation*}%
so that $\ul a'_{\ul u,\ul w}=\ul a'_{\ul v,\ul w}$ for all $\ul w\in\ul{S'}$, which shows that $\ul
u\sim\ul{v}$. Since $\ul{\Gamma'}$ is irreducible, this implies that $\ul u=\ul v$, which contradicts
the assumption that $\ul u\neq\ul v$. Therefore $\ul u=\ul v$. The partition of $S$, defined by $\sim$,
is therefore a refinement of the partition $(S_{\ul{u}})_{\ul u\in \ul{S'}}$.
This allows us to define a map
\begin{equation}\label{eq:def_varphi_cloned}
  \begin{array}{lcccl}
    \ul\phi&:&\bbF[\ul S]&\to&\bbF[\ul{S'}]\\
      & &(\gamma_{\ul s})_{\ul s\in \ul S} &\mapsto&
         \left(\sum_{\ul s\subset S_{\ul u}}\gamma_{\ul s}\right)_{\ul u\in \ul{S'}}\;.
  \end{array}
\end{equation}
%
We show that $\ul\phi$ makes (\ref{dia:LV}) into a commutative diagram. Let $(\a_s)_{s\in
  S}\in\bbF[S]$. Then, by (\ref{eq:LV(p)}), the definition (\ref{eq:def_varphi_cloned}) of $\ul\phi$ and
(\ref{eq:normal_form}), in that order, we get
\begin{align*}
  \(\ul\phi\circ\LV(p)\)(\a_s)_{s\in S}&=\ul\phi\(\sum_{s\in\ul s}\a_s\)_{\ul s\in\ul S}
  =\(\sum_{\ul s\subset S_{\ul u}}\sum_{s\in\ul s}\a_s\)_{\ul u\in\ul {S'}}\\
  &=\(\sum_{s\in S_{\ul u}}\a_s\)_{\ul u\in\ul {S'}}=\varphi(\a_s)_{s\in S}=(\LV(p')\circ\phi)(\a_s)_{s\in S}\;.
\end{align*}
It follows that $\ul\phi\circ\LV(p)=\LV(p')\circ\phi$, as was to be shown. In view of Lemma~\ref{lma:LV_morphisms},
$\ul\phi$ is an LV morphism. This proves item~(1). Item (2) follows at once from it by the uniqueness of $\ul\phi$.
\end{proof}
The linear map $\ul\phi$, induced by $\phi$, is called its \emph{decloned LV morphism}. It is clear that the above
proposition says that decloning of LV systems is a functor, just like decloning of graphs. We will come back to
this in Section~\ref{par:LV_cat}.

\subsection{Isomorphisms of Lotka-Volterra systems}\label{par:LV_isomorphisms}
We now consider LV isomorphisms. We have already seen in Figure \ref{fig:cex} that not all morphisms of LV systems
are induced by graph morphisms, even when the underlying graphs are irreducible. However, isomorphisms (and in
particular automorphisms) of irreducible LV systems are induced by graph morphisms, as we show in the following
proposition:
\begin{prop}\label{prp:iso_irr}
  Let $\Gamma$ and $\Gamma'$ be two skew-symmetric graphs, with $\Gamma'$ assumed irreducible.  If
  $\phi:\LV(\Gamma)\to\LV(\Gamma')$ is an LV isomorphism, then $\phi=\LV(\Phi)$ for a unique graph isomorphism
  $\Phi:\Gamma\to\Gamma'$.
%
%
\end{prop}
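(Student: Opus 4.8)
The plan is to derive everything from the normal form for surjective LV morphisms onto irreducible targets, established in Proposition \ref{prp:normal_form}. Since $\phi$ is an LV isomorphism it is in particular a surjective LV morphism, and $\Gamma'=(S',A')$ is irreducible by hypothesis, so that proposition applies directly. It furnishes, for each $u\in S'$, a (unique) non-empty subset $S_u\subseteq S$, with the $(S_u)_{u\in S'}$ forming a partition of $S$, such that $\phi^*y_u=\sum_{s\in S_u}x_s$; equivalently $\phi$ is given by \eqref{eq:normal_form}.

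First I would use bijectivity to pin down the partition. Since $\phi$ is an isomorphism, $\phi:\bbF[S]\to\bbF[S']$ is a linear bijection, forcing $\vert S\vert=\vert S'\vert$. A partition of the $\vert S\vert$-element set $S$ into $\vert S'\vert=\vert S\vert$ non-empty parts must consist entirely of singletons; hence for each $u\in S'$ there is a unique $s_u\in S$ with $S_u=\set{s_u}$, and $u\mapsto s_u$ is a bijection from $S'$ onto $S$. Let $\Phi:S\to S'$ be its inverse, so that $\Phi(s)=u$ exactly when $s\in S_u$. Comparing the normal form $\phi^*y_u=x_{s_u}$ with the formula \eqref{eq:lin_functions} for the linear extension, which gives $\LV(\Phi)^*y_u=\sum_{\Phi(s)=u}x_s=x_{s_u}$, I would conclude $\phi=\LV(\Phi)$.

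It remains to check that $\Phi$ is a graph isomorphism. Bijectivity of $\Phi$ is already established, so by the remark following Definition \ref{def:graph_morphism} it suffices to verify that $\Phi$ is a graph morphism, i.e.\ that $a'_{\Phi(s),\Phi(t)}=a_{s,t}$ for all $s,t\in S$. When $\Phi(s)\neq\Phi(t)$ this is exactly item (2) of Proposition \ref{prp:normal_form}, applied with $u=\Phi(s)$ and $v=\Phi(t)$ (so that $s\in S_u$ and $t\in S_v$); when $\Phi(s)=\Phi(t)$, injectivity of $\Phi$ gives $s=t$, and both sides vanish by skew-symmetry. Finally, uniqueness is immediate: a graph morphism $\Phi$ is determined by its linear extension, since $\LV(\Phi)(s)=\Phi(s)$ for every basis vector $s\in S$, so $\phi=\LV(\Phi)=\LV(\Phi')$ forces $\Phi=\Phi'$.

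The only genuinely substantive step is the dimension count collapsing the partition to singletons; everything else is bookkeeping on the normal form. The step to watch is the graph-morphism verification, where the diagonal case $\Phi(s)=\Phi(t)$ is not covered by item (2) of Proposition \ref{prp:normal_form} and must be handled separately via skew-symmetry.
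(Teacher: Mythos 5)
Your proof is correct and follows essentially the same route as the paper: both rest on Proposition \ref{prp:normal_form} together with the dimension count that forces the partition $(S_u)_{u\in S'}$ into singletons, after which $\phi$ is the linear extension of the resulting bijection. The only cosmetic difference is that you invoke item (2) of Proposition \ref{prp:normal_form} for the graph-morphism check (treating the diagonal case by skew-symmetry), whereas the paper simply redoes the bracket computation $\pb{\phi^*y_u,\phi^*y_v}_{A}=\phi^*\pb{y_u,y_v}_{A'}$ directly; both are sound.
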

\begin{proof}
Suppose that $\phi:\LV(\Gamma)\to\LV(\Gamma')$ is an LV isomorphism. According to Proposition
\ref{prp:normal_form}, $\phi$ is of the form $\phi\(\a_s\)_{s\in{S}}=\(\sum_{s\in{S_u}}\a_s\)_{u\in{S'}}$, where
the subsets $S_u$ form a partition of $S$, indexed by $S'$. Since $\#S=\#S'$ every part $S_u$ of $S'$ must be a
singleton, and we can define a bijection $\Phi:\Gamma\to\Gamma'$ by letting $S_u=\set{\Phi(u)}$. Then $\phi$ takes
the simple form $\phi\(\a_s\)_{s\in{S}}=\(\a_{\Phi(u)}\)_{u\in{S'}}$, i.e., $\phi$ simply permutes the coordinates,
as dictated by $\Phi$ and $\phi^*y_u=x_{\Phi(u)}$ for all $u\in S'$. Let
$u,v\in S'$ and denote $s:=\Phi(u)$ and $t:=\Phi(v)$. Then
\begin{align*}
  a_{s,t}x_sx_t&=\pb{x_s,x_t}_A=\pb{\phi^*y_u,\phi^*y_v}_{A}=\phi^*(\pb{y_u,y_v}_{A'})=a'_{u,v}x_sx_t\;,
\end{align*}
which shows that $\Phi$ is a graph morphism; $\phi$ is the linear extension of $\Phi$, hence
$\phi=\LV(\Phi)$. Since the partition in subsets $S_u$ is uniquely determined by $\phi$, the map~$\Phi$ is unique.
\end{proof}
Thanks to Proposition \ref{prp:iso_irr}, we can define the notion of a weighted LV isomorphism between
irreducible LV systems, and show that the graph underlying an LV system is unique, up to isomorphism.
\begin{defn}\label{def:iso_weighted}
  Let $(\Gamma,\wght)$ and $(\Gamma',\wght')$ be weighted irreducible graphs. Let $\psi:\LV(\Gamma)\to\LV(\Gamma')$
  be an LV isomorphism and let $\Psi:\Gamma\to\Gamma'$ denote the graph isomorphism for which
  $\psi=\LV(\Psi)$. Then $\psi:(\LV(\Gamma),\wght)\to(\LV(\Gamma'),\wght')$ is said to be an \emph{isomorphism of
    weighted Lotka-Volterra systems}, or simply a \emph{weighted LV isomorphism} if $\Psi^*\wght'=\wght$, i.e.,
  when $\Psi:(\Gamma,\wght)\to(\Gamma',\wght')$ is a weighted graph isomorphism.
\end{defn}
\begin{prop}\label{prop:iso}
  Let $\Gamma$ and $\Gamma'$ be skew-symmetric graphs. Then $\Gamma\simeq\Gamma'$ if and only if
  $\LV(\Gamma)\simeq\LV(\Gamma')$.
\end{prop}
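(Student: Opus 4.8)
The forward implication is immediate: if $\Phi:\Gamma\simeq\Gamma'$ is a graph isomorphism, then $\LV(\Phi)$ is an LV isomorphism by item~(3) of Proposition~\ref{prp:functor}, so $\LV(\Gamma)\simeq\LV(\Gamma')$. The whole difficulty lies in the converse, so assume $\phi:\LV(\Gamma)\to\LV(\Gamma')$ is an LV isomorphism. The plan is to manufacture a \emph{weighted} graph isomorphism $(\ul\Gamma,\wght_\Gamma)\simeq(\ul{\Gamma'},\wght_{\Gamma'})$ and then invoke item~(2) of Proposition~\ref{prp:graph_autom_decloning} to conclude $\Gamma\simeq\Gamma'$.

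First I would declone $\phi$. Being an isomorphism, $\phi$ is a surjective LV morphism, so Proposition~\ref{prp:LV_decloning} produces a decloned morphism $\ul\phi:\LV(\ul\Gamma)\to\LV(\ul{\Gamma'})$; applying the functoriality of item~(2) of that proposition to $\phi$ and $\phi^{-1}$ gives $\ul\phi\circ\ul{\phi^{-1}}=\Id$ and $\ul{\phi^{-1}}\circ\ul\phi=\Id$, so $\ul\phi$ is itself an LV isomorphism. Since $\ul{\Gamma'}$ is irreducible, Proposition~\ref{prp:iso_irr} applies and yields a unique graph isomorphism $\ul\Phi:\ul\Gamma\to\ul{\Gamma'}$ with $\ul\phi=\LV(\ul\Phi)$. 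Commutativity of the decloning square \eqref{dia:LV} then reads $\LV(\ul\Phi\circ p)=\LV(p')\circ\phi$, and by \eqref{eq:lin_functions} this identifies, for each $\ul u\in\ul{S'}$, the function $\bigl(\LV(p')\circ\phi\bigr)^*y_{\ul u}=\sum_{s\in\ul\Phi^{-1}(\ul u)}x_s$, where $\ul\Phi^{-1}(\ul u)$ is a single $\sim$-class of $S$.

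The main obstacle is to show that $\ul\Phi$ preserves weights, i.e.\ $\wght_\Gamma(\ul s)=\wght_{\Gamma'}(\ul\Phi(\ul s))$ for every $\ul s\in\ul S$. This is genuinely necessary (for instance the two clonings of a single arc $s\to t$ with weight vectors $(2,1)$ and $(1,2)$ share the decloned graph but have non-isomorphic LV systems), and it does not follow from the normal form of Proposition~\ref{prp:normal_form} alone: that normal form records only the blocks $\ul\Phi^{-1}(\ul u)$ and not the target weights, so one must really use that $\phi$ is invertible as a Poisson map. To capture the weights intrinsically I would introduce, for each class $\ul s$, the \emph{clone subspace} $W_{\ul s}:=\Span\set{x_s\mid s\in\ul s}$, of dimension $\wght_\Gamma(\ul s)$, and establish the Poisson-intrinsic characterization: a nonzero linear function $\ell$ satisfies ``$\ell$ divides $\pb{\ell,m}_A$ for every linear $m$'' if and only if the support of $\ell$ lies in a single $\sim$-class; consequently the $W_{\ul s}$ are exactly the maximal linear subspaces all of whose nonzero elements have this divisibility property. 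The verification is a short unique-factorization argument: if the support of $\ell$ lies in $\ul s$ then $\pb{\ell,x_t}_A=\alpha_t\,x_t\,\ell$ with $\alpha_t:=a_{s,t}$ (well defined for $s\in\ul s$), while conversely divisibility against every $x_t$ forces the rows $a_{s,\cdot}$ with $s$ in the support of $\ell$ to coincide, i.e.\ that support to be a single class.

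Granting this, the conclusion is quick. The pullback $\phi^*$ is an algebra isomorphism carrying linear functions bijectively onto linear functions and is a Poisson map, so it preserves the divisibility property; hence it maps the family $\set{W'_{\ul u}}_{\ul u\in\ul{S'}}$ of maximal clone subspaces of $\Gamma'$ bijectively onto the family $\set{W_{\ul s}}_{\ul s\in\ul S}$ of $\Gamma$, preserving dimensions. To match this bijection with $\ul\Phi$, fix $\ul u\in\ul{S'}$ and a clone $u\in\ul u$; then $\phi^*y_u$ lies in $\phi^*(W'_{\ul u})$, which is a single clone subspace $W_{\ul s}$. Summing over the clones $u\in\ul u$ and using $\LV(p')\circ\phi=\LV(\ul\Phi\circ p)$ gives $\sum_{u\in\ul u}\phi^*y_u=\sum_{s\in\ul\Phi^{-1}(\ul u)}x_s$, a function supported in $\ul s$, which forces the class $\ul s$ to equal $\ul\Phi^{-1}(\ul u)$. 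Therefore $\wght_{\Gamma'}(\ul u)=\dim W'_{\ul u}=\dim W_{\ul\Phi^{-1}(\ul u)}=\wght_\Gamma(\ul\Phi^{-1}(\ul u))$, so $\ul\Phi$ is a weighted graph isomorphism. Item~(2) of Proposition~\ref{prp:graph_autom_decloning} then gives $\Gamma\simeq\Gamma'$, completing the converse.
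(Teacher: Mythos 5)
Your proof is correct, and the mechanism you use for the crucial weight-preservation step is genuinely different from the paper's. The paper also ends by invoking item (2) of Proposition \ref{prp:graph_autom_decloning}, but it obtains the weighted isomorphism by a support-and-counting argument: Lemma \ref{lma:pppambos} shows that the pullbacks $\phi^*y_u$, $\phi^*y_v$ of coordinates of inequivalent vertices involve disjoint sets of coordinates $x_s$; invertibility of $\phi$ forces the number of coordinates appearing in the pullbacks of the $\wght_{\Gamma'}(\ul u)$ clones of $\ul u$ to be exactly $\wght_{\Gamma'}(\ul u)$; involutivity of those pullbacks combined with item (2) of Proposition \ref{prp:normal_form} shows that the corresponding vertices of $S$ are pairwise equivalent, giving $\wght_{\Gamma'}(\ul u)\leqs\wght_\Gamma(\ul s)$, and equality follows by summing over all classes; the resulting bijection $\Psi$ is then shown by hand to underlie $\ul\phi$. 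You instead introduce an intrinsic invariant: the maximal linear subspaces all of whose nonzero elements $\ell$ satisfy $\ell\mid\pb{\ell,m}_A$ for every linear $m$ are exactly the clone subspaces $W_{\ul s}$ (a unique-factorization argument), so any linear Poisson isomorphism must permute these subspaces while preserving dimension, and matching this permutation against $\ul\Phi$ via the commutative square gives weight preservation directly. Your route buys self-containedness and conceptual transparency --- your divisibility lemma replaces Lemma \ref{lma:pppambos}, and producing $\ul\Phi$ by applying the functoriality of Proposition \ref{prp:LV_decloning} to $\phi$ and $\phi^{-1}$ and then citing Proposition \ref{prp:iso_irr} is cleaner than the paper's construction of $\Psi$; the paper's route is shorter on the page because it reuses Lemma \ref{lma:pppambos} and Proposition \ref{prp:normal_form}, which are already proved. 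The only points you leave as sketches are routine: the singleton-support case in the converse of your divisibility lemma (where $\ell$ is proportional to some $x_t$), and the observation that a subspace whose nonzero elements each have single-class support must lie in a single $W_{\ul s}$, because elements supported in distinct classes have disjoint supports and so their sum cannot have single-class support.
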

\begin{proof}
According to Proposition \ref{prp:functor}, we only need to show the inverse implication. As before, we write
$\Gamma=(S,A)$ and $\Gamma'=(S',A')$, and we denote their weighted decloned graphs by $(\ul\Gamma,\wght_\Gamma)$
and $(\ul{\Gamma'},\wght_{\Gamma'})$. The coordinates on $\bbF[S]$ and on $\bbF[S']$ are respectively denoted by
$x_s$, with $s\in S$ and $y_u$, with $u\in S'$. Suppose that $\phi:\LV(\Gamma)\to\LV(\Gamma')$ is an LV
isomorphism.  We know from Lemma \ref{lma:pppambos} that if $u, v\in S'$, with $u\not\sim v$, then the functions
$x_s$ of which $\phi^*y_u$ and $\phi^*y_v$ are a linear combination are all different. Let $\ul u\in \ul{S'}$ and
denote its clones by $u_1,u_2,\dots,u_{\wght_{\Gamma'}({\ul u})}$. Then
$\phi^*y_{u_1},\phi^*y_{u_2},\dots,\phi^*y_{u_{\wght_{\Gamma'}({\ul u})}}$ depend only on a certain set of coordinate
functions $x_s$; since $\phi$ is an isomorphism, their number is also $\wght_{\Gamma'}({\ul u})$, so say they are
$x_{s_1},x_{s_2},\dots,x_{s_{\wght_{\Gamma'}({\ul u})}}$.  The notation suggests that the vertices $s_i$ are all
equivalent, so that $\wght_{\Gamma'}(\ul u)\leqs\wght_\Gamma({\ul s})$ and this is the case. Indeed, since the
above functions $y_{u_i}$ are in involution, the same is true for the functions $x_{s_i}$, and the claim follows
from item (2) in Proposition \ref{prp:normal_form}. We therefore get a bijection $\Psi:\ul S\to\ul{S'}$ such that
for any $\ul s\in\ul S$, only the functions $y_u$, with $u$ being a clone of $\Psi(\ul s)$, depend on the functions
$x_s$, with~$s$ a clone of $\ul s$, and $\Psi$ satisfies
$\wght_{\Gamma'}({\Psi(\ul{s})})\leqs\wght_{\Gamma}({\ul{s}})$ for all $\ul s\in\ul S$; since
$\vert\wght_{\Gamma'}\vert=\vert\wght_{\Gamma}\vert$ (because $\phi$ is an isomorphism), we must have equality,
$\wght_{\Gamma'}({\Psi(\ul s)})=\wght_{\Gamma}({\ul s})$ for all $\ul s\in\ul S$. The isomorphism~$\ul\phi$, induced
by~$\phi$, as given by Proposition \ref{prp:LV_decloning}, is just the linear extension of $\Psi$. Therefore,
$\ul\phi:(\LV(\ul\Gamma),\wght_\Gamma)\to(\LV(\ul{\Gamma'}),\wght_{\Gamma'})$ is a weighted LV isomorphism, induced
by a weighted graph isomorphism $\Psi:(\ul\Gamma,\wght_\Gamma)\to(\ul{\Gamma'},\wght_{\Gamma'})$. By item (2) in
Proposition \ref{prp:graph_autom_decloning}, $\Gamma$ and $\Gamma'$ are isomorphic, as was to be shown.
\end{proof}
Combined with Proposition \ref{prp:diffeo_to_linear} and item (2) of Proposition \ref{prp:graph_autom_decloning},
we get the following result on the classification of LV systems:
\begin{thm}
  Let $\Gamma$ and $\Gamma'$ be two skew-symmetric graphs. The following are equivalent:
  \begin{enumerate}
    \item The LV systems $\LV(\Gamma)$ and $\LV(\Gamma')$ are smoothly isomorphic;
    \item The LV systems $\LV(\Gamma)$ and $\LV(\Gamma')$ are linearly isomorphic;
    \item The graphs $\Gamma$ and $\Gamma'$ are isomorphic;
    \item The weighted (irreducible) graphs $(\ul\Gamma,\wght_\Gamma)$ and $(\ul{\Gamma'},\wght_{\Gamma'})$ are isomorphic.
  \end{enumerate}
  \qed
\end{thm}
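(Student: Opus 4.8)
The plan is to close the four conditions into a short cycle of implications, almost all of which are already available from the results proved earlier in this section. Indeed, the equivalence of (2) and (3) is precisely the content of Proposition~\ref{prop:iso}, once we recall that an LV isomorphism is by definition a \emph{linear} Poisson isomorphism preserving the Hamiltonian, so that ``$\LV(\Gamma)\simeq\LV(\Gamma')$'' means exactly that the two systems are linearly isomorphic. Likewise, the equivalence of (3) and (4) is item~(2) of Proposition~\ref{prp:graph_autom_decloning}. Thus the only genuinely new work is to link the smooth category (condition (1)) to the linear one (condition (2)).

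The implication $(2)\Rightarrow(1)$ is immediate: a linear isomorphism of LV systems is in particular a diffeomorphism preserving the Poisson structure and the Hamiltonian, hence a smooth isomorphism. For the converse $(1)\Rightarrow(2)$, I would argue as follows. Suppose $\phi:\LV(\Gamma)\to\LV(\Gamma')$ is a smooth isomorphism. Since $\phi$ is in particular a diffeomorphism $\bbF[S]\to\bbF[S']$, the dimensions agree, $\vert\Gamma\vert=\vert\Gamma'\vert=:n$; choosing orderings of $S$ and $S'$, I identify both phase spaces with $\bbF^n$ in such a way that in each case the Hamiltonian becomes $H=x_1+\dots+x_n$. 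We are then exactly in the setting of Proposition~\ref{prp:diffeo_to_linear}, whose conclusion is that the differential $\diff_0\phi$ at the origin is a linear isomorphism of LV systems. This yields (2).

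Putting these steps together gives the full chain $(1)\Rightarrow(2)\Leftrightarrow(3)\Leftrightarrow(4)$ together with $(2)\Rightarrow(1)$, hence the equivalence of all four statements. The only substantive analytic input is the passage from a smooth to a linear isomorphism, and this has already been isolated as Proposition~\ref{prp:diffeo_to_linear}; its proof, which exploits the homogeneous (quadratic) nature of the diagonal Poisson bracket to constrain the constant and linear parts of the $\phi^*x_i$, is where the real difficulty lies and has been dealt with. Everything remaining is bookkeeping, combining the classification of LV systems up to linear isomorphism (Proposition~\ref{prop:iso}) with the combinatorial description of graph isomorphism through weighted decloning (item~(2) of Proposition~\ref{prp:graph_autom_decloning}).
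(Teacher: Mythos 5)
Your proof is correct and follows essentially the same route as the paper, which derives the theorem directly by combining Proposition~\ref{prop:iso} (giving $(2)\Leftrightarrow(3)$), item~(2) of Proposition~\ref{prp:graph_autom_decloning} (giving $(3)\Leftrightarrow(4)$), and Proposition~\ref{prp:diffeo_to_linear} (giving $(1)\Rightarrow(2)$, with $(2)\Rightarrow(1)$ being trivial). Your additional remark about identifying the phase spaces $\bbF[S]$ and $\bbF[S']$ with $\bbF^n$ to match the setting of Proposition~\ref{prp:diffeo_to_linear} is a careful treatment of a notational point the paper leaves implicit.
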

The classification of LV systems, modulo smooth isomorphisms, is therefore the same as the
classification of (weighted irreducible) graphs, modulo (weighted) graph isomorphisms.

\subsection{Automorphisms of Lotka-Volterra systems}\label{par:LV_automorphisms}

We now turn to LV automorphisms. We denote, for an irreducible weighted graph $(\Gamma,\wght)$, by
$\Aut(\LV(\Gamma),\wght)$ the group of automorphisms of the weighted LV system $(\LV(\Gamma),\wght)$. It is a
subgroup of $\Aut(\LV(\Gamma))$. Since $\Phi^*\wght_0=\wght_0$ for any $\Phi\in\Aut(\Gamma)$, the groups
$\Aut(\LV(\Gamma),\wght_0)$ and $\Aut(\LV(\Gamma))$ are isomorphic.  Moreover, Proposition \ref{prp:iso_irr} also
leads to the following description of the automorphism group of an irreducible LV system.

\begin{prop}\label{prp:iso_irred}
  Let $(\Gamma,\wght)=(S,A)$ be a weighted graph, which is assumed irreducible.
  \begin{enumerate}
    \item If $\phi\in\Aut(\LV(\Gamma))$, then $\phi=\LV(\Phi)$ for a unique $\Phi\in\Aut(\Gamma)$.
    \item The map $\LV:\Aut(\Gamma)\to\Aut(\LV(\Gamma))$ is a group isomorphism,
      as well as its restriction $\LV_0:\Aut(\Gamma,\wght)\to\Aut(\LV(\Gamma),\wght)$.
  \end{enumerate}
\end{prop}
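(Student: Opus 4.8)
The plan is to deduce both items almost entirely from results already established for irreducible targets, chiefly Proposition \ref{prp:iso_irr} together with the functoriality recorded in Proposition \ref{prp:functor}. For item (1), I would simply invoke Proposition \ref{prp:iso_irr} in the special case $\Gamma'=\Gamma$: an element $\phi\in\Aut(\LV(\Gamma))$ is by definition an LV isomorphism $\phi\colon\LV(\Gamma)\to\LV(\Gamma)$, and since $\Gamma$ is assumed irreducible the target is irreducible, so that proposition yields a unique graph isomorphism $\Phi\colon\Gamma\to\Gamma$, i.e.\ a unique $\Phi\in\Aut(\Gamma)$, with $\phi=\LV(\Phi)$. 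This is exactly the assertion of item (1), so nothing further is needed there.

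For the first claim in item (2), I would check that $\LV$ is a well-defined group homomorphism $\Aut(\Gamma)\to\Aut(\LV(\Gamma))$ and then that it is bijective. It takes values in $\Aut(\LV(\Gamma))$ by the direct implication of item (3) of Proposition \ref{prp:functor}, and it respects composition and identities by item (2) of that same proposition. Injectivity is immediate since $\LV(\Phi)$ is the linear extension of $\Phi$, so that $\LV(\Phi)=\LV(\Phi')$ forces $\Phi=\Phi'$ on the basis $S$ (alternatively, it follows from the uniqueness in item (1)); surjectivity is precisely item (1). Hence $\LV$ is a group isomorphism.

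It then remains to handle the restriction $\LV_0$, where the only point is that $\LV$ carries $\Aut(\Gamma,\wght)$ bijectively onto $\Aut(\LV(\Gamma),\wght)$. For $\Phi\in\Aut(\Gamma)$, Definition \ref{def:iso_weighted} says that $\LV(\Phi)$ is a weighted automorphism of $(\LV(\Gamma),\wght)$ if and only if $\Phi^*\wght=\wght$, that is, if and only if $\Phi\in\Aut(\Gamma,\wght)$; thus the isomorphism $\LV$ restricts to a bijection between the two subgroups, and being the restriction of a group homomorphism it is itself a group isomorphism. The whole argument is essentially bookkeeping, so the main ``obstacle'' is merely to match the definition of $\Aut(\LV(\Gamma),\wght)$ to the weight-preservation condition on $\Phi$ via Definition \ref{def:iso_weighted} (which itself rests on item (1)), rather than attempting to re-derive the weighted condition directly.
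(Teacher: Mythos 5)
Your proposal is correct and follows essentially the same route as the paper's proof: item (1) as the special case $\Gamma'=\Gamma$ of Proposition \ref{prp:iso_irr}, the homomorphism property from items (2) and (3) of Proposition \ref{prp:functor}, bijectivity from item (1), and the restriction to $\LV_0$ via the weight-preservation condition built into Definition \ref{def:iso_weighted}. The only cosmetic difference is that you spell out injectivity and surjectivity separately where the paper simply cites item (1) for bijectivity.
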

\begin{proof}
Item (1) is a particular case of Proposition \ref{prp:iso_irr}. According to item~(3) of Proposition
\ref{prp:functor}, the restriction of the functor $\LV$ to $\Aut(\Gamma)$ (which we still denote by $\LV$) takes
values in $\Aut(\LV(\Gamma))$. Item (2) of the same proposition says that this restriction is a group
homomorphism. According to item (1) it is bijective, hence $\LV:\Aut(\Gamma)\to\Aut(\LV(\Gamma))$ is a group
isomorphism. Since the condition of preserving the weight $\wght$ is by definition the same for a weighted LV
isomorphism as for a weighted graph morphism, the isomorphism $\LV$ further restricts to a group isomorphism
$\LV_0:\Aut(\Gamma,\wght)\to\Aut(\LV(\Gamma),\wght)$.
\end{proof}
When $\Gamma$ is reducible, the group of automorphisms of $\LV(\Gamma)$ is much larger than $\Aut(\Gamma)$. It is
in fact infinite, as we show in the following proposition:

\begin{prop}\label{prp:many_auto}
  Let $(\Gamma,\wght)$ be a weighted graph, with $\Gamma=(S,A)$, and with cloned graph
  $\Gamma^\wght=(S^\wght,A^\wght)$. Suppose that $\phi$ is a vector space automorphism of~$\bbF[S^\wght]$, leaving
  for every $s\in S$ the subspace $\Span\set{s_i\mid 1\leqslant i\leqslant \wght(s)}$ invariant; suppose also that
  $\phi$ leaves the sum of the associated coordinates~$x_{s_i}$ invariant, i.e.,
  $\phi^*\sum_{i=1}^{\wght(s)}x_{s_i}=\sum_{i=1}^{\wght(s)}x_{s_i}.$ Then $\phi$ is an LV automorphism
  of~$\LV(\Gamma^\wght)$.
\end{prop}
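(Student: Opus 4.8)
The plan is to verify directly the two conditions that make a linear map $\phi$ into an LV morphism of $\LV(\Gamma^\wght)$: that $\phi$ preserves the Poisson structure $\pi_{A^\wght}$, and that $\phi^*H_{S^\wght}=H_{S^\wght}$. Since $\phi$ is already assumed to be a vector space automorphism, once these two conditions are established $\phi$ is an LV automorphism, its inverse being automatically an LV morphism as well. The first step is to record what hypothesis (1) says at the level of coordinate functions: because the decomposition $\bbF[S^\wght]=\bigoplus_{s\in S}\Span\set{s_i\mid 1\leqs i\leqs\wght(s)}$ is $\phi$-invariant, the dual map acts block-diagonally, so $\phi^*x_{s_i}$ is a linear combination of the coordinates $x_{s_k}$ attached to clones of the \emph{same} vertex $s$; I would write $\phi^*x_{s_i}=\sum_k c^s_{i,k}x_{s_k}$.

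For the Poisson condition, since $\phi$ is linear and both sides of the morphism identity are biderivations in their arguments, it suffices to check $\pb{\phi^*x_{s_i},\phi^*x_{t_j}}_{A^\wght}=\phi^*\pb{x_{s_i},x_{t_j}}_{A^\wght}$ on the generating coordinate functions. Here the decisive point is that the bracket \eqref{eq:pb_wght} has coefficient $a_{s,t}$ \emph{independent of the clone indices}. When $s=t$ both sides vanish: the right side because $a_{s,s}=0$, and the left side because $\phi^*x_{s_i}$ and $\phi^*x_{s_j}$ both lie in $\Span\set{x_{s_k}}$ and any two such functions Poisson-commute (again $a_{s,s}=0$). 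When $s\neq t$, expanding by bilinearity and using $\pb{x_{s_k},x_{t_l}}_{A^\wght}=a_{s,t}x_{s_k}x_{t_l}$ produces $\sum_{k,l}c^s_{i,k}c^t_{j,l}a_{s,t}x_{s_k}x_{t_l}$, which factors as $a_{s,t}(\phi^*x_{s_i})(\phi^*x_{t_j})=\phi^*(a_{s,t}x_{s_i}x_{t_j})$, the required identity. Thus $\phi$ is a Poisson map, and notably this uses only hypothesis (1).

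For the Hamiltonian condition, writing $H_{S^\wght}=\sum_{s\in S}\sum_{i=1}^{\wght(s)}x_{s_i}$ and applying hypothesis (2) blockwise gives $\phi^*H_{S^\wght}=\sum_{s\in S}\phi^*\bigl(\sum_{i}x_{s_i}\bigr)=\sum_{s\in S}\sum_i x_{s_i}=H_{S^\wght}$. Combining the two conditions shows $\phi$ is an LV morphism, hence an LV automorphism. I do not expect any genuine obstacle here: the computation is routine and self-contained. The one conceptual point worth isolating is that the Poisson condition is forced by hypothesis (1) alone --- precisely because cloning makes the bracket coefficient $a_{s,t}$ constant across clones, while the block structure keeps $\phi^*x_{s_i}$ within a single vertex block --- whereas hypothesis (2) is exactly what is needed, and only needed, to preserve the Hamiltonian.
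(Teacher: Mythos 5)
Your proof is correct and follows essentially the same route as the paper: write $\phi^*x_{s_i}$ as a linear combination of the coordinates in the same vertex block, use the fact that the cloned bracket coefficient $a_{s,t}$ is independent of the clone indices to factor $\pb{\phi^*x_{s_i},\phi^*x_{t_j}}_{A^\wght}=a_{s,t}(\phi^*x_{s_i})(\phi^*x_{t_j})$, and get the Hamiltonian condition blockwise from hypothesis (2). The only cosmetic differences are that you treat the $s=t$ case separately (the paper's single computation covers it since $a_{s,s}=0$) and you note explicitly that the inverse of a bijective LV morphism is an LV morphism, which the paper leaves implicit.
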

\begin{proof}
Let $\phi$ be a vector space automorphism of~$\bbF[S^\wght]$ and suppose that $\phi$ leaves for every $s\in S$ the
subspace $\Span\set{s_i\mid 1\leqslant i\leqslant \wght(s)}$ invariant. Then, for any $s\in S$ and
$i=1,2,\dots,\wght(s)$,
\begin{equation*}
  \phi^*x_{s_i}=\sum_{j=1}^{\wght(s)}\b_{i,j}^{(s)}x_{s_j}\;, \qquad \text{where}\;
  \(\b_{i,j}^{(s)}\)\in\GL(\wght(s),\bbF)\;.
\end{equation*}
The condition that $\phi^*\sum_{i=1}^{\wght(s)}x_{s_i}=\sum_{i=1}^{\wght(s)}x_{s_i}$ amounts to
$\sum_{i=1}^{\wght(s)}\b_{i,j}^{(s)}=1$ for $j=1,\dots,\wght(s)$ and therefore $\phi^*H_S=H_S$.
Let us denote the linear coordinates on $\bbF[S]$ by $x_s$, where $s\in S$.
For $s,t\in S$ and $i\in\set{1,2,\dots,\wght(s)},\; j\in\set{1,2,\dots,\wght(t)}$, because
$\pb{x_{s_i},x_{t_j}}_{A^\wght}=a_{s,t}x_{s_i}x_{t_j}$, we have
\begin{align*}
\pb{\phi^*x_{s_i},\phi^*x_{t_j}}_{A^\wght}
&=\sum_{k=1}^{\wght(s)}\sum_{\ell=1}^{\wght(t)}\b_{i,k}^{(s)}\b_{j,\ell}^{(t)}a_{s,t}x_{s_k}x_{t_\ell}\\
&=a_{s,t}\phi^*x_{s_i}\phi^*x_{t_j}=\phi^*\pb{x_{s_i},x_{t_j}}_{A^\wght}\,,
\end{align*}
and therefore $\phi$ is indeed an LV automorphism of $\LV(\Gamma^\wght)$.
\end{proof}
Propositions \ref{prp:iso_irred} and \ref{prp:many_auto} lead at once to the following corollary:
\begin{cor}
  Let $\Gamma$ be a skew-symmetric graph. The LV system $\LV(\Gamma)$ is irreducible if and only if its
  automorphism group $\Aut(\LV(\Gamma))$ is finite.\qed
\end{cor}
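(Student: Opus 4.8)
The plan is to prove the two implications separately, each from one of the two cited propositions, after observing that by definition $\LV(\Gamma)$ is irreducible precisely when $\Gamma$ is irreducible. First I would treat the implication ``irreducible $\Rightarrow$ finite''. Suppose $\Gamma$ is irreducible. Then item (2) of Proposition \ref{prp:iso_irred} (applied with $\wght=\wght_0$) gives a group isomorphism $\LV:\Aut(\Gamma)\to\Aut(\LV(\Gamma))$. Since $\Gamma$ is finite, $\Aut(\Gamma)$ is a subgroup of the symmetric group $\mathcal{S}_{\vert\Gamma\vert}$, hence finite; therefore $\Aut(\LV(\Gamma))$ is finite as well.

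For the converse I would prove the contrapositive: if $\Gamma$ is reducible, then $\Aut(\LV(\Gamma))$ is infinite. Reducibility means there are two distinct equivalent vertices, so the weight vector $\wght_\Gamma$ of the decloned graph satisfies $\wght_\Gamma(\ul s)=:m\geqs2$ for some $\ul s\in\ul S$. Since $\ul\Gamma^{\wght_\Gamma}\simeq\Gamma$, we have $\LV(\Gamma)\simeq\LV(\ul\Gamma^{\wght_\Gamma})$, and as the automorphism group is an isomorphism invariant it suffices to exhibit infinitely many automorphisms of $\LV(\ul\Gamma^{\wght_\Gamma})$. Here I would apply Proposition \ref{prp:many_auto} to the weighted graph $(\ul\Gamma,\wght_\Gamma)$: it says that any linear map which, for each vertex, acts on the span of its clones by an invertible matrix whose columns sum to $1$, and is the identity on all remaining coordinates, is an LV automorphism. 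Concentrating on the fixed vertex $\ul s$ with clones $s_1,\dots,s_m$, I would take the one-parameter family $\phi_\lambda$ defined by $\phi_\lambda^*x_{s_1}=(1+\lambda)x_{s_1}+\lambda x_{s_2}$, $\phi_\lambda^*x_{s_2}=-\lambda x_{s_1}+(1-\lambda)x_{s_2}$, fixing every other coordinate (in particular the remaining clones $x_{s_3},\dots,x_{s_m}$, so that the span of the clones of $\ul s$ is preserved). The corresponding $2\times2$ block has column sums $1$ and determinant $1$, so it is invertible for every $\lambda\in\bbF$, and distinct values of $\lambda$ give distinct linear maps. As $\bbF$ is infinite, this produces infinitely many elements of $\Aut(\LV(\Gamma))$.

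There is essentially no serious obstacle once the two propositions are in place; the only points requiring care are (i) identifying ``$\LV(\Gamma)$ irreducible'' with ``$\Gamma$ irreducible'', which is the definition, and (ii) checking that the explicit family $\phi_\lambda$ genuinely satisfies the hypotheses of Proposition \ref{prp:many_auto}, i.e.\ that the block is invertible with unit column sums. One should also note that the argument uses in an essential way that $\bbF=\bbR$ or $\bbF=\bbC$ is infinite, so that a positive-dimensional family of invertible matrices with prescribed column sums contains infinitely many points; over a finite field this last step would fail.
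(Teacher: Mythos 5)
Your proof is correct and follows exactly the route the paper intends: Proposition \ref{prp:iso_irred} (with $\wght=\wght_0$) identifies $\Aut(\LV(\Gamma))$ with the finite group $\Aut(\Gamma)$ in the irreducible case, while in the reducible case Proposition \ref{prp:many_auto}, applied to $(\ul\Gamma,\wght_\Gamma)$ with some $\wght_\Gamma(\ul s)\geqs 2$, yields an infinite family of automorphisms (your explicit one-parameter family $\phi_\lambda$ is a valid instance, having unit column sums and determinant $1$). The paper merely asserts the corollary follows ``at once'' from these two propositions; you have supplied the same argument with the details written out.
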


We denote the group of matrices $\(\a_{i,j}\)\in\GL(n,\bbF)$ for which $\sum_{i=1}^{n}\a_{i,j}=1$ for $j=1,\dots,n$
by $\GLp(n,\bbF)$. Since it is the group of invertible matrices fixing a non-zero vector, it is isomorphic to an
affine group,
\begin{equation*}
  \GLp(n,\bbF)\simeq\Aff(n-1,\bbF)\simeq\bbF^{n-1}\rtimes\GL(n-1,\bbF)\;.
\end{equation*}%
In the notation of Proposition \ref{prp:many_auto}, it follows from the proposition that the LV morphism associated
to a permutation of the clones of a vertex $s\in S$ is an element of $\GLp(\wght(s),\bbF)$, viewed as an
automorphism of $\LV(\Gamma)$.
\begin{prop}
  Suppose that $\Gamma=(S,A)$ is a skew-symmetric graph, and denote its weight\-ed decloned graph by
  $(\ul\Gamma,\wght_\Gamma)$, with $\ul\Gamma=(\ul S,\ul A)$. The following diagram is commutative and its lines
  are split short exact sequences.
  \begin{equation*}
    \begin{tikzcd}[row sep=5ex, column sep=4ex]
      0\arrow{r} {}&\displaystyle\prod_{\ul s\in\ul S}\mathcal{S}_{\wght_\Gamma({\ul s})}\arrow{r}{}\arrow{d} {}&
     \Aut(\Gamma)\arrow{r}{}\arrow{d}{\LV}&\Aut(\ul\Gamma,\wght_\Gamma)\arrow{r}{}\arrow{d}{\LV_0}\arrow[swap]{d}{\simeq}&0\\
      0\arrow{r} {}&\displaystyle\prod_{\ul s\in\ul S}\GLp(\wght_\Gamma({\ul s}),\bbF)\arrow{r}{}&
        \Aut(\LV(\Gamma))\arrow{r}{}&\Aut(\LV(\ul\Gamma),\wght_\Gamma)\arrow{r}{}&0
    \end{tikzcd}
  \end{equation*}
  As a consequence, $\Aut(\LV(\Gamma))$ is a semi-direct product,
  \begin{equation*}
    \Aut(\LV(\Gamma))\simeq\prod_{\ul s\in\ul S}\GLp(\wght_\Gamma({\ul s}),\bbF)\rtimes\Aut(\ul\Gamma,\wght_\Gamma)\;.
  \end{equation*}%
\end{prop}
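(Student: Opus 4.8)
The plan is to build the statement on top of the two foundational results already available: the top row is precisely the split short exact sequence of Proposition~\ref{prp:graph_autom}, and the vertical maps are furnished by the functor $\LV$ together with Proposition~\ref{prp:iso_irred}, which guarantees that the rightmost vertical arrow $\LV_0\colon\Aut(\ul\Gamma,\wght_\Gamma)\to\Aut(\LV(\ul\Gamma),\wght_\Gamma)$ is an isomorphism (since $\ul\Gamma$ is irreducible). It therefore remains to produce the bottom row, to check that the two squares commute, and to exhibit a splitting; the semidirect product decomposition then follows formally.

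For the bottom row I would define the right-hand homomorphism by sending $\phi\in\Aut(\LV(\Gamma))$ to its decloned morphism $\ul\phi$ from Proposition~\ref{prp:LV_decloning} (applied with $\Gamma'=\Gamma$); the argument in the proof of Proposition~\ref{prop:iso} shows that when $\phi$ is an isomorphism, $\ul\phi$ is a \emph{weighted} LV isomorphism, so this lands in $\Aut(\LV(\ul\Gamma),\wght_\Gamma)$, and item~(2) of Proposition~\ref{prp:LV_decloning} makes it a group homomorphism. The left-hand inclusion $\prod_{\ul s}\GLp(\wght_\Gamma(\ul s),\bbF)\hookrightarrow\Aut(\LV(\Gamma))$ is the one produced by Proposition~\ref{prp:many_auto}, sending a tuple of matrices to the block-diagonal automorphism acting on the clone subspaces $\Span\set{s_i}$; it is injective since distinct tuples act differently on these (independent) subspaces. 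Commutativity of the left square amounts to the observation that the graph automorphism permuting clones is carried by $\LV$ to the corresponding permutation of coordinates, i.e.\ to a permutation matrix, which is exactly its image in $\GLp$. Commutativity of the right square is the identity $\ul{\LV(\Phi)}=\LV(\ul\Phi)$: indeed $\LV(p)\circ\LV(\Phi)=\LV(p\circ\Phi)=\LV(\ul\Phi\circ p)=\LV(\ul\Phi)\circ\LV(p)$ using Proposition~\ref{prp:graph_decloning}, and uniqueness of the decloned morphism in Proposition~\ref{prp:LV_decloning} forces $\ul{\LV(\Phi)}=\LV(\ul\Phi)=\LV_0(\ul\Phi)$.

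The main obstacle is identifying the kernel of the right-hand map with the image of $\prod_{\ul s}\GLp$. The easy inclusion is that any $\phi$ preserving every clone subspace and every partial sum $\sum_{s\in\ul u}x_s$ satisfies $\LV(p)\circ\phi=\LV(p)$, hence $\ul\phi=\Id$. For the reverse inclusion I would start from $\ul\phi=\Id$, which gives $\LV(p)\circ\phi=\LV(p)$ and hence $\phi^*\sum_{s\in\ul u}x_s=\sum_{s\in\ul u}x_s$ for all $\ul u$. Writing $\phi^*x_s=\sum_t\b_{s,t}x_t$, Lemma~\ref{lma:pppambos} (applicable since an automorphism has full image) shows that rows $s,s'$ with a common nonzero column must satisfy $s\sim s'$; thus the column-supports of distinct clone classes are disjoint. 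Since $\phi$ is invertible the union of these supports is all of $S$, and a dimension count forces the support of a class $\ul u$ to consist of exactly $\wght_\Gamma(\ul u)$ columns. The partial-sum condition $\sum_{s\in\ul u}\b_{s,t}=1$ for $t\in\ul u$ (and $=0$ otherwise) then shows that each clone of $\ul u$ lies in that support, so by equality of cardinalities the support of class $\ul u$ is precisely the set of clones of $\ul u$. Hence $\phi$ preserves each clone subspace with column sums equal to $1$, i.e.\ it lies in the image of $\prod_{\ul s}\GLp(\wght_\Gamma(\ul s),\bbF)$, as required.

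Finally, to split the bottom row I would compose the section $\Psi\mapsto\ol\Psi$ of the top row (constructed in Proposition~\ref{prp:graph_autom}) with $\LV$ and precompose with the isomorphism $\LV_0^{-1}$; the right-square commutativity together with $\ul{\ol\Psi}=\Psi$ shows this is a genuine section, which in particular gives surjectivity of the right-hand map and completes the proof that the bottom line is a split short exact sequence. The semidirect product decomposition of $\Aut(\LV(\Gamma))$ is then immediate, with normal factor $\prod_{\ul s}\GLp(\wght_\Gamma(\ul s),\bbF)$ and quotient $\Aut(\ul\Gamma,\wght_\Gamma)$ (via $\LV_0^{-1}$). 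The delicate step is the kernel computation above; everything else is formal diagram-chasing supported by the cited results.
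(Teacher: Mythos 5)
Your proof is correct and follows essentially the same route as the paper: the top row is taken from Proposition~\ref{prp:graph_autom}, the vertical maps are $\LV$ and the isomorphism $\LV_0$ of Proposition~\ref{prp:iso_irred}, the squares commute by functoriality and uniqueness of decloned morphisms, and the splitting of the bottom row is transported from the top row through the commutative diagram. The one place where you go beyond the paper is the kernel identification, which the paper asserts with only a brief justification but you prove in full (via Lemma~\ref{lma:pppambos}, disjointness of the column supports, a dimension count, and the partial-sum condition) --- a worthwhile elaboration, since that is indeed the delicate step.
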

\begin{proof}
  Since $\ul\Gamma^{\wght_\Gamma}\simeq\Gamma$ in a canonical way, we can identify the vertex set $S$ of $\Gamma$
  with the vertex set $\ul S^{\wght_\Gamma}$ of $\ul\Gamma^{\wght_\Gamma}$ and so $S=\set{\ul s_i\mid\ul s\in\ul S,\
    i=1,2,\dots,\wght_\Gamma(\ul s)}$.  The left square of the diagram is commutative because the leftmost vertical
  arrow in the diagram is the restriction of $\LV$ to a subgroup. The right square is commutative because of the
  definition of weighted decloning of LV automorphisms, in case the automorphism is induced by a graph
  automorphism. We have already shown in Proposition \ref{prp:graph_autom} that the top line is a split short exact
  sequence. It implies, by commutativity of the diagram, that
  $\Aut(\LV(\Gamma))\to\Aut(\LV(\ul\Gamma),\wght_\Gamma)$ is surjective. The kernel of this map consists of the
  automorphisms of~$\bbF[\ul S^{\wght_\Gamma}]$, leaving for every $\ul s\in \ul S$ the subspace $\Span\set{\ul
    s_i\mid 1\leqslant i\leqslant \wght_\Gamma(\ul s)}$ invariant, as well as the sum of the associated coordinates
  $x_{\ul s_i}$, i.e., $\phi^*\sum_{i=1}^{\wght_\Gamma(\ul s)}x_{\ul s_i}= \sum_{i=1}^{\wght_\Gamma(\ul s)}x_{\ul
    s_i}$, so it is precisely $\prod_{\ul s\in\ul S}\GLp(\wght_\Gamma({\ul s}),\bbF)$. The commutativity of the
  diagram and the fact that $\LV_0$ is an isomorphism imply that, since the upper line is split, the lower line is
  also split.
\end{proof}

\subsection{Functorial interpretation}\label{par:LV_cat}
We have already given a functorial interpretation of the results obtained in Section \ref{sec:graphs} (see Section
\ref{par:cat_graphs}). We now complete this with the results of Section \ref{sec:LV}. Consider the following
diagram:
\begin{equation*}
  \begin{tikzcd}[row sep=6ex, column sep=6ex]
    \Gr\arrow{r}{\rho}\arrow[swap]{d} {\LV}&\Grz\arrow{d}{\LV_0}\arrow[shift left=1ex] {l}{\imath_0}\\
    \bLV\arrow[swap]{r}{\sigma}&\bLVz\arrow[swap,shift right=1ex] {l}{\jmath_0}
  \end{tikzcd}
\end{equation*}
Its top line has already been discussed in Section \ref{par:cat_graphs}: $\Gr$ is the category of skew-symmetric
graphs with surjective morphisms and $\Grz$ is the full subcategory of irreducible graphs, with inclusion functor
$\imath_0$. Also, $\rho$ is the decloning functor. We define similarly $\bLV$ to be the category whose objects are
LV systems $(\bbF[S],A,H_S)$ and whose morphisms are surjective LV morphisms, i.e., surjective linear Poisson maps,
which preserve the Hamiltonian. The full subcategory of irreducible LV systems is denoted $\bLVz$. According to
Propositions~\ref{prp:decloning} and~\ref{prp:LV_decloning}, decloning of LV systems defines a functor $\sigma$,
which is according to item (1) in Proposition \ref{prp:LV_decloning} a reflection functor for the inclusion functor
$\jmath_0:\bLVz\to\bLV$. In particular, $\sigma$ is an adjoint functor for
$\jmath_0$. Proposition~\ref{prp:functor} says that we have a functor $\LV:\Gr\to\bLV$ which associates to a
skew-symmetric graph $\Gamma$ the corresponding LV system $\LV(\Gamma)$, and to a surjective graph morphism
$\Phi:\Gamma\to\Gamma'$ the LV morphism $\LV(\Phi):\LV(\Gamma)\to\LV(\Gamma')$. It is an embedding functor and its
restriction to $\Gr_0$ is the functor $\LV_0$, which takes by definition values in~$\bLVz$. According to
Proposition \ref{prop:iso} the functor $\LV$, and therefore also its restriction $\LV_0$, is a conservative functor
(it reflects isomorphisms).

The commutativity of the diagram, to wit the fact that $\LV_0\circ\rho=\s\circ\LV$ and
$\LV\circ\,\imath_0=\jmath_0\circ\LV_0$, is clear on objects; on morphisms, it follows for the first equality from
the fact that the decloning of a surjective LV morphism, induced by a graph morphism, is just the decloned graph
morphism, viewed as an LV morphism; on morphisms, the second equality follows at once from the fact that $\LV_0$ is
a restriction of the functor $\LV$ to the subcategories $\Grz$ and $\bLVz$ of $\Gr$ and $\bLV$.

\subsection{Population dynamics interpretation of (de-)cloning}
Lotka-Vol\-terra systems first appeared in the context of population dynamics \cite{Lotka,Volterra}. In this
context, $\bbF=\bbR$, the set $S$ is the set of species and for each $s\in S$ the function $x_s$ is the number of
individuals belonging to the species $s$. The differential equations
\begin{equation*}
  \dot{x}_{s} = \sum_{t\in S} a_{s,t}\,x_{s}x_{t}\;, \qquad \text{for all}\; s\in S\;,
\end{equation*}
describe the evolution of the number of individuals of each species; the real parameter $a_{s,t}$ governs the
interaction between the species $s$ and $t$, with $a_{s,t}$ being positive (respectively negative) meaning that the
number of individuals of species $s$ will grow (respectively shrink) proportionally to the number of individuals of
species $s$, to the number of individuals of species $t$, and to $a_{s,t}$. In this model, cloning is the natural
procedure of subdividing each species in subspecies, the interaction between the subspecies of two different
species being the same as the interaction between the original species, with no interaction between two subspecies
of the same species. Also here, decloning is more important than cloning, since it amounts to simplifying the model
by assembling similar species in a single one. From that point of view, the decloning map $\LV(p)$ of Proposition
\ref{prp:decloning} is crucial.

We show in the next proposition that under the reasonable assumption that the interaction between two species does
not depend on the other species, Proposition \ref{prp:decloning} only holds when these interactions are quadratic,
with linear contribution by each of the two species, i.e., are Lotka-Volterra models.

\begin{prop}\label{prp:LV_char}
  Let $S$ be a finite set and suppose that $\pi$ is a Poisson structure on $\bbF[S]$ for which
  $\pb{x_s,x_t}=\pi_{s,t}=\pi_{s,t}(x_s,x_t)$, i.e., the function $\pi_{s,t}$ depends on $x_s$ and $x_t$ only. Let
  $\wght:S\to\bbN^*$ be a weight vector and consider the bivector field $\ol\pi$ on $\bbF[S]$ defined by setting
  $\pb{x_{s_i},x_{t_j}}_{\ol\pi}:= \pi_{s,t}(x_{s_i},x_{t_j})$. Suppose that $\ol\pi$ is a Poisson structure and
  that the decloning map $\LV(p)$ is a Poisson map. Then, for every $s\in S$ for which $\wght(s)>1$, $\pi_{s,t}$ is a
  linear function of $x_s$. In particular, if $\wght(s)>1$ and $\wght(t)>1$ then $\pi_{s,t}=a_{s,t}x_sx_t$ for some
  $a_{s,t}\in\bbF$, and if all $\wght(s)>1$ then $\pi$ is a diagonal Poisson structure and $(\bbF[S],\pi,H_S)$ is an
  LV system, with $\pi=\pi_A$, where $A=(a_{s,t})$.
\end{prop}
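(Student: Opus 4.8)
The plan is to turn the hypothesis that the decloning map $\LV(p)$ is Poisson (the only one I will actually use) into a functional equation for each $\pi_{s,t}$, and then to use the repeated clone variables to pin down $\pi_{s,t}$. Recall from Proposition \ref{prp:decloning} (formula \eqref{eq:LV(p)}) that the decloning map of the cloned system satisfies $\LV(p)^*x_s=\sum_{i=1}^{\wght(s)}x_{s_i}$. First I would write out $\pb{\LV(p)^*x_s,\LV(p)^*x_t}_{\ol\pi}=\LV(p)^*\pb{x_s,x_t}_\pi$ for a fixed pair $s\neq t$ (the case $s=t$ being vacuous, since $\pi_{s,s}=0$), using that $\LV(p)^*$ is an algebra homomorphism and that $\pb{x_{s_i},x_{t_j}}_{\ol\pi}=\pi_{s,t}(x_{s_i},x_{t_j})$. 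This yields the identity
\begin{equation*}
  \sum_{i=1}^{\wght(s)}\sum_{j=1}^{\wght(t)}\pi_{s,t}(x_{s_i},x_{t_j})
    =\pi_{s,t}\Bigl(\sum_{i=1}^{\wght(s)}x_{s_i},\,\sum_{j=1}^{\wght(t)}x_{t_j}\Bigr)\;,\tag{$\star$}
\end{equation*}
valid as an identity in the independent coordinates $x_{s_i},x_{t_j}$ on $\bbF[S^\wght]$.

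Next I would exploit $\wght(s)>1$. Since $\pi_{s,t}$ is smooth and $(\star)$ is an identity of functions, I may differentiate it with respect to the two distinct clone variables $x_{s_1}$ and $x_{s_2}$, which exist precisely because $\wght(s)>1$. The double derivative annihilates the entire left-hand side, while on the right-hand side both derivatives land on the first slot, giving $\p_1^2\pi_{s,t}\equiv0$; hence $\pi_{s,t}(\xi,\eta)=A(\eta)\,\xi+B(\eta)$ for smooth functions $A,B$. Substituting this back into $(\star)$ and reading off the term independent of $U:=\sum_i x_{s_i}$ gives $B\bigl(\sum_j x_{t_j}\bigr)=\wght(s)\sum_j B(x_{t_j})$; putting all $x_{t_j}=0$ forces $B(0)=0$ (because $\wght(s)\wght(t)\geqs2$), and then putting all but one $x_{t_j}$ equal to $0$ forces $(\wght(s)-1)B\equiv0$, so $B\equiv0$. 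Thus $\pi_{s,t}(\xi,\eta)=A(\eta)\,\xi$ is linear in $x_s$, which is the first assertion.

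For the remaining claims I would add the hypothesis $\wght(t)>1$ and run the previous paragraph with $s$ and $t$ interchanged, obtaining also $\pi_{s,t}(\xi,\eta)=\widetilde A(\xi)\,\eta$ for some $\widetilde A$; equating $A(\eta)\xi=\widetilde A(\xi)\eta$ and dividing by $\xi\eta$ shows $A(\eta)/\eta$ is a constant $a_{s,t}$, so $\pi_{s,t}=a_{s,t}x_sx_t$. Finally, if $\wght(s)>1$ for every $s\in S$, this holds for all $s\neq t$; the skew-symmetry $\pi_{s,t}=-\pi_{t,s}$ of the Poisson bracket makes $A:=(a_{s,t})$ skew-symmetric, so $\pi=\pi_A$ is diagonal and $(\bbF[S],\pi,H_S)$ is exactly the LV system $\LV(\Gamma)$ with $\Gamma=(S,A)$. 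I expect the main obstacle to be the very step $\p_1^2\pi_{s,t}\equiv0$: this is where $\wght(s)>1$ is indispensable, since it supplies the two independent clone directions to differentiate along, whereas for an unrepeated vertex $(\star)$ imposes no local constraint on $\pi_{s,t}$ at all.
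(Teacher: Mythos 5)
Your proof is correct, and it sets up exactly the same functional equation as the paper (your identity $(\star)$ is the display at the start of the paper's proof of Proposition \ref{prp:LV_char}), but you solve that equation by a genuinely different method. The paper isolates it as a standalone statement, Lemma \ref{lma:add_to_linear}, and solves it algebraically: specializing variables to zero yields additivity of $F$ in its first argument (Cauchy's functional equation), after which continuity upgrades $\bbQ$-linearity to $\bbR$-linearity, and a separate power-series argument is needed over $\bbC$ to rule out non-holomorphic additive solutions. You instead exploit the available regularity directly: differentiating $(\star)$ along the two clone directions $x_{s_1},x_{s_2}$ (which exist precisely because $\wght(s)>1$) annihilates the left-hand side and gives $\partial_1^2\pi_{s,t}\equiv0$, hence $\pi_{s,t}(\xi,\eta)=A(\eta)\xi+B(\eta)$, and your substitutions correctly kill $B$ (first $B(0)=0$ since $\wght(s)\wght(t)\geqs 2$, then $(\wght(s)-1)B\equiv0$). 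The trade-off is clear: your route needs $C^2$ smoothness (resp.\ holomorphy), which is harmless since the components of a Poisson structure have this regularity by hypothesis, and in exchange it treats $\bbR$ and $\bbC$ uniformly, with no power-series detour; the paper's lemma is more frugal with regularity (additivity costs nothing, continuity enters only at the end) and, being stated abstractly, is reusable elsewhere. Your endgame --- swapping the roles of $s$ and $t$ when $\wght(t)>1$, equating $A(\eta)\xi=\widetilde A(\xi)\eta$ and dividing by $\xi\eta$ to produce the constant $a_{s,t}$, then using skew-symmetry of the bracket to get $A=(a_{s,t})$ skew-symmetric --- is the same as the paper's, spelled out in somewhat more detail than the paper gives.
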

The proof which we will give is based on the following lemma:
\begin{lemma}\label{lma:add_to_linear}
  Let $F(\a,\b)$ be a non-zero function in two variables. In case $\bbF=\bbR$, we assume $F$ to be smooth; if
  $\bbF=\bbC$, then $F$ is assumed to be holomorphic. Let $m,n\in\bbN$ with $m>1$ and $n>0$. If
  \begin{equation}\label{eq:F_cond}%
    \sum_{s=1}^m\sum_{t=1}^n F(\a_s,\b_t)=F\(\sum_{s=1}^m \a_s,\sum_{t=1}^n \b_t\)\;
        \quad \text{for all}\; \a_1,\dots,\a_m,\b_1,\dots,\b_n\in\bbF\;,
  \end{equation}%
  then
  \begin{equation}\label{eq:F_prop}%
    F(\a+\a',\b)=F(\a,\b)+F(\a',\b) \quad \text{for all}\; \a,\a',\b\in\bbF\;,
  \end{equation}%
  so that $F$ is linear in its first argument, $F(\a,\b)=\a G(\b)$ for some function~$G$ (in one variable).
\end{lemma}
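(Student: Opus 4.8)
The plan is to reduce the two-index hypothesis \eqref{eq:F_cond} to the single-index Cauchy-type identity
\[
  \sum_{s=1}^m F(\a_s,\b)=F\Big(\sum_{s=1}^m\a_s,\b\Big),\qquad\text{valid for all }\a_1,\dots,\a_m,\b\in\bbF,
\]
which I will call the \emph{reduced identity}, then to extract additivity \eqref{eq:F_prop} from it, and finally to upgrade additivity to linearity using the smoothness (resp.\ holomorphy) of $F$. As a preliminary normalization I would set all $\a_s$ and all $\b_t$ equal to $0$ in \eqref{eq:F_cond}: since $mn\geqs2$ this gives $(mn-1)F(0,0)=0$, hence $F(0,0)=0$.

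To obtain the reduced identity I would specialize the second argument by putting $\b_1=\b$ and $\b_2=\dots=\b_n=0$ in \eqref{eq:F_cond}, turning it into
\[
  \sum_{s=1}^m F(\a_s,\b)+(n-1)\sum_{s=1}^m F(\a_s,0)=F\Big(\sum_{s=1}^m\a_s,\b\Big).
\]
When $n=1$ the middle term is absent, so this is already the reduced identity. When $n\geqs2$ I would first show that $F(\cdot,0)\equiv0$: setting all $\b_t=0$ in \eqref{eq:F_cond} gives $n\sum_s F(\a_s,0)=F(\sum_s\a_s,0)$, and taking $\a_1=\a$, $\a_2=\dots=\a_m=0$ together with $F(0,0)=0$ yields $(n-1)F(\a,0)=0$, so $F(\a,0)=0$ for all $\a$; substituting this back removes the middle term and again produces the reduced identity. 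This case split, forced by the asymmetry of the hypothesis (the first index ranges over $m>1$ values while the second may range over a single value), is the one genuinely delicate bookkeeping point.

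From the reduced identity additivity follows quickly. Putting all $\a_s=0$ gives $(m-1)F(0,\b)=0$, hence $F(0,\b)=0$ for every $\b$ because $m>1$; then taking $\a_1=\a$, $\a_2=\a'$ and $\a_3=\dots=\a_m=0$ gives exactly $F(\a+\a',\b)=F(\a,\b)+F(\a',\b)$, which is \eqref{eq:F_prop}. To pass from additivity to linearity I would differentiate \eqref{eq:F_prop} with respect to $\a'$ and set $\a'=0$; this shows that $\partial_1F(\a,\b)$ is independent of $\a$, equal to $G(\b):=\partial_1F(0,\b)$. Integrating in the first variable from $0$ to $\a$ and using $F(0,\b)=0$ gives $F(\a,\b)=\a\,G(\b)$, and the same differentiation argument works verbatim in the holomorphic case.

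The combinatorial reductions are routine; the only places needing care are the $n=1$ versus $n\geqs2$ asymmetry in establishing the reduced identity, and the very last step, which is where the regularity hypothesis on $F$ is essential — without smoothness (resp.\ holomorphy) one would obtain only an additive (merely $\bbQ$-linear) dependence on $\a$ rather than genuine linearity. I therefore expect this final upgrade, rather than any earlier manipulation, to be the substantive use of the hypotheses.
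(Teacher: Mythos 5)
Your proposal is correct: every specialization is licensed by the hypotheses (with the $n=1$ versus $n\geqs2$ split handled exactly where it is needed), and the final differentiation argument is valid in both the smooth and the holomorphic setting. The route differs from the paper's in two places. First, the combinatorial bookkeeping: the paper splits into the cases $m,n>1$ and $m>1,\ n=1$, and in the first case derives additivity directly from the two-index identity after showing $F(\a,0)=0$ and $F(0,\b)=0$; you instead funnel both cases through the single-index ``reduced'' identity $\sum_{s=1}^m F(\a_s,\b)=F\(\sum_{s=1}^m\a_s,\b\)$, which makes the second half of the argument uniform --- a reorganization of the same underlying idea. Second, and more substantively, the upgrade from additivity \eqref{eq:F_prop} to linearity: the paper uses the classical Cauchy-equation argument (additivity gives $\bbQ$-homogeneity in $\a$, continuity upgrades this to $\bbR$-homogeneity, and for $\bbF=\bbC$ a power-series expansion $F(\a,\b)=\sum_{i>0}c_i\a^i$ together with real scalings forces $c_i=0$ for $i>1$), whereas you differentiate the functional equation in $\a'$, set $\a'=0$ to see that $\partial_1 F(\a,\b)=\partial_1F(0,\b)=:G(\b)$ is independent of $\a$, and integrate using $F(0,\b)=0$. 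Your version is shorter and treats $\bbR$ and $\bbC$ identically; the paper's version has the small advantage that in the real case it uses only continuity of $F$, not differentiability, so it isolates more precisely how little regularity the conclusion actually requires. Both arguments are complete.
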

\begin{proof}
Let us first recall the standard argument that (\ref{eq:F_prop}) implies that $F$ is linear in its first
argument. The condition says that $F$ is additive in its first argument, so that $F(\l \a,\b)=\l F(\a,\b)$ for all
$\l\in\bbQ$ and $\a,\b\in\bbF$. By continuity of $F$, this holds for all $\l\in\bbR$. In particular, $F(0,\b)=0$ for
all $\b\in\bbF$. It remains to be shown that $F$ is complex linear when $\bbF=\bbC$. Fix $\b$ and write
$K(\a):=F(\a,\b)$. Write the power series expansion of $K$ around~$0$, $K(\a)=\sum_{i>0}c_i\a^i$. Then, for any
$\l\in\bbR$,
\begin{equation*}%
  \sum_{i>0}c_i\l^i\a^i=K(\l \a)=\l K(\a)=\l\sum_{i>0}c_i\a^i\;,
\end{equation*}%
so that $\sum_{i>1}c_i(\l^i-\l)\a^i=0$ for all $\a$ around $0$. Taking $\l$ real, but different from $0,\pm1$, it
follows that $c_i=0$ for $i>1$, so that $K(\a)=c_1\a$ is linear in~$\a$.

We now prove (\ref{eq:F_prop}) when $m,n>1$. First notice that, taking all $\a_s$ and~$\b_t$ equal to zero in
(\ref{eq:F_cond}), we find $mn F(0,0)=F(0,0)$, so that $F(0,0)=0$. Next, take $\a_2,\dots,\a_m,\b_1,\dots,\b_n$
equal to zero in (\ref{eq:F_cond}) to find $n F(\a_1,0)=F(\a_1,0)$, so that $F(\a,0)=0$ for all $\a$; similarly,
$F(0,\b)=0$ for all $\b$. Finally take $\a_3,\dots,\a_m,\b_2,\dots,\b_n$ equal to zero in (\ref{eq:F_cond}) to find
$F(\a_1,\b_1)+F(\a_2,\b_1)=F(\a_1+\a_2,\b_1)$ for all $\a_1,\a_2,\b_1\in\bbF$, which is (\ref{eq:F_prop}).

To finish, we prove (\ref{eq:F_prop}) when $m>1$ and $n=1$. The condition (\ref{eq:F_cond}) simplifies to
\begin{equation}\label{eq:F_cond_simp}%
  \sum_{s=1}^m F(\a_s,\b)=F\(\sum_{s=1}^m \a_s,\b\)\;  \qquad \text{for all}\; \a_1,\dots,\a_m,\b\in\bbF\;.
\end{equation}
When $m=2$ the conditions (\ref{eq:F_prop}) and (\ref{eq:F_cond_simp}) are identical, so let us suppose that
$m>2$. Taking all $\a_s$ equal to zero in (\ref{eq:F_cond_simp}) we find $m F(0,\b)=F(0,\b)$, so $F(0,\b)=0$ for
all $\b\in\bbF$ (since $m\neq1$). Take now only $\a_s=0$ for $s>2$ in (\ref{eq:F_cond_simp}) to find
$F(\a_1,\b)+F(\a_2,\b)=F(\a_1+\a_2,\b)$ for all $\a_1,\a_2,\b\in\bbF$, which is again (\ref{eq:F_prop}).
\end{proof}

\begin{proof} [Proof (of Proposition \ref{prp:LV_char})]\ \
Suppose that $\wght(s)>1$ for some $s\in S$, and let $t\in S$ with $t\neq s$. Since $\LV(p)$ is assumed to be a
Poisson map,
\begin{align*}
  \sum_{i=1}^{\wght(s)}\sum_{j=1}^{\wght(t)}\pi_{s,t}(x_{s_i},x_{t_j})
    &=\pb{x_{s_1}+x_{s_2}+\cdots+x_{s_{\wght(s)}},x_{t_1}+x_{t_2}+\cdots+x_{t_{\wght(t)}}}_{\ol\pi}\\
    &=\pb{\LV(p)^*x_s,\LV(p)^*x_t}_{\ol\pi}=\LV(p)^*\(\pb{x_s,x_t}\)=\\
    &=\LV(p)^*\pi_{s,t}=\pi_{s,t}\left(\sum_{i=1}^{\wght(s)}x_{s_i},\sum_{j=1}^{\wght(t)}x_{t_j}\right)\;.
\end{align*}%
It now suffices to apply Lemma \ref{lma:add_to_linear} with $F=\pi_{s,t}$ and $m=\wght(s)$ and $n=\wght(t)$ to
conclude that $\pi_{s,t}$ is a linear function of $x_s$. When also $\wght(t)>1$, then $\pi_{s,t}$ is a linear
function of its two arguments $x_s$ and $x_t$, so $\pi_{s,t}=a_{s,t}x_sx_t$ for some $a_{s,t}\in\bbF$. This implies
that $\pi$ is a diagonal Poisson structure if all weights $\wght(s)$ are at least $2$.
\end{proof}
\section{Integrability and Lax equations}\label{sec:int_lax}

We have seen in Section \ref{sec:LV} that cloning and decloning play a special role in the classification of
Lotka-Volterra systems and in the description of their automorphism groups. In this section, we study cloning and
decloning of LV systems from the point of view of integrability and of Lax equations. In Section
\ref{par:examples} we recall the main examples of integrable LV systems. We show in Section
\ref{par:integrability} that cloning and decloning preserve integrability, and we construct in Section
\ref{subsec:lax} a Lax equation for the cloning of a large class of important examples, the Bogoyavlenskij systems
(recalled below).

\subsection{Integrable examples}\label{par:examples}
For a generic skew-symmetric graph $\Gamma$, the Hamiltonian system $\LV(\Gamma)$ is not integrable, though several
infinite families of LV systems are known to be integrable. Before giving a few examples of such families, we
recall two basic notions of integrability, adapted to the case of LV systems. Recall that for an $n$-dimensional LV
system $\LV(\Gamma)$, with $\Gamma=(S,A)$, the rank of the Poisson structure $\pi_A$ is the rank of the
skew-symmetric matrix $A$. Therefore, one needs for \emph{Liouville integrability} $n-\frac12\Rk A$ independent
first integrals, including the Hamiltonian, which are pairwise in involution (meaning that their Poisson bracket is
zero). We also recall that for \emph{superintegrability} one needs $n-1$ independent first integrals (the Poisson
structure does not intervene in the definition of superintegrability).

In the examples which follow, we give some families of LV systems which are Liouville or superintegrable
(conjecturally, for one of them). In each of the examples, we will use $S_n:=\set{1,2,\dots,n}$ as the vertex set
of the underlying graph.

\begin{example}\label{exa:KM}
The best known example of an integrable LV system is the $n$-particle periodic Kac-van Moerbeke system $\KM(n)$,
where $n\geqslant3$; it is the~LV system $\LV(\Gamma)$ where $\Gamma$ is a circuit with $n$ vertices. See
Figure~\ref{fig:1st}, where the graph underlying $\KM(6)$ is given. If we label the consecutive vertices in the
circuit as $1,2,\dots,n$, then the adjacency matrix $A$ of $\Gamma$ is given by
\begin{equation}\label{eq:A-matrix_KM}
  A=\left(
  \begin{array}{ccccrc}
    0&1& 0 &\cdots& 0 &-1\\
     -1 &0&1&0      &\dots   & 0\\
     0 & -1 &0   &      &   &\vdots\\
    \vdots&&\ddots&\ddots& &0\\
       0 &  & & & 0&1\\
    1&0&\cdots &\cdots&-1&0
  \end{array}
  \right)\,,
\end{equation}
and so the $\KM(n)$ vector field is given by
\begin{equation}\label{eq:KM-system}
  \dot{x}_{i}=x_{i}(x_{i+1}-x_{i-1})\;, \quad \text{for}\,\; i=1,\ldots,n\;,
\end{equation}
with the understanding that the indices are periodic modulo $n$, so that $x_{n+1}=x_1$ and $x_0=x_n$.  The rank of
$A$ is $n-1$ when $n$ is odd and $n-2$ otherwise. In the first case the product $x_1x_2\dots x_n$ is a Casimir
function; in the second case, both products $x_1x_3\dots x_{n-1}$ and $x_2x_4\dots x_n$ are Casimir functions. An
additional $\floor*{\frac{n-1}2}$ independent polynomial first integrals (including the Hamiltonian), in
involution, are constructed from a Lax equation (with spectral parameter) for $\KM(n)$, see \cite{KM_Prym} and
Section \ref{subsec:lax} below.  This accounts for the Liouville integrability of $\KM(n)$.

It is clear that a circuit with $n$ vertices is irreducible with automorphism group the cyclic group $C_n$. In view
of Proposition \ref{prp:iso_irred}, it follows that $\Aut(\KM(n))=C_n$; the only automorphisms of $\KM(n)$ are
those permutations of the variables $x_i$ which respect the cyclic order $1,2,\dots,n$.
\end{example}

\begin{example}\label{exa:aus}
An LV system $\LV(\Gamma)$ is said to be of \emph{maximal interaction} if its defining graph $\Gamma$ is a
tournament graph, i.e. every vertex is adjacent to all other vertices; said differently, the entries of the
adjacency matrix $A$ of~$\Gamma$ satisfy $a_{i,j}\neq0$ for $i\neq j$. A prime example is when $a_{i,j}=1$ for all
$i<j$,
\begin{equation}\label{eq:A-matrix_aus}
  A=\left(
  \begin{array}{ccccrc}
    0&1& 1 &\cdots& 1 &1\\
     -1 &0&1&1      &\dots   & 1\\
     -1 & -1 &0   &      &   &\vdots\\
    \vdots&&\ddots&\ddots& &1\\
       -1 &  & & & 0&1\\
    -1&-1&\cdots &\cdots&-1&0
  \end{array}
  \right)\;.
\end{equation}
We call the corresponding LV system $\LV(n,0)$, as we did in \cite{PPPP}. See Figures~\ref{fig:2nd} and
\ref{fig:3rd} which correspond to $n=6$ and $n=5$ respectively. For general $n$, the Hamiltonian vector field of
$\LV(n,0)$ is given by
\begin{equation}\label{eq:aus}
  \dot{x}_{i}=x_i\left(\sum_{j>i}x_{j}-\sum_{j<i}x_{j}\right)\;, \quad \text{for}\,\; i=1,\ldots,n\;.
\end{equation}
The rank of $A$ is $n-1$ when $n$ is odd; a rational Casimir function is then given by
\begin{equation*}
  C:=\frac{x_1x_3x_5\dots x_n}{x_2x_4\dots x_{n-1}}\;.
\end{equation*}%
When $n$ is even, the rank of $A$ is $n$ and so the constant functions are the only Casimir functions.  For any
$n$, the system possesses $n-1$ independent rational first integrals (including the Hamiltonian, which is
polynomial), making it superintegrable (see \cite{KKQTV}). Among these rational first integrals, which can be
constructed using Darboux polynomials, $\floor*{\frac{n+1}2}$ integrals in involution can be chosen, making the
system also Liouville integrable. See also \cite{KQV} for an integrable generalization.

The vertices of the graph $\Gamma$, underlying $\LV(n,0)$ all have a different outdegree, so $\Gamma$ is
irreducible and has trivial automorphism group. According to Proposition \ref{prp:iso_irred}, $\Aut(\LV(n,0))$ is
the trivial group.
\end{example}

\begin{example}\label{exa:bogo}
The $\KM(n)$ system has been generalized by Bogoyavlenskij \cite{Bog2} to the case of interaction between neighbors
at distance at most $k$, with $k<n/2$ (the $\KM(n)$ system corresponds to $k=1$). It is called the
\emph{Bogoyavlenskij system}, denoted $\B nk$. The defining graph $\Gamma_{n,k}=(S_n,A_{n,k})$ of $\B nk$ is the
graph with vertex set $S_n$ and with an arrow from $i$ to $j$ whenever $0< j-i\leqslant k$ or $0<i-j \leqslant
n-k$.  Said differently, there is an arrow from every vertex to the next $k$ vertices with respect to the cyclic
order on $S_n$. See for example Figure \ref{fig:1st} for the graph $\Gamma_{6,2}$. The adjacency matrix $A_{n,k}$
is the skew-symmetric Toeplitz matrix of order $n$, whose first line is given by
$$
  (0,\underbrace{1,1,\dots,1}_k,0,0,\dots,0,\underbrace{-1,-1,\dots,-1}_k)\;.
$$
The Hamiltonian vector field $\X_{H_{S_n}}$ takes the symmetric form
\begin{equation}\label{eq:bnk}
  \dot{x}_{i}=x_i\sum_{j=1}^k(x_{i+j}-x_{i-j})\;, \quad \text{for}\,\; i=1,\ldots,n\;,
\end{equation}
where the indices are again taken modulo $n$ and in $S_n$. Bogoyavlenskij gives in \cite{Bog2} a Lax pair $(L,M)$
with spectral parameter for $\B nk$, which we will recall and use in Section \ref{subsec:lax}. The coefficients of
the characteristic polynomial of $L$ are first integrals of $\B nk$, which have been shown to be in involution by
using the theory of $r$-matrices (see \cite{suris_book}). For small values of $n$ (and all $k<n/2$), one shows by
direct computation that this yields enough independent first integrals for Liouville integrability; conjecturally,
this holds for all $n$ (and all $k<n/2$).  For $k=1$ one gets $\B n1=\KM(n)$, which is Liouville integrable (see
Example \ref{exa:KM} above) and for $n=2k+1$, with $k$ arbitrary, one gets the so-called \emph{Bogoyavlenskij-Itoh
  system}, which is also known to be Liouville integrable (see \cite{Bog1,itoh1,itoh2}). In both cases, the first
integrals are obtained from the Lax operator $L$ in (\ref{eq:bogo_lax}). The graph $\Gamma_{n,k}$ is irreducible
and its automorphism group is the cyclic group $C_n$. It follows that the automorphism group of $\B nk$ is the
cyclic group $C_n$.
\end{example}

\begin{example}
As we already pointed out in Section \ref{par:basic}, LV systems can be reduced by setting one or several of the
variables $x_i$ equal to zero. In general, the reduced systems, which are still LV systems, may be integrable or
not. The LV system $\LV(n,0)$ (see Example~\ref{exa:aus}) can be seen as such a reduction of the
Bogoyavlenskij-Itoh system $\B{2n-1}{n-1}$, and it is Liouville integrable with rational first integrals; moreover
it is superintegrable. Reductions of $\LV(n,0)$ are of the form $\LV(m,0)$ with $m<n$ (see Figure \ref{fig:3rd}),
hence are also Liouville and superintegrable. For the more general systems $\LV(n,k)$, which are also integrable
reductions of the Bogoyavlenskij-Itoh system $\B{2n-2k-1}{n-k-1}$, see \cite{PPPP}. The reductions of $\KM(n)$ are
also Liouville integrable with polynomial first integrals since they are products of open (non-periodic) KM systems
as can again be easily seen from their underlying graphs: the underlying graph of the $n$-particle \emph{open} KM
system is the chain with vertices $\set{1,2,\dots,n}$ and an arrow from $i$ to $i+1$ for $i=1,2,\dots,n-1$. The
first integrals of the latter system are obtained by reduction from the first integrals of $\KM(n+1)$, in
particular they are polynomial; they are sufficient in number to ensure Liouville integrability.

\end{example}

\subsection{Integrability}\label{par:integrability}
We show in this subsection that if an LV system $\LV(\Gamma)$ is Liouville integrable, or superintegrable, then
also any other LV system $\LV(\Gamma')$ having the same decloning, i.e., for which $\ul\Gamma=\ul{\Gamma'}$. In
particular, $\LV(\Gamma)$ is Liouville integrable, or superintegrable, if and only if the decloned system
$\LV(\ul{\Gamma})$ is Liouville integrable, or superintegrable. Recall from Section~\ref{par:examples} that if
$\Gamma=(S,A)$ is a skew-symmetric graph of order $n$, then $\LV(\Gamma)$ is $n$-dimensional and for the Liouville
integrability of $\LV(\Gamma)$, we need $n-\frac12\Rk{A}$ independent functions in involution, among which the
Hamiltonian $H_{S}$; for the superintegrability of $\LV(\Gamma)$, we need $n-1$ independent first integrals. Let
$\wght$ be a weight vector for $\Gamma$. Then $\LV(\Gamma^\wght)$ has dimension $\vert\wght\vert$, so that in
particular, for the
\begin{enumerate}
  \item[$\bullet$] Liouville integrability of $\LV(\Gamma^\wght)$, we need $\vert\wght\vert-\frac12\Rk{A}$
    independent functions in involution, among which the Hamiltonian $H_{S^\wght}$;
  \item[$\bullet$] Superintegrability of $\LV(\Gamma^\wght)$, we need $\vert\wght\vert-1$ independent first
    integrals.
\end{enumerate}

\begin{prop}\label{prop:int}%
  Let $\Gamma$ and $\Gamma'$ be two skew-symmetric graphs with the same decloned graph, i.e.
  $\ul{\Gamma}=\ul{\Gamma'}$. If the LV system $\LV(\Gamma)$ is Liouville integrable, or superintegrable,
  then the same is true for  $\LV(\Gamma')$.
\end{prop}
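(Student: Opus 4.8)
The plan is to reduce the statement to a single clean equivalence about cloning, and then to prove that equivalence by exhibiting the cloned system as a Poisson product. Since $\ul\Gamma=\ul{\Gamma'}$, I write $\Upsilon:=\ul\Gamma=\ul{\Gamma'}=(U,B)$, an irreducible graph of order $m:=|\Upsilon|$, and recall that $\Gamma\simeq\Upsilon^{\wght_\Gamma}$ and $\Gamma'\simeq\Upsilon^{\wght_{\Gamma'}}$. Hence it suffices to prove, for an arbitrary irreducible graph $\Upsilon$ and an arbitrary weight vector $\wght$ with total weight $N:=\vert\wght\vert$, that $\LV(\Upsilon^\wght)$ is Liouville integrable (resp.\ superintegrable) if and only if $\LV(\Upsilon)$ is. The Proposition then follows by applying this equivalence twice: from integrability of $\LV(\Gamma)\simeq\LV(\Upsilon^{\wght_\Gamma})$ one descends to $\LV(\Upsilon)$, and from there one ascends to $\LV(\Upsilon^{\wght_{\Gamma'}})\simeq\LV(\Gamma')$.

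The key structural fact I would establish first is that $\LV(\Upsilon^\wght)$ is, in suitable coordinates, a Poisson product of $\LV(\Upsilon)$ with a trivial factor. On the dense open set where $x_{u_1}\neq0$ for all $u\in U$, I introduce the coordinates $y_u:=\sum_{i=1}^{\wght(u)}x_{u_i}$ (one for each $u\in U$, so $m$ of them) together with the ratios $c_{u,i}:=x_{u_i}/x_{u_1}$ for $2\leqs i\leqs\wght(u)$ (there are $N-m$ of them), which together form a coordinate system. A direct computation from \eqref{eq:pb_wght} gives
\begin{equation*}
  \pb{y_u,y_v}=b_{u,v}\,y_uy_v\;,\qquad \pb{y_u,c_{v,i}}=0\;,\qquad \pb{c_{u,i},c_{v,j}}=0\;,
\end{equation*}
so the $y_u$ carry a copy of the Poisson structure $\pi_B$ of $\LV(\Upsilon)$ while the $c_{u,i}$ are the independent Casimirs already recorded after \eqref{eq:pb_wght}; moreover $H_{U^\wght}=\sum_u y_u=\LV(p)^*H_U$. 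Thus, on this open set, $\LV(\Upsilon^\wght)$ is Poisson-isomorphic to the product of $\LV(\Upsilon)$ (in the $y$-variables) with the Casimir space endowed with the zero bracket, the Hamiltonian depending only on the first factor; recall also $\Rk B^\wght=\Rk B$.

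For the implication that integrability of $\LV(\Upsilon)$ forces integrability of $\LV(\Upsilon^\wght)$, I pull back a complete family of integrals along the Poisson submersion $\LV(p)$ of Proposition \ref{prp:decloning} and adjoin the Casimirs. In the Liouville case, if $H_U=F_1,F_2,\dots,F_{m-\frac12\Rk B}$ are independent and in involution on $\LV(\Upsilon)$, then $\LV(p)^*F_1,\dots,\LV(p)^*F_{m-\frac12\Rk B}$ together with the $N-m$ Casimirs $c_{u,i}$ are in involution (the $F_j$ commute because their brackets involve only the $y_u$, and the $c_{u,i}$ commute with everything) and independent (the $F_j$ are independent functions of the $y_u$, while the $c_{u,i}$ are the remaining coordinates); their number is $(m-\frac12\Rk B)+(N-m)=N-\frac12\Rk B$, exactly what the $N$-dimensional system requires, and $H_{U^\wght}$ is among them. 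The superintegrable case is identical, starting from $m-1$ independent first integrals and adjoining the $N-m$ Casimirs to reach $N-1$.

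The delicate direction, and the \emph{main obstacle}, is to descend integrability from $\LV(\Upsilon^\wght)$ to $\LV(\Upsilon)$, since restriction can a priori destroy independence. Here I fix a generic value $c^0=(c^0_{u,i})$ of the Casimirs and restrict to the slice $V:=\set{c_{u,i}=c^0_{u,i}}$. By the product structure, $V$ is a Poisson submanifold on which $\LV(p)$ restricts to a Poisson isomorphism $V\simeq\LV(\Upsilon)$ carrying $H_{U^\wght}|_V$ to $H_U$, and $V$ is flow-invariant because the $c_{u,i}$ are constants of motion; hence the restriction of any (commuting) first integral of $\LV(\Upsilon^\wght)$ is a (commuting) first integral of $\LV(\Upsilon)$, so involution is preserved automatically. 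Independence is the crux: assembling the given integrals into a map $\Psi$, integrability means $\Psi$ is a submersion on a dense open set with generic fibre of dimension $\frac12\Rk B$ (resp.\ $1$). Choosing $c^0$ so that the $m$-dimensional slice $V$ avoids the non-submersion locus, the fibre of $\Psi|_V$ through a generic $p\in V$ is contained in the fibre of $\Psi$ through $p$, hence has dimension at most $\frac12\Rk B$ (resp.\ at most $1$); therefore $d(\Psi|_V)$ has rank at least $m-\frac12\Rk B$ (resp.\ $m-1$) at $p$. This yields the required number of functionally independent restricted integrals, and a linear-algebra selection keeps $H_U$ among them, completing the equivalence and hence the Proposition.
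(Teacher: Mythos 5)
Your proposal follows the paper's own route almost step for step: the same reduction through the common decloned graph (prove, for an irreducible $\Upsilon$ and an arbitrary weight vector $\wght$, that $\LV(\Upsilon)$ is integrable if and only if $\LV(\Upsilon^\wght)$ is, and apply this twice), the same change of coordinates $y_u=\sum_i x_{u_i}$, $c_{u,i}=x_{u_i}/x_{u_1}$ in which the cloned system decouples into $\LV(\Upsilon)$ times a trivial Casimir factor, the same ascent (pull back a complete family along the Poisson submersion $\LV(p)$ and adjoin the $N-m$ Casimirs, with the same count), and the same descent (fix the Casimirs at generic values and restrict to the slice $V$).

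The only place you diverge from the paper is in how independence survives the restriction, and as written that step is defective in two ways. First, you cannot in general choose $c^0$ so that $V$ \emph{avoids} the non-submersion locus: that locus is merely closed with empty interior, and it may meet every slice (for instance if it is a graph over the Casimir directions); what is true, and what the paper also invokes without proof, is that for generic $c^0$ the slice meets the submersion locus in a dense subset. Second, the inference ``the fibre of $\Psi|_V$ through $p$ has dimension at most $\frac12\Rk B$, hence $d(\Psi|_V)_p$ has rank at least $m-\frac12\Rk B$'' is not valid pointwise: small fibres do not force a large-rank differential (already $x\mapsto x^2$ at $0$ violates it). Both defects disappear if you drop the fibre argument and argue linearly: writing $\ell$ for the number of integrals assembled into $\Psi$ (so $\ell=N-\frac12\Rk B$, resp.\ $N-1$), at any point $p\in V$ where $d\Psi_p$ has full rank $\ell$ one has
\begin{equation*}
  \operatorname{rank}\bigl(d\Psi_p|_{T_pV}\bigr)\;\geqs\;\ell-\operatorname{codim}V\;=\;\ell-(N-m)\;,
\end{equation*}
which equals $m-\tfrac12\Rk B$ (resp.\ $m-1$), exactly the required number; this is surely what your closing ``linear-algebra selection'' was meant to accomplish. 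With this repair your descent is, if anything, slightly more careful than the paper's, which instead assumes (without justification, though the same linear algebra would supply one) that the complete family of integrals of $\LV(\Upsilon^\wght)$ can be taken to contain the Casimirs $y_{s_i}$, and then asserts generic independence of the restricted integrals.
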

\begin{proof}
It is clear that is enough to show that if $\Gamma$ is irreducible then $\LV(\Gamma)$ is Liouville integrable
(or superintegrable) if and only if $\LV(\Gamma^\wght)$ is Liouville integrable (resp. superintegrable),
where $\wght$ is any weight vector for $\Gamma$.

Let us denote, as before, the Poisson structures of $\LV(\Gamma)$ and of $\LV(\Gamma^\wght)$ respectively by
$\pi_A$ and $\pi_{A^\wght}$; the standard coordinates on $\bbF[S]$ and on $\bbF\[S^\wght\]$ are denoted
respectively by $x_s$ and $x_{s_i}$, with $s\in S$ and $i\in\set{1,2,\dots,\wght(s)}$; also, $n$ stands for the
order of $\Gamma$. We recall from Section \ref{par:LV_cloning} that $\pi_A$ and $\pi_{A^\wght}$ have the same rank,
equal to the rank of the adjacency matrix $A$ of $\Gamma$; let us denote this even integer by $2r$.  Recall also
from that section that we have a first set of $\vert\wght\vert-n$ independent Casimir functions for
$\pi_{A^\wght}$, which are of the form $x_{s_i}/x_{s_1}$, for $s\in S$ and $i=2,3,\dots,\wght(s)$.  We consider new
coordinates~$y_{s_i}$, with $s\in S$ and $i\in\set{1,2,\dots,\wght(s)}$, on an open dense subset of
$\bbF\[S^\wght\]$; they are defined by:
\begin{align*}\label{y-coord}
y_{s_1}&:=\LV(p)^*x_s=\sum_{i=1}^{\varpi{(s)}} x_{s_i}\;,\quad\text{for } s\in S\;,\\
y_{s_j}&:=x_{s_j}/x_{s_1}\;,\quad\text{for } s \in S, \text{ and }  j\in\{2,3,\dots,\wght(s)\}\;,
\end{align*}
where $\LV(p):\bbF\[S^\wght\]\rightarrow \bbF[S]$ is the decloning map, defined in Proposition~\ref{prp:decloning}.
It is clear from \eqref{eq:LV_wght} that, in the new coordinates, $\LV(\Gamma^\wght)$ decouples into a subsystem,
isomorphic to $\LV(\Gamma)$, and a trivial system, i.e., a system with no dynamics:  explicitly, the decoupled
system reads
\begin{equation}\label{eq:dec_syst}
\dot y_{s_1}=y_{s_1}\sum_{t\in S}a_{s,t}y_{t_1},\quad \dot y_{s_j}=0,\quad
s \in S, \;\;  j\in\set{2,3,\dots,\wght(s)}\;.
\end{equation}

Suppose that $\LV(\Gamma)$ is Liouville integrable, respectively superintegrable, with independent first integrals
$F_1, F_2,\dots, F_\ell$. On $\bbF\[S^\wght\]$ we consider the functions $\LV(p)^*F_1,\dots, \LV(p)^*F_\ell$, as
well as the Casimir functions $y_{s_i}$ for $s\in S$ and $i\in\set{2,3,\dots,\wght(s)}$. The former functions are
first integrals for $\LV(\Gamma^\wght)$ because $\LV(p)$ is a Poisson map, with $\LV(p)^*H_S=H_{S^\wght}$; for the
same reason, they are in involution when the functions $F_1,\dots,F_\ell$ are in involution. The functions
$\LV(p)^*F_1,\dots, \LV(p)^*F_\ell$ are independent because $\LV(p)$ is a submersion; since they only depend on the
variables $y_{s_1}$, with $s\in S$, they are also independent of the Casimirs $y_{s_i}$, with $i>1$. It follows
that we have enough independent functions in involution for the Liouville integrability, respectively enough
independent functions for the superintegrability of $\LV(\Gamma^\wght)$.

In order to prove the inverse implication, suppose that $\LV(\Gamma^\wght)$ is Liouville integrable, respectively
superintegrable, with first integrals $G_1, G_2,\dots, G_\ell$ complemented with the Casimir functions
$\set{y_{s_i}:s\in{S},i\in\set{2,3,\dots, \wght(s)}}$. If we fix these Casimir functions to any specific values
$c_{s_i}\in\bbF$, then the restricted functions $G_1',\dots,G_r'$ depend only on $y_{s_1}$, with $s\in S$, hence
are pullbacks under the Poisson submersion $\LV(p)$ of some functions $F_1,\dots,F_\ell$, defined on an open subset
of $\bbF[S]$. Since these restricted functions are also first integrals, it follows as before that the functions
$F_1,\dots,F_\ell$ are first integrals of $\LV(\Gamma)$, and that they are in involution when the functions
$G_1,\dots,G_\ell$ are in involution. Moreover, for generic values of the constants $c_{s_i}$, they are
independent. It follows that $\LV(\Gamma)$ is Liouville integrable, respectively superintegrable.
\end{proof}

Since the decloning map $\LV(p)$ is a Poisson map, the construction of the integrals in the above proof shows that
if $\Gamma$ is a skew-symmetric graph for which $\LV(\Gamma)$ is Liouville or superintegrable, then
$\LV(p):\LV(\Gamma)\to\LV(\ul{\Gamma})$ is a morphism of integrable systems.

\subsection{Lax equations}\label{subsec:lax}
In this section we show how a Lax equation for an LV system $\LV(\Gamma)$ leads to a Lax equation for any of the
cloned systems $\LV(\Gamma^\wght)$.  In the case of $\LV(\Gamma)=\B nk$, we will also provide an alternative Lax
equation.

Suppose that $\dot L=[L,M]$ is a Lax equation for the LV system $\LV(\Gamma)$, where $\Gamma=(S,A)$ is any
skew-symmetric graph of order $n$. In order to indicate explicitly how $L$ and $M$ depend on the $x$-variables, we
write the matrices $L$ and $M$ also as $L(x)$ and $M(x)$. We construct a Lax equation for $\LV(\Gamma^\wght)$,
where $\wght$ is any weight vector for $\Gamma$. To do this, we use the decloning map
$\chi:=\LV(p):\bbF\[S^\wght\]\rightarrow\bbF\[S\]$, defined in Proposition~\ref{prp:decloning}. We also let
$y_s:=\chi^*x_s=\sum_{i=1}^{\wght(s)}x_{s_i}$ for all $s\in S$. Then the vector field \eqref{eq:LV_wght} of
$\LV(\Gamma^\wght)$ can be written in the compact form
\begin{equation}\label{eq:LV_wght_lax}
  \dot{x}_{s_i} = x_{s_i}\sum_{t\in S}a_{s,t}y_t\;, \qquad \text{for}\; s\in S
  \text{ and } i=1,\dots,\wght(s)\;.
\end{equation}
Summing up these equations for $i=1,\dots,\wght(s)$, we get
\begin{equation*}
  \dot y_s=\sum_{i=1}^{\wght(s)}\dot x_{s_i}=\sum_{i=1}^{\wght(s)}x_{s_i}\sum_{t\in S}a_{s,t}y_t=
  y_{s}\sum_{t\in S}a_{s,t}y_t\;,
\end{equation*}
which has exactly the same form as \eqref{eq:LV_System}, for which we have the Lax equation
$\dot{L}(x)=[L(x),M(x)]$. It follows that $\dot L(y)=[L(y),M(y)]$ is a Lax equation for
$\LV(\Gamma^\wght)$. Suppose that the former Lax equation is regular, which means that the dynamics is completely
determined by it. Then the latter Lax equation is also regular, because the Poisson manifold
$(\bbF\[S^\wght\],\pi_{A^\wght})$ has Casimirs functions $x_{s_i}/x_{s_1}$, for $1\leqslant s\leqslant n$ and
$i=2,3,\dots,\wght(s)$. Also, when the former Lax equation depends on a parameter, then so does the latter Lax
equation. Finally, if $F$ is a first integral of $\LV(\Gamma)$ which appears as a coefficient of the characteristic
polynomial of $L(x)$, then $\chi^*F$ is a first integral of $\LV(\Gamma^\wght)$ which appears as a coefficient of
the characteristic polynomial of~$L(y)$; in particular, if the Lax operator $L(x)$ provides enough first integrals
to prove the Liouville integrability of $\LV(\Gamma)$ then so does $L(y)$ for~$\LV(\Gamma^\wght)$.

We now turn to the particular case of $\B nk$. Let $k$ and $n$ be fixed, with $k<n/2$ and denote as in Example
\ref{exa:bogo} by $\Gamma_{n,k}=(S_n,A_{n,k})$ the graph underlying the LV system $\B nk$. Bogoyavlenskij
\cite{Bog2} constructed for $\B nk$ the following Lax equation with spectral parameter $\lambda$:
\begin{equation}\label{eq:bogo_lax}
  \dot{L}=\lb{L,M}\;,\ \text{where}\ \left\{
  \begin{array}{lll}
    L:=L_0+\l \Sh\;, \ &M:=M_0-\l \Sh^{k+1}\;,\\
    L_0:=X\,\Sh^{-k}\;,&M_0:=\sum_{t=k+1}^{n-1}\Sh^{t}\,X\,\Sh^{-t}\;,
  \end{array}\right.
\end{equation}
and where $\Sh$ is the circular shift matrix defined by $\Sh_{i,j}:=\delta_{i+1,j}$, and~$X$ is the diagonal matrix
$\hbox{diag}(x_1,x_2,\dots,x_n)$; notice that $M_0$ is also a diagonal matrix.  We fix a weight vector $\wght$ for
$\Gamma_{n,k}$.  It follows from what precedes that $\dot L(y)=\lb{L(y),M(y)}$ is a Lax equation for the cloned
system $\LV(\Gamma_{n,k}^\wght)$, where we recall that $y_i=\chi^*x_i$ for $i=1,\dots,n$; said differently, the
latter Lax equation is obtained by replacing in \eqref{eq:bogo_lax} everywhere the diagonal matrix $X$ by the
diagonal matrix $\chi^*X$.

In what follows we provide a different Lax equation $\dot\cL=\lb{\cL,\cM}$ for the cloned system
$\LV(\Gamma_{n,k}^\wght)$. The new Lax equation, which is also regular and also depends on a spectral parameter,
has the advantage that all phase variables $x_{s_i}$ of the cloned system appear as entries of the new Lax
operator~$\cL$. The matrices $\cL$ and $\cM$ will be block matrices with $N\times N$ square blocks of order $n$,
where $N$ is the maximum number of clones of a vertex, $N:=\max\set{\varpi(s)\mid s\in S_n}$; for $1\leqslant
i,j\leqslant N$ the~$(i,j)$-block of such a block matrix $\cN$ is denoted by $\cN_{(i,j)}$. We let $x_{s_i}=0$ for
$\varpi(s)<i\leqslant N$ and $s\in S_n$, so that every vertex of $\Gamma$ has now the same number of clones. With
this notation, the cloned Bogoyavlenskij system $\LV(\Gamma_{n,k}^\wght)$ can be written in the simpler form
\begin{equation}\label{eq:Bogo_wght}
  \dot{x}_{s_i} = x_{s_i}\sum_{t=1}^k(y_{s+t}-y_{s-t})\;, \quad \text{for}\; 1\leqslant s\leqslant n
  \text{ and } 1\leqslant i\leqslant N\;,
\end{equation}
where the indices of the variables $y_s$ are again taken modulo $n$ and in $S_n$.
Let us denote, for $1\leqslant i\leqslant N$, by $X^{(i)}$ the diagonal matrix
$\hbox{diag}(x_{1_i},x_{2_i},\dots,x_{n_i})$, and in analogy with (\ref{eq:bogo_lax}), let
\begin{equation}\label{eq:bogo_lax_2}
  \begin{array}{lll}
    L^{(i)}:=L_0^{(i)}+\l\Sh\;,\ &M^{(i)}:=M_0^{(i)}-\l\Sh^{k+1}\;,\\
    L_0^{(i)}:=X^{(i)}\,\Sh^{-k}\;,&M_0^{(i)}:=\sum_{t=k+1}^{n-1}\Sh^{t}\,X^{(i)}\,\Sh^{-t}\;.
  \end{array}
\end{equation}
We first show that, in terms of these matrices, \eqref{eq:Bogo_wght} can be written as the following equation of
Lax type:
\begin{equation}\label{eq:Lax_to_prove}
  \dot L_0^{(i)}=\lb{L_0^{(i)},\chi^*M_0}=\sum_{r=1}^N\lb{L_0^{(i)},M_0^{(r)}}\;.
\end{equation}%
To do this, we first rewrite the first equality of \eqref{eq:Lax_to_prove}, using \eqref{eq:bogo_lax} and
\eqref{eq:bogo_lax_2}; we also use that $X^{(i)}$ and $\Delta^tY\Delta^{-t}$ are diagonal matrices, hence commute,
where $Y:=\chi^*X$:
\begin{align*}
  \dot X^{(i)}&=\sum_{t=k+1}^{n-1}X^{(i)}\Delta^{t-k}Y\Delta^{k-t}-\sum_{t=k+1}^{n-1}\Delta^{t}Y\Delta^{-t}X^{(i)}\\
  &=X^{(i)}\(\sum_{t=k+1}^{n-1}\Delta^{t-k}Y\Delta^{k-t}-\sum_{t=k+1}^{n-1}\Delta^{t}Y\Delta^{-t}\)\\
  &=X^{(i)}\sum_{t=1}^{k}\(\Delta^{t}Y\Delta^{-t}-\Delta^{n-t}Y\Delta^{t-n}\)\;.
\end{align*}%
The equivalence with \eqref{eq:Bogo_wght} then follows by taking in both sides of the latter equality, which are
diagonal matrices, the $(s,s)$-th entry.

%


\begin{prop}\label{prop:Large_Lax_from_small_Lax}
  Let $\cL$ and $\cM$ be the square matrices of order $nN$, whose blocks are defined for $1\leqslant i,j\leqslant
  N$ by $\cL_{(i,j)}:=L^{(j)}$ and
  $$
  \cM_{(i,j)}:=\delta_{i,j}\sum_{r=1}^N\left(M_0^{(r)}+\Sh^k\,X^{(r)}\,
               \Sh^{-k}\right)-\Sh^k\,X^{(j)}\,\Sh^{-k}-\l\Sh^{k+1}\;.
  $$
  Then $\dot{\cL}=\lb{\cL,\cM}$ is a Lax equation with  spectral parameter for the cloned Bogoyavlenskij system
  $\LV(\Gamma^\varpi_{n,k})$.
\end{prop}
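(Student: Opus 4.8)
The plan is to verify the matrix identity $\dot\cL=[\cL,\cM]$ one block at a time. Since $\cL_{(i,j)}=L^{(j)}=L_0^{(j)}+\lambda\Delta$ and $\Delta$ is time-independent, the $(i,j)$-block of $\dot\cL$ is simply $\dot L_0^{(j)}$, which by the already-established equation \eqref{eq:Lax_to_prove} equals $[L_0^{(j)},\chi^*M_0]$. So everything reduces to showing, for all $i,j$, that $[\cL,\cM]_{(i,j)}=[L_0^{(j)},\chi^*M_0]$. To organize this I would expand in the spectral parameter: write $\cL=\cL_0+\lambda\cD$ and $\cM=\cM_0-\lambda\cE$, where $\cD$ and $\cE$ are the block matrices each of whose $N^2$ blocks equals $\Delta$, resp.\ $\Delta^{k+1}$, while $(\cL_0)_{(i,j)}=L_0^{(j)}$ and $(\cM_0)_{(i,j)}=\delta_{i,j}P-\Delta^kX^{(j)}\Delta^{-k}$ with $P:=\sum_{r=1}^N(M_0^{(r)}+\Delta^kX^{(r)}\Delta^{-k})$. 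Then $[\cL,\cM]=[\cL_0,\cM_0]+\lambda([\cD,\cM_0]-[\cL_0,\cE])-\lambda^2[\cD,\cE]$, and I would check the three coefficients separately, using throughout that $Y:=\chi^*X=\sum_{r=1}^NX^{(r)}$ and $\sum_rM_0^{(r)}=\chi^*M_0$, so that $P=\sum_{t=k}^{n-1}\Delta^tY\Delta^{-t}$.

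The $\lambda^2$-coefficient vanishes at once: the $(i,j)$-block of $[\cD,\cE]$ is $\sum_{m=1}^N(\Delta\,\Delta^{k+1}-\Delta^{k+1}\Delta)=N[\Delta,\Delta^{k+1}]=0$. For the $\lambda^1$-coefficient I would compute the two block commutators and observe that the terms carrying the individual $X^{(j)}$ cancel between them, leaving the $(i,j)$-block
\begin{equation*}
  [\cD,\cM_0]_{(i,j)}-[\cL_0,\cE]_{(i,j)}=[\Delta,P]+\Delta^kY\Delta^{1-k}-Y\Delta.
\end{equation*}
This is where I expect the main work to lie. The commutator $[\Delta,P]=\sum_{t=k}^{n-1}(\Delta^{t+1}Y\Delta^{-t}-\Delta^tY\Delta^{1-t})$ telescopes to $\Delta^nY\Delta^{1-n}-\Delta^kY\Delta^{1-k}$, and the crucial point is that $\Delta$ is the circular shift of order $n$, so $\Delta^n=I$; hence $\Delta^nY\Delta^{1-n}=Y\Delta$ and the three terms cancel. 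This is the step that genuinely uses the periodic (circulant) structure of the Bogoyavlenskij graph, and it is what forces $\cM$ to be built exactly as stated rather than from a naive guess.

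For the $\lambda^0$-coefficient I would compute, using $\sum_mL_0^{(m)}=Y\Delta^{-k}$ and $\sum_m\Delta^kX^{(m)}\Delta^{-k}=\Delta^kY\Delta^{-k}$,
\begin{equation*}
  [\cL_0,\cM_0]_{(i,j)}=[L_0^{(j)},P]-YX^{(j)}\Delta^{-k}+\Delta^kY\Delta^{-k}X^{(j)}\Delta^{-k}.
\end{equation*}
Splitting $P=\chi^*M_0+\Delta^kY\Delta^{-k}$ gives $[L_0^{(j)},P]=[L_0^{(j)},\chi^*M_0]+[L_0^{(j)},\Delta^kY\Delta^{-k}]$, and expanding the last commutator with $L_0^{(j)}=X^{(j)}\Delta^{-k}$ leaves exactly $(X^{(j)}Y-YX^{(j)})\Delta^{-k}$ once the remaining terms cancel. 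Since $X^{(j)}$ and $Y$ are diagonal they commute, so this vanishes and $[\cL_0,\cM_0]_{(i,j)}=[L_0^{(j)},\chi^*M_0]=\dot L_0^{(j)}$, completing the verification. Finally I would note that the equation manifestly carries the spectral parameter $\lambda$ and is regular for the same reason as the small Lax equation: the cloned Poisson manifold $(\bbF[S^\wght],\pi_{A^\wght})$ has the Casimir functions $x_{s_i}/x_{s_1}$, which together with the first integrals read off from $\cL$ determine the dynamics.
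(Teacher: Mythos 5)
Your proof is correct and takes essentially the same route as the paper's: both reduce $\dot{\cL}_{(i,j)}$ to \eqref{eq:Lax_to_prove}, expand the block commutator in powers of $\lambda$, and kill the $\lambda$- and $\lambda^2$-coefficients using $\Sh^n=\Id_n$ together with the commutativity of diagonal matrices. The only difference is organizational: you aggregate into $Y=\chi^*X$ and $P=\sum_{t=k}^{n-1}\Sh^{t}Y\Sh^{-t}$ and telescope $[\Sh,P]$, whereas the paper keeps the sum split per index $r$ (its terms $A_0(r)$, $A_1(r)$, $A_2(r)$), but the underlying cancellations are identical.
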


\begin{proof}
For $1\leqslant i,j\leqslant N$ we show that the $(i,j)$-blocks of both sides of $\dot{\cL}=\lb{\cL,\cM}$ agree. On
the one hand, we get by using \eqref{eq:Lax_to_prove},
\begin{equation}\label{eq:Lax_left}
  \dot\cL_{(i,j)}=\dot L^{(j)}={\dot L_0}^{(j)}=\sum_{r=1}^N\lb{L_0^{(j)},M_0^{(r)}}\;.
\end{equation}%
The $(i,j)$-block of the commutator $\lb{\cL,\cM}=\cL\cM-\cM\cL$ is by block multiplication given by
\begin{align*}
\sum_{r=1}^N&\left(\Sh^k\,X^{(r)}\,\Sh^{-k}+\l \Sh^{k+1}\right)\,L^{(j)}-
\sum_{r=1}^NL^{(r)}\,\left(\Sh^k\,X^{(j)}\,\Sh^{-k}+\l \Sh^{k+1}\right)+\\
\sum_{r=1}^N&\lb{L^{(j)},M_0^{(r)}+\Sh^k\,X^{(r)}\,\Sh^{-k}}
=\sum_{r=1}^N\left(A_0(r)+\l\,A_1(r)+\l^2\,A_2(r)\right)\,,
\end{align*}
with
\begin{align*}
A_0(r)&=L_0^{(j)}\,\Sh^{k}\,X^{(r)}\,\Sh^{-k}-L_0^{(r)}\,\Sh^{k}\,X^{(j)}\,\Sh^{-k}+
L_0^{(j)}\,M_0^{(r)}-M_0^{(r)}\,L_0^{(j)}\,,\\
A_1(r)&=\Sh^{k+1}\,L_0^{(j)}+\Sh^{k+1}\,\left(X^{(r)}-X^{(j)}\right)\,\Sh^{-k}-L_0^{(r)}\,\Sh^{k+1}+
\lb{\Sh,M_0^{(r)}}\,,\\
A_2(r)&=\Sh^{k+2}-\Sh^{k+2}=0\,.
\end{align*}
In order to compute $A_1(r)$, first notice that
\begin{align*}
  \lb{\Sh,M_0^{(r)}}&=
    \sum_{t=k+1}^{n-1}\left(\Sh^{t+1}\,X^{(r)}\,\Sh^{-t}-\Sh^{t}\,X^{(r)}\,\Sh^{1-t}\right)\\
   &=X^{(r)}\,\Sh-\Sh^{k+1}\,X^{(r)}\,\Sh^{-k}=\lb{L_0^{(r)},\Sh^{k+1}}\;,
\end{align*}
where we have used the definitions (\ref{eq:bogo_lax_2}) and $\Sh^n=\Id_n$. It follows, using again the definition
of $L_0^{(j)}$, that
\begin{align*}
  A_1(r)&=\Sh^{k+1}\,X^{(r)}\,\Sh^{-k}-L_0^{(r)}\,\Sh^{k+1}+\lb{\Sh,M_0^{(r)}}\\
        &=\Sh^{k+1}\,X^{(r)}\,\Sh^{-k}-\Sh^{k+1}\,L_0^{(r)}=0\,.
\end{align*}
Using again the definition of $L_0^{(j)}$, the first two terms of $A_0(r)$ cancel, so that
$A_0(r)=\lb{L_0^{(j)},M_0^{(r)}}$. Summing up, we get
$$
  \lb{\cL,\cM}_{(i,j)}=\sum_{r=1}^N A_0(r)=\sum_{r=1}^N\lb{L_0^{(j)},M_0^{(r)}}\,,
$$
so that, using \eqref{eq:Lax_left}, $\dot\cL_{(i,j)}=\lb{\cL,\cM}_{(i,j)}$, as was to be shown.
\end{proof}


\begin{thebibliography}{10}

\bibitem{cats}
J.~Ad\'{a}mek, H.~Herrlich, and G.~E. Strecker.
\newblock {\em Abstract and concrete categories}.
\newblock Pure and Applied Mathematics (New York). John Wiley \& Sons, Inc.,
  New York, 1990.
\newblock The joy of cats, A Wiley-Interscience Publication.

\bibitem{adlermoerbekevanhaecke2004}
M.~Adler, P.~van Moerbeke, and P.~Vanhaecke.
\newblock {\em Algebraic integrability, {P}ainlev\'e geometry and {L}ie
  algebras}, volume~47 of {\em Ergebnisse der Mathematik und ihrer
  Grenzgebiete. 3. Folge. A Series of Modern Surveys in Mathematics [Results in
  Mathematics and Related Areas. 3rd Series. A Series of Modern Surveys in
  Mathematics]}.
\newblock Springer-Verlag, Berlin, 2004.

\bibitem{Bog1}
O.~I. Bogoyavlenskij.
\newblock Some constructions of integrable dynamical systems.
\newblock {\em Izv. Akad. Nauk SSSR Ser. Mat.}, 51(4):737--766, 910, 1987.

\bibitem{Bog2}
O.~I. Bogoyavlenskij.
\newblock Integrable {L}otka-{V}olterra systems.
\newblock {\em Regul. Chaotic Dyn.}, 13(6):543--556, 2008.

\bibitem{Damianou_graphs}
P.~A. Damianou.
\newblock Lotka-{V}olterra systems associated with graphs.
\newblock In {\em Group analysis of differential equations and integrable
  systems}, pages 30--44. Department of Mathematics and Statistics, University
  of Cyprus, Nicosia, 2013.

\bibitem{PPPP}
P.~A. Damianou, C.~A. Evripidou, P.~Kassotakis, and P.~Vanhaecke.
\newblock Integrable reductions of the {B}ogoyavlenskij-{I}toh
  {L}otka-{V}olterra systems.
\newblock {\em J. Math. Phys.}, 58(3):032704, 17, 2017.

\bibitem{RCD}
C.~A. Evripidou, P.~Kassotakis, and P.~Vanhaecke.
\newblock Integrable deformations of the {B}ogoyavlenskij-{I}toh
  {L}otka-{V}olterra systems.
\newblock {\em Regul. Chaotic Dyn.}, 22(6):721--739, 2017.

\bibitem{KM_Prym}
R.~L. Fernandes and P.~Vanhaecke.
\newblock Hyperelliptic {P}rym varieties and integrable systems.
\newblock {\em Comm. Math. Phys.}, 221(1):169--196, 2001.

\bibitem{groups_graphs}
R.~Frucht.
\newblock Herstellung von {G}raphen mit vorgegebener abstrakter {G}ruppe.
\newblock {\em Compositio Math.}, 6:239--250, 1939.

\bibitem{itoh1}
Y.~Itoh.
\newblock Integrals of a {L}otka-{V}olterra system of odd number of variables.
\newblock {\em Progr. Theoret. Phys.}, 78(3):507--510, 1987.

\bibitem{itoh2}
Y.~Itoh.
\newblock A combinatorial method for the vanishing of the {P}oisson brackets of
  an integrable {L}otka-{V}olterra system.
\newblock {\em J. Phys. A}, 42(2):025201, 11, 2009.

\bibitem{KQV}
T.~E. Kouloukas, G.~R.~W. Quispel, and P.~Vanhaecke.
\newblock Liouville integrability and superintegrability of a generalized
  {L}otka-{V}olterra system and its {K}ahan discretization.
\newblock {\em J. Phys. A}, 49(22):225201, 13, 2016.

\bibitem{PLV}
C.~Laurent-Gengoux, A.~Pichereau, and P.~Vanhaecke.
\newblock {\em Poisson structures}, volume 347 of {\em Grundlehren der
  Mathematischen Wissenschaften [Fundamental Principles of Mathematical
  Sciences]}.
\newblock Springer, Heidelberg, 2013.

\bibitem{Lotka}
A.~J. Lotka.
\newblock {\em Analytical theory of biological populations}.
\newblock The Plenum Series on Demographic Methods and Population Analysis.
  Plenum Press, New York, 1998.
\newblock Translated from the 1939 French edition and with an introduction by
  David P. Smith and H{\'e}l{\`e}ne Rossert.

\bibitem{suris_book}
Y.~B. Suris.
\newblock {\em The problem of integrable discretization: {H}amiltonian
  approach}, volume 219 of {\em Progress in Mathematics}.
\newblock Birkh\"{a}user Verlag, Basel, 2003.

\bibitem{KKQTV}
P.~H. van~der Kamp, T.~E. Kouloukas, G.~R.~W. Quispel, D.~T. Tran, and
  P.~Vanhaecke.
\newblock Integrable and superintegrable systems associated with multi-sums of
  products.
\newblock {\em Proc. R. Soc. Lond. Ser. A Math. Phys. Eng. Sci.},
  470(2172):20140481, 23, 2014.

\bibitem{Volterra}
V.~Volterra.
\newblock {\em Le\c cons sur la th\'eorie math\'ematique de la lutte pour la
  vie}.
\newblock Les Grands Classiques Gauthier-Villars. [Gauthier-Villars Great
  Classics]. \'Editions Jacques Gabay, Sceaux, 1990.
\newblock Reprint of the 1931 original.

\end{thebibliography}
\end{document}